\documentclass[twocolumn]{autart_arXiv}   
\usepackage{graphicx}          
\usepackage[utf8]{inputenc}

\let\theoremstyle\relax
\usepackage{amsthm}
\usepackage[noadjust]{cite}

\makeatletter\renewcommand{\@listI}{\leftmargin\leftmargini\labelwidth\labelwidthi\labelsep\normal@labelsep\topsep7pt\partopsep0pt\parsep0pt\itemsep0pt\parskip0pt}\makeatother

\usepackage{enumitem}
\setdescription{leftmargin=0pt}

\usepackage{comment}

\usepackage{amsmath,amssymb, color,booktabs}
\allowdisplaybreaks

\usepackage{tikz}
\usetikzlibrary{shapes,arrows,calc,positioning}
\definecolor{MyGreen}{RGB}{50,140,80}
\usetikzlibrary{arrows.meta}
\usepackage{pgfplots}
\definecolor{mycolor1}{rgb}{0.00000,0.44700,0.74100}
\definecolor{mycolor2}{rgb}{0.85000,0.32500,0.09800}

\makeatletter
\def\thm@space@setup{%
  \thm@preskip=\parskip
  \thm@postskip=0pt
}
\makeatother

\begin{document}

\newcommand{\Rl}[2]{\ensuremath{\mathbb{R}^{#1}_{#2}}}   
\newcommand{\R}{\ensuremath{\mathbb{R}}}
\newcommand{\Rlp}{\ensuremath{\mathbb{R}_{>0}}}
\newcommand{\Rlpc}{\ensuremath{\overline{\mathbb{R}}_{>0}}}
\newcommand{\Rlo}{\ensuremath{\mathbb{R}_{\geq 0}}}
\newcommand{\Rln}{\ensuremath{\mathbb{R}_{< 0}}}
\newcommand{\Zo}{\ensuremath{\mathbb{Z}_{\geq 0}}}
\newcommand{\Zp}{\ensuremath{\mathbb{Z}_{> 0}}}
\newcommand{\C}{\ensuremath{\mathbb{C}}}
\newcommand{\N}{\ensuremath{\mathbb{N}}}      
\newcommand{\No}{\ensuremath{\mathbb{N}_{\geq 0}}}
\newcommand{\Np}{\ensuremath{\mathbb{N}_{> 0}}}
\newcommand{\Z}{\ensuremath{\mathbb{Z}}}

\definecolor{bleucit}{rgb}{0.2,0.4,0.6} 
\newcommand{\bleucit}{\textcolor{bleucit}}
\newcommand\cp[1]{{`\emph{#1}'}}
\newcommand\blue[1]{\emph{{\color{blue}#1}}}

\newcommand{\cmark}{\ding{51}}%
\newcommand{\xmark}{\ding{55}}%

\newcommand{\id}{\ensuremath{\text{id}}}

\newcommand{\dst}{\displaystyle}
\newcommand{\Linf}[1]{\ensuremath{\mathcal{L}^{#1}}}

\newcommand{\eg}{{\it e.g.}}

\newcommand{\Nesic}{Ne{\v{s}}i{\'c} }

\definecolor{blue_cv}{rgb}{0.09,0.35,0.78}

\newcommand{\KL}{\ensuremath{\mathcal{KL}}}
\newcommand{\K}{\ensuremath{\mathcal{K}}}
\newcommand{\Kinf}{\ensuremath{\mathcal{K}_{\infty}}}
\newcommand{\KK}{\ensuremath{\mathcal{KK}}}
\newcommand{\KN}{\ensuremath{\mathcal{KN}}}
\newcommand{\KKL}{\ensuremath{\mathcal{KKL}}}
\newcommand{\KLL}{\ensuremath{\mathcal{KLL}}}
\newcommand{\D}{\ensuremath{\mathcal{D}}}
\newcommand{\PD}{\ensuremath{\mathcal{PD}}}

\newcommand{\Cs}{\ensuremath{C_{\text{steady}}}}
\newcommand{\Ct}{\ensuremath{C_{\text{transient}}}}
\newcommand{\Ds}{\ensuremath{D_{\text{steady}}}}
\newcommand{\Dt}{\ensuremath{D_{\text{transient}}}}
\newcommand{\UtGpAS}{U$_{\text{t}}$GpAS}
\newcommand{\UtGAS}{U$_{\text{t}}$GAS}
\newcommand{\UjGpAS}{U$_{\text{j}}$GpAS}
\newcommand{\UjGAS}{U$_{\text{j}}$GAS}

\newcommand{\argmin}{\ensuremath{\text{argmin}\,}}
\newcommand{\interior}{\ensuremath{\text{int}\,}}
\newcommand{\dom}{\ensuremath{\text{dom}\,}}
\newcommand{\Span}{\ensuremath{\text{Span}}}
\newcommand{\avg}{\ensuremath{\text{avg}}}
\newcommand{\co}{\ensuremath{\text{co}\,}}
\newcommand{\coc}{\ensuremath{\overline{\text{dom}}\,}}
\newcommand{\ext}{\ensuremath{\text{ext}}}
\newcommand{\rge}{\ensuremath{\text{rge}\,}}
\newcommand{\esup}{\ensuremath{\text{ess.sup}\,}}

\newcommand{\sign}[1]{\ensuremath{\text{sign}{(#1)}}}
\newcommand{\sat}{\ensuremath{\text{sat}}}

\newcommand{\sinc}{\ensuremath{\text{sinc}}}
\newcommand{\nom}{\ensuremath{\text{nom}}}

\newcommand{\Tmati}{\ensuremath{T_{MATI}\,\,}}
\newcommand{\Tmasp}{\ensuremath{\mathrm{T_{MASP}}\,}}
\newcommand{\lc}{\ensuremath{\llbracket}}
\newcommand{\rc}{\ensuremath{\rrbracket}}

\newcommand{\norm}[1]{\ensuremath{\left\|{#1}\right\|}}
\newcommand{\ip}[2]{\ensuremath{\left\langle #1, #2\right\rangle}}
\newcommand{\cb}[1]{\ensuremath{\overline{\mathbb{B}}_{\mbox{\scriptsize $#1$}}}}                              
\newcommand{\ob}[2]{\ensuremath{\mathbb{B}_{\mbox{\scriptsize $#1$}}\ensuremath{\left( #2\right)}}}   
\newcommand{\df}{\ensuremath{\stackrel{\mbox{\tiny $\mathrm{def}$}}{=}\:}}                                             
\newcommand{\myint}[4]{\ensuremath{\int_{#1}^{#2}#3\;\mathrm{d}#4}}
\newcommand{\Mm}{\ensuremath{\:\stackrel{\rightarrow}{\scriptstyle{\rightarrow}}\:}}
\newcommand{\bm}[1]{\ensuremath{\mathbf{#1}}}
\newcommand{\hs}[1]{\hspace*{#1 em}}
\newcommand{\qa}{\ensuremath{\mathcal{Q}_{A}}}%
\newcommand{\mc}[1]{\ensuremath{\mathcal{#1}}}
\newcommand{\di}{\ensuremath{\mathcal{D}_{i}}}

\newcommand{\HS}{\ensuremath{\mathcal{H}}}
\newcommand{\HSc}{\ensuremath{\mathcal{H}_c}}


%


\newtheorem{exple}{Example}
\newtheorem{ass}{\textnormal{\textbf{Assumption}}}
\newtheorem{propos}{Proposition}
\newtheorem{lemma}{Lemma}
\newtheorem{ex}{Example}
\newtheorem{theorem}{\textnormal{\textbf{Theorem}}}
\newtheorem{corol}{\textnormal{\textbf{Corollary}}}
\newtheorem{pb}{Problem}
\newtheorem{remark}{\textnormal{\textbf{Remark}}}
\newtheorem{sass}{Standing Assumption}
\newtheorem{obj}{\textnormal{\textbf{Objective}}}


%
%
\newenvironment{rems}{\textit{Remarks. }}{\mbox{}\\[1ex]}

\newenvironment{eqn}[1][]{%
    \ifx&#1&\else\label{#1}\fi%
    \equation
    \setlength{\arraycolsep}{1.5pt} 
    \begin{array}{rlllll}%
}{%
    \end{array}\endequation%
}

\newcommand{\f}{\ensuremath{\text{f}}}
\newcommand{\zf}{\ensuremath{\text{Zf}}}
\newcommand{\p}{\ensuremath{\text{p}}}
\newcommand{\cont}{\ensuremath{\text{c}}}
\newcommand{\continuous}{\ensuremath{\text{cont}}}
\newcommand{\n}{\ensuremath{\text{n}}}
\newcommand{\aug}{\ensuremath{\text{aug}}}

\newcommand{\rank}{\ensuremath{\text{rank}}}

\begin{frontmatter}

\title{Robust stabilization with spiking neuronal communication}

\thanks{Work  partly funded  by the CNRS International Research Project NEUROCON and the ANR under grant OLYMPIA ANR-23-CE48-0006.}

\author[Eindhoven]{Elena Petri}\ead{e.petri@tue.nl},    
\author[Nancy]{Romain Postoyan}\ead{romain.postoyan@univ-lorraine.fr},  
\author[Eindhoven]{Erik Steur}\ead{e.steur@tue.nl},               
\author[Eindhoven]{W.P.M.H. (Maurice) Heemels}\ead{m.heemels@tue.nl}

\address[Eindhoven]{Department of Mechanical Engineering, Eindhoven University of Technology, The Netherlands}  
\address[Nancy]{Universit\'e de Lorraine, CNRS, CRAN, 54000 Nancy, France}             

\begin{keyword}                           
Neuromorphic engineering; Hybrid dynamical systems; Robust stabilization; Input-to-state stability              
\end{keyword}                             

\begin{abstract} 
Neuromorphic engineering develops hardware and software systems inspired by biological neurons, with the goal of achieving energy-efficient, low-latency, robust, and adaptive computation, communication and control. Its potential impact on systems and control is significant, as it may enable novel approaches to control and estimation by leveraging brain-inspired computation and communication principles. In this context, we present a framework for the robust stabilization of a plant subject to disturbances when the communication between noisy sensors and the controller relies on spiking signals generated by neuron-inspired schemes. 
The communication scheme consists of a spike encoder on the sensors side, which is based on integrate-and-fire neurons that convert the analog plant output measurement into a spiking signal, and a spike decoder on the controller side inspired by synaptic processing to convert the received spiking signal into an analog signal.
We  provide design conditions on the spike decoder, the spike encoder as well as on the controller under which the closed-loop system exhibits a practical input-to-state stability property, where the adjustable parameters are the amplitudes of the spikes. The results are shown to be applicable to a class of nonlinear systems as well as to any stabilizable and detectable linear time-invariant system. Numerical simulations on a single-link manipulator illustrate the potential of the approach.
\end{abstract}

\end{frontmatter}

\section{Introduction}\label{sect:introduction}

Inspired by the event-based spiking activity of neurons in the brain and nervous system, neuromorphic engineering aims at developing novel asynchronous hardware and software systems that emulate principles of biological neural computation, communication, and control. This research field is motivated by the potential advantages of brain-inspired technologies over classical digital technologies, including energy-efficiency, low latency, robustness, and adaptability \cite{mead1990neuromorphic}. 
In the field of systems and control, neuromorphic control focuses on the design and analysis of closed-loop systems in which information is processed, communicated, and/or measured using event-based spiking signals. The first pioneering works in this direction are \cite{deweerth1990neuron, deweerth1991simple}, where spikes are used to regulate the speed of an electric motor, providing clear practical advantages over classical control schemes. The theoretical advantages of asynchronous impulsive control over periodic control have been illustrated later in, e.g., \cite{aastrom1999comparison, meng2012optimal}. More recently, neuromorphic control has gained a renewed interest, see, e.g., \cite{sepulchre2022spiking, eilers2025, medvedeva2025formalizing, schmetterling2024neuromorphic, petri2024analysis, petri2025emulation, petri2025onTwo, petri2025rhythmic, heemels2026spiking, klip2026stabilizing,jang2025note}. Despite these recent advances, neuromorphic control is still in its infancy, mainly because systematic analysis and design methodologies are lacking. 

An important research direction in neuromorphic control is how to control dynamical systems when communication between sensors and controller is carried out using spiking signals, inspired by the spiking  mechanism of the biological neural system. 
This approach is motivated by the advantages of spiking communication compared to conventional strategies notably in terms of energy-efficiency and  latency. Indeed, communications take the form of fixed-amplitude spikes  so that information is encoded exclusively in the timing of the spikes. Hence the communication are intrinsically event-based and asynchronous. This helps reducing the latency compared to time-triggered strategies    
and eliminates the need for clocks and for transmitting signals of varying amplitudes. On the other hand, neuromorphic hardware typically consumes significantly less power than standard digital processors see, e.g., \cite{merolla2014million}. 

The event-based nature of spiking neuronal communication reminds of event-triggered control literature\footnote{Interestingly, the seminal work that motivated event-triggered control research \cite{aastrom1999comparison} drew its inspiration from spiking neurons and the controller signal consisted of (Dirac) spikes.}, see, e.g., \cite{heemels2012introduction, postoyan2025event,tabuada2007event} and the references therein. 
The main difference, and advantage, of spiking neuronal communication for control is that it does not require the signal value to be communicated at each transmission instant as already mentioned. This may ease the implementation of the control loop and is expected to provide greater robustness properties, as we show  (see Remark \ref{rem:robustness-to-noise} and the example in Section \ref{sect:example}). Spiking neuronal communication share similarities with  event-triggered techniques based on $1$-bit communication, in which information is also solely encoded in the timing of the events (see, e.g., \cite{mazo-cao-aut14,almakhles-abdelrahim-tac23}). However, $1$-bit event-triggered control techniques do not rely on  spiking communication  and they  either require the receiver to have some knowledge of the coded signal or need the sender and receiver to synchronously update some variables, which may be difficult to ensure in practice. 
In contrast, spiking neuronal communication does not require such information or synchronization.

In this context, we focus on the robust stabilization of nonlinear systems using spiking neuronal communication between the plant sensors and the controller. The objective is to give design conditions on the controller and the spiking neuronal communication scheme to ensure that the closed-loop system exhibits desirable stability properties. 
In particular, a spike encoder, inspired by the neuromorphic controller proposed in \cite{petri2025emulation} and modeled using the integrate-and-fire neuronal model, see, e.g., \cite{abbott1999lapicque, izhikevich2010hybrid}, is proposed to convert the analog, noisy measured output from the plant into a sequence of spikes, modeled as a train of Dirac delta pulses, similarly to,  e.g., \cite{petri2024analysis,petri2025emulation,eilers2025, petri2025rhythmic, boerlin2013predictive, agliati2026spiking,moser2024integrate, gerstner2014neuronal}. The generated spikes are then decoded at the controller side using a spike decoder, which is inspired by neuronal synaptic processing see, e.g., \cite[Chapter 3.1]{gerstner2014neuronal}, \cite{roth2009modeling}. Specifically, the spike decoder filters the spike train to produce an analog signal that is then processed by the controller.

We model the closed-loop system as a hybrid dynamical system affected by exogenous disturbance and measurement noise. To proceed with the design conditions and analysis, we augment the closed-loop model with artificial auxiliary variables that allow interpreting the obtained augmented closed-loop model as the feedback interconnection of a purely continuous-time system with a spiking system. We exploit this viewpoint to derive  conditions on the controller, the spike encoder, and the spike decoder under which the original closed-loop system satisfies a practical input-to-state stability property, where the adjustable parameters are the spike amplitudes, which, in our design, will be chosen, without loss of generality, with the same values as the neurons' thresholds. We show that the required conditions can always be met when the plant to be controlled is a stabilizable and detectable linear time-invariant (LTI) system. We also provide easier-to-test properties for nonlinear systems under which the required conditions hold. Finally, we illustrate the relevance of the approach on the set-point stabilization of a $4^{\text{th}}$-order single-link manipulator. The numerical simulations confirm the expected trade-off between the ultimate bound on the closed-loop solutions and the amount of spikes/communications. It also suggests that spiking communications may help improving the robustness properties of continuous-time communications.

Compared to \cite{petri2025emulation}, our approach is not to emulate a given controller, but focuses on the design of both the controller and the neuronal spiking communication scheme to practically stabilize the system. Importantly, the results are developed for nonlinear systems as opposed to LTI systems in \cite{petri2025emulation}. Compared to \cite{eilers2025}, the approach applies to a larger class of systems (not necessarily fully actuated, and not necessarily affine in the input) and the neuronal communication scheme is different for the same reason as for \cite{petri2025emulation}. The use of neuronal spiking communications for synchronization of multi-agent systems and of the phase of nonlinear oscillators was recently studied in \cite{jang2025note} and \cite{song2026distributed}, respectively. In contrast, we consider set-point stabilization, 
use a spike decoder, which is essential to address general nonlinear plant models subject to disturbances.

The rest of the paper is organized as follows. After some preliminaries in Section \ref{sect:preliminaries}, the problem is described in Section \ref{sect:problem-statement}. We present the spike decoder,  the spike encoder and the controller in Section \ref{sect:design}. After augmenting the original closed-loop model in Section \ref{sect:loop-transformations}, we analyze the stability of the closed-loop system in Section \ref{sect:stability}. We elaborate on how to design the controller, the spike encoder, and the spike decoder to ensure the satisfaction of a key assumption in Section \ref{sect:Sigma-continuous-iss} and provide an illustrative example in Section~\ref{sect:example}. Lengthy proofs and auxiliary intermediate results are postponed to the appendix.

\section{Preliminaries}\label{sect:preliminaries}
 
\noindent\emph{Sets.} The symbol $\R$ stands for the set of real numbers, and $\Rlo$ ($\Rlp$) for the set of non-negative (positive) real numbers.  The symbol $\No$ ($\Np$) denotes the set of non-negative (positive) integers. Given $J\in\No\cup\{\infty\}$, we write $\N_{\leq J}:=\{0,1,\ldots,J\}$ so that $\N_{\leq\infty}=\No$, and we adopt the convention $\N_{\leq -1}=\emptyset$. The field of complex numbers is denoted $\C$.  We define the set of (in)finite sequences of  increasing, non-negative real numbers initialized at $0$ as  $\mathcal{T} := \cup_{J\in\No \cup \{\infty\}} \mathcal{T}_J$ with
$\mathcal{T}_J  := \big\{ \{t_k\}_{k\in\N_{\leq J}}\in (\mathbb{R}_{\ge 0})^{\N_{\leq J}}:\, t_0=0,\,\forall k \in \N_{J-1}\backslash\{0\}\,\, \, t_k < t_{k+1}\big\}$ for $J\in\No\cup\{\infty\}$; we denote
$\mathcal{T}_{\zf}^{\infty}$ the set of such sequences that do not exhibit accumulation points, so ``Zeno''-free, (including finite sequences), i.e.,  $\mathcal{T}_ {\zf}^\infty := \big\{ \{t_k\}_{k\in\N_{\leq J}}\in \mathcal{T} \text{ with } J\in\N\cup\{\infty\}:\,t_k\to\infty \text{ as } k\to\infty \text{ if } J=\infty\big\}$.  
%
%
%
\smallskip \newline\noindent\emph{Matrices and vectors.} The notation $I_n$ denotes the identity matrix of $\R^n$, and $0_{n\times m}$ the zero matrix of $\R^{n\times m}$ or simply $0$ when the dimension is clear from the context, $n,m\in\Np$. Given a vector $x\in\R^n$ with $n\in\Zp$, $\text{diag}(x)$ stands for the $n\times n$ diagonal matrix whose diagonal elements correspond to $x$. The rank of a matrix is denoted $\rank(\cdot)$. Given any two vectors $x\in\R^{n_\p}$ and $y\in\R^{n_y}$ with $n_\p,n_y\in\Np$, $(x,y)$ stands for $(x^\top,y^\top)^\top$. The real part of $z\in\C$ is denoted $\Re(z)$.
%
%
\smallskip \newline\noindent\emph{Norms.} For a vector $x \in \R^n$, $|x|$ denotes its Euclidean norm and $|x|_{\mathcal{A}}$  its distance to a non-empty set $\mathcal{A}\subset\R^{n}$, i.e., $|x|_{\mathcal{A}}=\inf\{|x-z|\,:\,z\in\mathcal{A}\}$. For a matrix $A \in \R^{n  \times m}, \norm{A}$ stands for its induced 2-norm. For a Lebesgue measurable signal $v: \R_{\geq0} \to \R^{n_v}$ with $n_v \in \Np$, and $t_1,t_2\in\R_{\geq 0}\cup\{\infty\}$ with $t_1\leq t_2$,  $\norm{v}_{[t_1, t_2]}:= \textnormal{ess.} \sup_{t \in [t_1, t_2]} |v(t)|$. 
%
%
\smallskip \newline \noindent\emph{Functions.} The identity map from a set to itself is denoted $\id$. We consider sets  $\Kinf$ and $\KL$ for comparison functions as defined in \cite[Chap. 3]{goebel2012hybrid}. We write that $\beta\in\exp$-$\KL$ when $\beta:\Rlo\times\Rlo\to\Rlo$ and there exists $(c,\lambda)\in(0,1)\in[1,\infty)$ such that  $\beta:(s,t)\mapsto\lambda c^t s$. 
Given $c\in\R$, we introduce the maps $\mathbf{1}_{\geq c}:\R\to\R$, $\mathbf{1}_{\geq c}(z)=1$ if $z\geq c$ and $\mathbf{1}_{\geq c}(z)=0$ if $z< c$, and $\mathbf{1}_{\leq c}:\R\to\R$ with $\mathbf{1}_{\leq c}(z)=1-\mathbf{1}_{\geq c}(z)$ for any $z\in\R$.  Given a function $f:\R_{\geq0} \to \R$, for any $t \geq 0$ we denote by $f( t^+)$ the right limit of $f$ at $t$, i.e., $f(t^+) = \lim_{s\downarrow t}f(s)$ whenever it exists. 
Given a set $\mathcal{Y} \subseteq \R^{n_y}$  with $n_y \in \Np$, $\mathcal{L}_{\mathcal{Y}}$ denotes the set of all functions from $\R_{\geq0}$ to $\mathcal{Y}$ that are Lebesgue measurable and locally essentially bounded.  
%
%
\smallskip \newline\noindent\emph{Distributions and spikes.} Given $\tau\in\Rlo$, $\delta_{\tau}$ is the Dirac measure centered at $\tau \in \Rlo$, which is defined by its action on any continuous function $\phi :\Rlo\to  \R^n$ with any given $n\in\Np$, as $\int_{\Rlo} \phi(t) \delta_{\tau}\text{dt} = \phi(\tau)$.  We refer to $\delta_\tau$ for some $\tau\in\Rlo$ as a \emph{Dirac pulse} or a \emph{(unitary) spike} for the sake of convenience. 
Given $m \in \mathbb{Z}_{>0}$,  $\mathcal{R}_m$ denotes the space of $\mathbb{R}^m$-valued Radon measures on $\Rlo$. We define the set of \emph{trains of spikes} as $
\mathcal{D}_m := \big\{ u \in \mathcal{R}_m\,:\, \exists J\in\No\cup\{\infty\},\,\exists \{A_k\}_{k\in\N_{\leq J}} \in (\mathbb{R}^m)^{\No}, \, \exists \{t_k\}_{k\in\N_{\leq J}} \in \mathcal{T}, \, u = \sum_{k=0}^{J} A_k \delta_{t_k} \big\}$, and the set of \emph{Zeno-free trains of spikes}\footnote{Zeno-free refers here to the fact that an infinite amount of spikes in a finite amount of time cannot occur.} $\mathcal{D}_{m,\infty}$ by replacing $\mathcal{T}$ by $\mathcal{T}_{\zf}^\infty$ in the definition of $\mathcal{D}_m$. Given any element in $\mathcal{D}_m$, the  associated time instants in $\mathcal{T}$ are called \emph{spiking times} (with some slight abuse as $0$ may not be a spiking time as we will see) and we call the associated sequence $\{A_k\}_{k\in\N_{\leq J}}$ the \emph{spike amplitudes}. Let $\mathcal{S}_{m}:=\{v = v_1 +v_2 \,:\,v_1 \in \mathcal{L}_{\R^{m}},\,v_{2}\in\mathcal{D}_{m,\infty},\,\norm{v}_\star<\infty\}$ with $\norm{v}_\star = \sup_{t\geq0}|\int_{0}^{t} v(s) ds|$, which defines a normed space\footnote{See \cite[Definition 1 and Lemma 1]{petri2025emulation} for more details.}. We call any element $v \in \mathcal{S}_m$ a \emph{spiking signal}.
%
%
%
%
\smallskip \newline \noindent\emph{Input-to-state stability \cite{sontag2008input, mironchenko2023-iss-book}.} The continuous-time system $\dot x = f(x,d_1,\ldots,d_q)$ with state $x\in\R^n$ and inputs $d_1\in\R^{m_1},\,\ldots,\,d_q\in\R^{m_q}$ at time $t \in \R_{\geq0}$, $n,m_i\in\Np$ for $i\in\{1,\ldots,q\}$, is \emph{input-to-state stable (ISS) with respect to inputs $d_1,\ldots,d_q$ (with gains $\gamma_1,\ldots,\gamma_q$),} if there exist $\beta\in\KL$ and\footnote{$0$ means her the map from $\Rlo$ to $\Rlo$ with value $0$.} $\gamma_1,\ldots,\,\gamma_q\in\Kinf\cup\{0\}$ such that for any input $d_i\in\mathcal{L}_{\R^{m_i}}$, $i\in\{1,2,\ldots,q\}$, any solution\footnote{Only maximal solutions are considered for the studied dynamical systems in this work, i.e., solutions whose domain of definition cannot be extended.}
with inputs $d_1,\ldots,\,d_q$ satisfies $|x(t)|\leq \max\{\beta(|x(0)|,t),\gamma_1(\|d_1\|_{[0,t]}),\ldots,\gamma_q(\|d_q\|_{[0,t]})\}$ for all $t$ in the domain of the solution. 

\section{Problem description} \label{sect:problem-statement} 

We consider a nonlinear plant modeled as 
\begin{eqn}
   \label{eq:plant}
     \dot x_\p  =  f_\p(x_\p,u,d), & &
     y  =  h_\p(x_\p,w), 
\end{eqn}
where $x_\p\in \R^{n_\p}$ is the state, $u \in \R^{n_u}$ is the control input, $d\in \R^{n_d}$ is the disturbance acting on the dynamics,  $y\in\R^{n_y}$ is the measured output affected by measurement noise $w\in\R^{n_w}$ with $n_\p, n_u, n_y\in\Np$ and $ n_d,n_w \in \No$. The functions $f_\p:\R^{n_\p}\times\R^{n_u}\times\R^{n_d}\to\R^{n_\p}$ and $h_p:\R^{n_\p} \times \R^{n_w}\to\R^{n_y}$ are assumed to be continuous. 

The goal is to stabilize the origin of system (\ref{eq:plant}) while communicating the output $y$ to the controller (to be designed) using  spiking communications generated by a neuron-inspired architecture, as illustrated in Fig. \ref{fig:closed-loop}. 
Specifically, a spike encoder converts the analog output signal $y$ into a spiking signal denoted $y_s$  using a mechanism inspired by neuronal membrane potential dynamics. The spiking signal  $y_s$ is then converted to an analog signal denoted $\widehat y$ using a spike decoder inspired by synaptic processing.  As a consequence, the controller has access to $\widehat y$ and uses it  to stabilize the origin of the closed-loop system as it does not have access to $y$. 

The objectives are  to give design conditions on the spike encoder, the spike decoder and the controller so that the closed-loop system depicted in Fig. \ref{fig:closed-loop} exhibits stability properties as formalized in the sequel and does not exhibit Zeno behavior, i.e., an infinite number of spikes in finite time. 

\begin{figure}[!t]
	\begin{center}
		\tikzstyle{blockB} = [draw, fill=blue!30, rectangle, rounded corners=2pt, 
		minimum height=2em, minimum width=3em]  
		\tikzstyle{blockG} = [draw,  rounded corners=2pt,fill=MyGreen!40, rectangle, 
		minimum height=2em, minimum width=3em, rounded corners=2pt]
		\tikzstyle{blockR} = [draw, fill=red!40, rectangle, 
		minimum height=2em, minimum width=3em, rounded corners=2pt]
		\tikzstyle{blockO} = [draw,minimum height=1.5em, fill=orange!20, minimum width=4em, rounded corners=2pt]
		\tikzstyle{input} = [coordinate]
		\tikzstyle{blockW} = [draw,minimum height=3em, fill=white!20, minimum width=2.5em, rounded corners=2pt]
		\tikzstyle{input1} = [coordinate]
		\tikzstyle{blockCircle} = [draw, circle]
		\tikzstyle{sum} = [draw, circle, minimum size=.3cm]
		\tikzstyle{blockSensor} = [draw, fill=white!20, draw= blue!80, line width= 0.8mm, minimum height=10em, minimum width=13em, rounded corners=2pt]
		
		\begin{tikzpicture}[auto, node distance=2cm,>=latex , scale=0.72,transform shape] 
			
			\node [input, name=stateFeedbackSpiky] {};

          \node [blockW, right of=stateFeedbackSpiky, node distance=1.9cm] (SAconverter) { 
				$\begin{array}{c}
					\text{Spike decoder}\\
                    \text{(Synaptic filter)}
				\end{array}$
			};
			
			\draw [draw,dashed, ->] (stateFeedbackSpiky) -- node [pos=0.5]{$y_s$} (SAconverter);
			
			\node [blockW, right=0.5cm of SAconverter] (Controller) { 
				$\begin{array}{c}
					\text{Controller}
				\end{array}$
			};
			
			\draw [draw,->] (SAconverter) -- node [pos=0.5]{$\widehat y$} (Controller);

			\node [blockW, right=0.5cm of Controller] (system) { 
				$\begin{array}{c}
					\text{Plant}
				\end{array}$
			};
			\node [input, above of= system, node distance=1cm] (PlantDisturbance){};
\draw [draw, ->] (PlantDisturbance) -- node [pos=0.5, right]{$(d,w)$} (system);
			\draw [draw,->] (Controller) -- node [pos=0.5]{$u$} (system);

 \node [blockW, right of=system, node distance=2.5cm] (ASconverter) { 
				$\begin{array}{c}
					\text{Spike encoder}\\
                    \text{(IaF neurons)}
				\end{array}$
			};
			
			\draw [draw,->] (system) -- node [pos=0.5]{$y$} (ASconverter);

			\node [input, right of= ASconverter, node distance=2.0cm] (SystemOutputSpiky){};
			\node [input, right of= ASconverter, node distance=1.7cm] (SystemFeedbackSpiky){};
			\node [input, below of= SystemFeedbackSpiky, node distance=1.0cm] (SystemFeedbackSpikyBelow){};
            \node [input, left of= SystemFeedbackSpikyBelow, node distance=3.8cm] (NetworkIn){};
            \node [input, left of= NetworkIn, node distance=3.2cm] (NetworkOut){};
			\node [input, below of= stateFeedbackSpiky, node distance=1.0cm] (stateFeedbackSpikyBelow){};
			
			\draw [draw,dashed, ->] (ASconverter) -- node [pos=0.5]{$y_s$} (SystemOutputSpiky);
			\draw [draw,dashed] (SystemFeedbackSpiky) --  (SystemFeedbackSpikyBelow);
            \draw [draw,dashed] (stateFeedbackSpiky) --  (stateFeedbackSpikyBelow);
	          \draw [draw,dashed] (SystemFeedbackSpikyBelow) --  (stateFeedbackSpikyBelow);

            \node [input, right of= Controller, node distance=1.2cm] (ControllerRight){};

		\end{tikzpicture}
	\end{center}
	\caption{Closed loop with spiking neuronal communications (IaF: integrate-and-fire).}
	\label{fig:closed-loop}
\end{figure}

\section{Spike encoder, spike decoder and controller}\label{sect:design}

We present the spike encoder  (Section~\ref{subsect:spiking-encoder}), the spike decoder (Section~\ref{subsect:spiking-decoder}), as well as the controller (Section~\ref{Controller}), before deriving the overall closed-loop model (Section~\ref{subsect:closed-loop-model}).

\subsection{Spike encoder}\label{subsect:spiking-encoder}

Writing the output signal $y=(y_1, y_2, \dots, y_{n_y})$  with $y_i(t)\in\R$ for any $t \in \R_{\geq0}$ and any $i\in\{1,\ldots,n_y\}$, the spike encoder  transforms each scalar analog component $y_i$ of $y$ to a spiking signal   $y_{s,i}$ thereby forming  $y_s=(y_{s,1},\ldots,y_{s,n_y})$ with $y_{s,i}(t)\in\R$ for any $t \in \R_{\geq0}$ and any $i\in\{1,\ldots,n_y\}$.  
Like in \cite{petri2025emulation},  $y_{s,i}$, $i \in \{1,\ldots, n_y\}$, is generated by $2$ neurons, whose dynamics are inspired by the integrate-and-fire model \cite{abbott1999lapicque, izhikevich2010hybrid}. Therefore, the spike encoder uses a total of $2n_y$ neurons to transform the analog signal $y$ into the spiking signal  $y_s$; see Fig.~\ref{Fig:blockDiagram_ASconvMultipleOutput} for an illustration. We present the dynamics of the spike encoder below.

\begin{figure}[!t]
	\begin{center}
		\tikzstyle{blockB} = [draw, fill=blue!30, rectangle, 
		minimum height=2em, minimum width=3em, rounded corners=2pt]  
		\tikzstyle{blockG} = [draw, fill=MyGreen!40, rectangle, 
		minimum height=2em, minimum width=3em, rounded corners=2pt]
		\tikzstyle{blockR} = [draw, fill=red!40, rectangle, 
		minimum height=2em, minimum width=3em, rounded corners=2pt]
		\tikzstyle{blockO} = [draw,minimum height=1.5em, fill=orange!20, minimum width=4em, rounded corners=2pt]
		\tikzstyle{input} = [coordinate]
		\tikzstyle{blockW} = [draw,minimum height=1.0em, fill=white!20, minimum width=2.0em, rounded corners=2pt]
        \tikzstyle{blockWneurons} = [draw,minimum height=1.0em, fill=white!20, minimum width=10em, rounded corners=2pt]
		\tikzstyle{input1} = [coordinate]
		\tikzstyle{blockCircle} = [draw, circle]
		\tikzstyle{sum} = [draw, circle, minimum size=.3cm]
		\tikzstyle{blockSensor} = [draw, fill=white!20, draw= blue!80, line width= 0.8mm, minimum height=10em, minimum width=13em, rounded corners=2pt]
		
		\begin{tikzpicture}[auto, node distance=2cm,>=latex , scale=0.68,transform shape] 
			
			\node [input, name=stateFeedback] {};
			\node [input, right of= stateFeedback, node distance=1cm] (stateFeedbackRight){};
			\node [input, above of= stateFeedbackRight, node distance=0.65cm] (stateFeedbackUp){};
			\node [input, below of= stateFeedbackRight, node distance=0.65cm] (stateFeedbackDown){};
			
			\node [blockWneurons, right of=stateFeedbackUp, node distance=2.3cm] (Neuron1) { 
				%
				$\begin{array}{c}
                  \vspace{-0.7em}
					\text{Neuron } (1,1)\\  (\xi_{1,1}, \Delta_{1,1}, \alpha_{1,1})
				\end{array}$
			};
			\node [blockWneurons, right of=stateFeedbackDown, node distance=2.3cm] (Neuron2) { 
				$\begin{array}{c}
                \vspace{-0.7em}
					\text{Neuron } (1,2)\\  (\xi_{1,2}, \Delta_{1,2}, \alpha_{1,2})
				\end{array}$
			};
			
			\draw [draw,-] (stateFeedback) -- node [pos=0.5]{$y_1$} (stateFeedbackRight);
			\draw [draw,-] (stateFeedbackRight) --  (stateFeedbackUp);
			\draw [draw,-] (stateFeedbackRight) --  (stateFeedbackDown);
			\draw [draw,->] (stateFeedbackUp) -- node {} (Neuron1);
			\draw [draw,->] (stateFeedbackDown) -- node {} (Neuron2);
			
			\node [input, right of= Neuron1, node distance=4.5cm] (Neuron1End){};
			\node [input, right of= Neuron2, node distance=4.5cm] (Neuron2End){};
			\node [blockW, right of=Neuron2, node distance=3.8cm] (Neuron2Gain) { 
				$-1$
			};
			
			\draw [draw,-] (Neuron1) -- node {} (Neuron1End);
			\draw [draw,->] (Neuron2) -- node {} (Neuron2Gain);
			\draw [draw,-] (Neuron2Gain) -- node {} (Neuron2End);
			
			\node [input, below of= Neuron1End, node distance=0.55cm] (Neuron1Point){};
			\node [input, above of= Neuron2End, node distance=0.55cm] (Neuron2Point){};
			\node [input, below of= Neuron1Point, node distance=0.1cm] (NeuronCircle){};
			\draw [draw,->] (Neuron1End) -- node {} (Neuron1Point);
			\draw [draw,->] (Neuron2End) -- node {} (Neuron2Point);
			
			\draw [fill=white] (NeuronCircle) circle (0.1cm);
			
			\node [input, right of= NeuronCircle, node distance=0.1cm] (NeuronCircleRight){};
			
			\node [input, right of= NeuronCircleRight, node distance=2cm] (SpikySignal){};
			
			 \draw [draw,-] (NeuronCircleRight) -- node [pos=0.8]{$y_{s,1}$} (SpikySignal);
			
			

			\node [input, right of= Neuron1, node distance=2.2cm] (Neuron1SpikeBeginDown){};
			\node [input, above of= Neuron1SpikeBeginDown, node distance=0.2cm] (Neuron1SpikeBegin){};
			\node [input, above of= Neuron1SpikeBegin, node distance=0.45cm] (Neuron1Spike1Up){};
			\node [input, right of= Neuron1SpikeBegin, node distance=0.1cm] (Neuron1Spike2Down){};
			\node [input, left of= Neuron1SpikeBegin, node distance=0.2cm] (Neuron1SpikeBeginLeft){};
			\node [input, above of= Neuron1Spike2Down, node distance=0.45cm] (Neuron1Spike2Up){};
			\node [input, right of= Neuron1Spike2Down, node distance=0.4cm] (Neuron1Spike3Down){};
			\node [input, above of= Neuron1Spike3Down, node distance=0.45cm] (Neuron1Spike3Up){};
			\node [input, right of= Neuron1Spike3Down, node distance=0.3cm] (Neuron1Spike4Down){};
			\node [input, above of= Neuron1Spike4Down, node distance=0.45cm] (Neuron1Spike4Up){};
			\node [input, right of= Neuron1Spike3Down, node distance=0.2cm] (Neuron1Spike4Down){};
			\node [input, above of= Neuron1Spike4Down, node distance=0.45cm] (Neuron1Spike4Up){};
			\node [input, right of= Neuron1Spike4Down, node distance=0.2cm] (Neuron1SpikeFinal){};
			
			\draw [draw,-] (Neuron1SpikeBegin) --  (Neuron1SpikeBeginLeft);
			\draw [draw,-] (Neuron1SpikeBegin) --  (Neuron1Spike1Up);
			\draw [draw,-] (Neuron1SpikeBegin) --  (Neuron1Spike2Down);
			\draw [draw,-] (Neuron1Spike2Down) --  (Neuron1Spike2Up);
			\draw [draw,-] (Neuron1Spike2Down) --  (Neuron1Spike3Down);
			\draw [draw,-] (Neuron1Spike3Down) --  (Neuron1Spike3Up);
			\draw [draw,-] (Neuron1Spike3Down) --  (Neuron1Spike4Down);
			\draw [draw,-] (Neuron1Spike4Down) --  (Neuron1Spike4Up);
			\draw [draw,-] (Neuron1Spike4Down) --  (Neuron1SpikeFinal);
			
			\node [input, right of= Neuron2, node distance=2.2cm] (Neuron2SpikeBeginDown){};
			\node [input, above of= Neuron2SpikeBeginDown, node distance=0.2cm] (Neuron2SpikeBegin){};
			\node [input, above of= Neuron2SpikeBegin, node distance=0.45cm] (Neuron2Spike1Up){};
			\node [input, right of= Neuron2SpikeBegin, node distance=0.3cm] (Neuron2Spike2Down){};
			\node [input, left of= Neuron2SpikeBegin, node distance=0.1cm] (Neuron2SpikeBeginLeft){};
			\node [input, above of= Neuron2Spike2Down, node distance=0.45cm] (Neuron2Spike2Up){};
			\node [input, right of= Neuron2Spike2Down, node distance=0.2cm] (Neuron2Spike3Down){};
			\node [input, above of= Neuron2Spike3Down, node distance=0.45cm] (Neuron2Spike3Up){};
			\node [input, right of= Neuron2Spike3Down, node distance=0.2cm] (Neuron2Spike4Down){};
			\node [input, above of= Neuron2Spike4Down, node distance=0.45cm] (Neuron2Spike4Up){};
			\node [input, right of= Neuron2Spike3Down, node distance=0.3cm] (Neuron2Spike4Down){};
			\node [input, above of= Neuron2Spike4Down, node distance=0.45cm] (Neuron2Spike4Up){};
			\node [input, right of= Neuron2Spike4Down, node distance=0.2cm] (Neuron2SpikeFinal){};
			
			\draw [draw,-] (Neuron2SpikeBegin) --  (Neuron2SpikeBeginLeft);
			\draw [draw,-] (Neuron2SpikeBegin) --  (Neuron2Spike1Up);
			\draw [draw,-] (Neuron2SpikeBegin) --  (Neuron2Spike2Down);
			\draw [draw,-] (Neuron2Spike2Down) --  (Neuron2Spike2Up);
			\draw [draw,-] (Neuron2Spike2Down) --  (Neuron2Spike3Down);
			\draw [draw,-] (Neuron2Spike3Down) --  (Neuron2Spike3Up);
			\draw [draw,-] (Neuron2Spike3Down) --  (Neuron2Spike4Down);
			\draw [draw,-] (Neuron2Spike4Down) --  (Neuron2Spike4Up);
			\draw [draw,-] (Neuron2Spike4Down) --  (Neuron2SpikeFinal);

			\node [input, right of= NeuronCircle, node distance=0.35cm] (Neuron1SpikeBeginDown_tot){};
			\node [input, above of= Neuron1SpikeBeginDown_tot, node distance=0.8cm] (Neuron1SpikeBegin_tot){};
			\node [input, above of= Neuron1SpikeBegin_tot, node distance=0.45cm] (Neuron1Spike1Up_tot){};
			\node [input, right of= Neuron1SpikeBegin_tot, node distance=0.1cm] (Neuron1Spike2Down_tot){};
			\node [input, left of= Neuron1SpikeBegin_tot, node distance=0.2cm] (Neuron1SpikeBeginLeft_tot){};
			\node [input, above of= Neuron1Spike2Down_tot, node distance=0.45cm] (Neuron1Spike2Up_tot){};
			\node [input, right of= Neuron1Spike2Down_tot, node distance=0.4cm] (Neuron1Spike3Down_tot){};
			\node [input, above of= Neuron1Spike3Down_tot, node distance=0.45cm] (Neuron1Spike3Up_tot){};
			\node [input, right of= Neuron1Spike3Down_tot, node distance=0.3cm] (Neuron1Spike4Down_tot){};
			\node [input, above of= Neuron1Spike4Down_tot, node distance=0.45cm] (Neuron1Spike4Up_tot){};
			\node [input, right of= Neuron1Spike3Down_tot, node distance=0.2cm] (Neuron1Spike4Down_tot){};
			\node [input, above of= Neuron1Spike4Down_tot, node distance=0.45cm] (Neuron1Spike4Up_tot){};
			\node [input, right of= Neuron1Spike4Down_tot, node distance=0.2cm] (Neuron1SpikeFinal_tot){};
			
			\draw [draw,-] (Neuron1SpikeBegin_tot) --  (Neuron1SpikeBeginLeft_tot);
			\draw [draw,-] (Neuron1SpikeBegin_tot) --  (Neuron1Spike1Up_tot);
			\draw [draw,-] (Neuron1SpikeBegin_tot) --  (Neuron1Spike2Down_tot);
			\draw [draw,-] (Neuron1Spike2Down_tot) --  (Neuron1Spike2Up_tot);
			\draw [draw,-] (Neuron1Spike2Down_tot) --  (Neuron1Spike3Down_tot);
			\draw [draw,-] (Neuron1Spike3Down_tot) --  (Neuron1Spike3Up_tot);
			\draw [draw,-] (Neuron1Spike3Down_tot) --  (Neuron1Spike4Down_tot);
			\draw [draw,-] (Neuron1Spike4Down_tot) --  (Neuron1Spike4Up_tot);
			\draw [draw,-] (Neuron1Spike4Down_tot) --  (Neuron1SpikeFinal_tot);
			
			\node [input, right of= NeuronCircle, node distance=0.3cm] (Neuron2SpikeBeginDown_tot){};
			\node [input, above of= Neuron2SpikeBeginDown_tot, node distance=0.8cm] (Neuron2SpikeBegin_tot){};
			\node [input, below of= Neuron2SpikeBegin_tot, node distance=0.45cm] (Neuron2Spike1Up_tot){};
			\node [input, right of= Neuron2SpikeBegin_tot, node distance=0.3cm] (Neuron2Spike2Down_tot){};
			\node [input, left of= Neuron2SpikeBegin_tot, node distance=0.1cm] (Neuron2SpikeBeginLeft_tot){};
			\node [input, below of= Neuron2Spike2Down_tot, node distance=0.45cm] (Neuron2Spike2Up_tot){};
			\node [input, right of= Neuron2Spike2Down_tot, node distance=0.2cm] (Neuron2Spike3Down_tot){};
			\node [input, below of= Neuron2Spike3Down_tot, node distance=0.45cm] (Neuron2Spike3Up_tot){};
			\node [input, right of= Neuron2Spike3Down_tot, node distance=0.2cm] (Neuron2Spike4Down_tot){};
			\node [input, below of= Neuron2Spike4Down_tot, node distance=0.45cm] (Neuron2Spike4Up_tot){};
			\node [input, right of= Neuron2Spike3Down_tot, node distance=0.3cm] (Neuron2Spike4Down_tot){};
			\node [input, below of= Neuron2Spike4Down_tot, node distance=0.45cm] (Neuron2Spike4Up_tot){};
			\node [input, right of= Neuron2Spike4Down_tot, node distance=0.2cm] (Neuron2SpikeFinal_tot){};
			
			\draw [draw,-] (Neuron2SpikeBegin_tot) --  (Neuron2Spike1Up_tot);
			\draw [draw,-] (Neuron2Spike2Down_tot) --  (Neuron2Spike2Up_tot);
			\draw [draw,-] (Neuron2Spike3Down_tot) --  (Neuron2Spike3Up_tot);
			\draw [draw,-] (Neuron2Spike4Down_tot) --  (Neuron2Spike4Up_tot);

            \node [input, below of= stateFeedback, node distance=3cm] (stateFeedbackLast){};
            \node [input, right of= stateFeedbackLast, node distance=1cm] (stateFeedbackRightLast){};
			\node [input, above of= stateFeedbackRightLast, node distance=0.65cm] (stateFeedbackUpLast){};
			\node [input, below of= stateFeedbackRightLast, node distance=0.65cm] (stateFeedbackDownLast){};
			
			\node [blockWneurons, right of=stateFeedbackUpLast, node distance=2.3cm] (Neuron1Last) { 
				%
				$\begin{array}{c}
                \vspace{-0.7em}
					\text{Neuron } (n_{y},1)\\  (\xi_{n_{y,1}}, \Delta_{n_{y,1}}, \alpha_{n_{y,1}})
				\end{array}$
			};
			\node [blockWneurons, right of=stateFeedbackDownLast, node distance=2.3cm] (Neuron2Last) { 
				$\begin{array}{c}
                \vspace{-0.7em}
					\text{Neuron } (n_{y},2)\\  (\xi_{n_{y,2}}, \Delta_{n_{y,2}}, \alpha_{n_{y,2}})
				\end{array}$
			};
			
			\draw [draw,-] (stateFeedbackLast) -- node [pos=0.5]{$y_{n_y}$} (stateFeedbackRightLast);
			\draw [draw,-] (stateFeedbackRightLast) --  (stateFeedbackUpLast);
			\draw [draw,-] (stateFeedbackRightLast) --  (stateFeedbackDownLast);
			\draw [draw,->] (stateFeedbackUpLast) -- node {} (Neuron1Last);
			\draw [draw,->] (stateFeedbackDownLast) -- node {} (Neuron2Last);
			
			\node [input, right of= Neuron1Last, node distance=4.5cm] (Neuron1EndLast){};
			\node [input, right of= Neuron2Last, node distance=4.5cm] (Neuron2EndLast){};
			\node [blockW, right of=Neuron2Last, node distance=3.8cm] (Neuron2GainLast) { 
				$-1$
			};
			
			\draw [draw,-] (Neuron1Last) -- node {} (Neuron1EndLast);
			\draw [draw,->] (Neuron2Last) -- node {} (Neuron2GainLast);
			\draw [draw,-] (Neuron2GainLast) -- node {} (Neuron2EndLast);
			
			\node [input, below of= Neuron1EndLast, node distance=0.55cm] (Neuron1PointLast){};
			\node [input, above of= Neuron2EndLast, node distance=0.55cm] (Neuron2PointLast){};
			\node [input, below of= Neuron1PointLast, node distance=0.1cm] (NeuronCircleLast){};
			\draw [draw,->] (Neuron1EndLast) -- node {} (Neuron1PointLast);
			\draw [draw,->] (Neuron2EndLast) -- node {} (Neuron2PointLast);
			
			\draw [fill=white] (NeuronCircleLast) circle (0.1cm);
			
			\node [input, right of= NeuronCircleLast, node distance=0.1cm] (NeuronCircleRightLast){};
			
			\node [input, right of= NeuronCircleRightLast, node distance=2cm] (SpikySignalLast){};
			
			 \draw [draw,-] (NeuronCircleRightLast) -- node [pos=0.8]{$y_{s, n_{y}}$} (SpikySignalLast);
			
			

			\node [input, right of= Neuron1Last, node distance=2.2cm] (Neuron1SpikeBeginDownLast){};
			\node [input, above of= Neuron1SpikeBeginDownLast, node distance=0.2cm] (Neuron1SpikeBeginLast){};
			\node [input, above of= Neuron1SpikeBeginLast, node distance=0.45cm] (Neuron1Spike1UpLast){};
			\node [input, right of= Neuron1SpikeBeginLast, node distance=0.1cm] (Neuron1Spike2DownLast){};
			\node [input, left of= Neuron1SpikeBeginLast, node distance=0.2cm] (Neuron1SpikeBeginLeftLast){};
			\node [input, above of= Neuron1Spike2DownLast, node distance=0.45cm] (Neuron1Spike2UpLast){};
			\node [input, right of= Neuron1Spike2DownLast, node distance=0.4cm] (Neuron1Spike3DownLast){};
			\node [input, above of= Neuron1Spike3DownLast, node distance=0.45cm] (Neuron1Spike3UpLast){};
			\node [input, right of= Neuron1Spike3DownLast, node distance=0.3cm] (Neuron1Spike4DownLast){};
			\node [input, above of= Neuron1Spike4DownLast, node distance=0.45cm] (Neuron1Spike4UpLast){};
			\node [input, right of= Neuron1Spike3DownLast, node distance=0.2cm] (Neuron1Spike4DownLast){};
			\node [input, above of= Neuron1Spike4DownLast, node distance=0.45cm] (Neuron1Spike4UpLast){};
			\node [input, right of= Neuron1Spike4DownLast, node distance=0.2cm] (Neuron1SpikeFinalLast){};
			
			\draw [draw,-] (Neuron1SpikeBeginLast) --  (Neuron1SpikeBeginLeftLast);
			\draw [draw,-] (Neuron1SpikeBeginLast) --  (Neuron1Spike1UpLast);
			\draw [draw,-] (Neuron1SpikeBeginLast) --  (Neuron1Spike2DownLast);
			\draw [draw,-] (Neuron1Spike2DownLast) --  (Neuron1Spike2UpLast);
			\draw [draw,-] (Neuron1Spike2DownLast) --  (Neuron1Spike3DownLast);
			\draw [draw,-] (Neuron1Spike3DownLast) --  (Neuron1Spike3UpLast);
			\draw [draw,-] (Neuron1Spike3DownLast) --  (Neuron1Spike4DownLast);
			\draw [draw,-] (Neuron1Spike4DownLast) --  (Neuron1Spike4UpLast);
			\draw [draw,-] (Neuron1Spike4DownLast) --  (Neuron1SpikeFinalLast);
			
			\node [input, right of= Neuron2Last, node distance=2.2cm] (Neuron2SpikeBeginDownLast){};
			\node [input, above of= Neuron2SpikeBeginDownLast, node distance=0.2cm] (Neuron2SpikeBeginLast){};
			\node [input, above of= Neuron2SpikeBeginLast, node distance=0.45cm] (Neuron2Spike1UpLast){};
			\node [input, right of= Neuron2SpikeBeginLast, node distance=0.3cm] (Neuron2Spike2DownLast){};
			\node [input, left of= Neuron2SpikeBeginLast, node distance=0.1cm] (Neuron2SpikeBeginLeftLast){};
			\node [input, above of= Neuron2Spike2DownLast, node distance=0.45cm] (Neuron2Spike2UpLast){};
			\node [input, right of= Neuron2Spike2DownLast, node distance=0.2cm] (Neuron2Spike3DownLast){};
			\node [input, above of= Neuron2Spike3DownLast, node distance=0.45cm] (Neuron2Spike3UpLast){};
			\node [input, right of= Neuron2Spike3DownLast, node distance=0.2cm] (Neuron2Spike4DownLast){};
			\node [input, above of= Neuron2Spike4DownLast, node distance=0.45cm] (Neuron2Spike4UpLast){};
			\node [input, right of= Neuron2Spike3DownLast, node distance=0.3cm] (Neuron2Spike4DownLast){};
			\node [input, above of= Neuron2Spike4DownLast, node distance=0.45cm] (Neuron2Spike4UpLast){};
			\node [input, right of= Neuron2Spike4DownLast, node distance=0.2cm] (Neuron2SpikeFinalLast){};
			
			\draw [draw,-] (Neuron2SpikeBeginLast) --  (Neuron2SpikeBeginLeftLast);
			\draw [draw,-] (Neuron2SpikeBeginLast) --  (Neuron2Spike1UpLast);
			\draw [draw,-] (Neuron2SpikeBeginLast) --  (Neuron2Spike2DownLast);
			\draw [draw,-] (Neuron2Spike2DownLast) --  (Neuron2Spike2UpLast);
			\draw [draw,-] (Neuron2Spike2DownLast) --  (Neuron2Spike3DownLast);
			\draw [draw,-] (Neuron2Spike3DownLast) --  (Neuron2Spike3UpLast);
			\draw [draw,-] (Neuron2Spike3DownLast) --  (Neuron2Spike4DownLast);
			\draw [draw,-] (Neuron2Spike4DownLast) --  (Neuron2Spike4UpLast);
			\draw [draw,-] (Neuron2Spike4DownLast) --  (Neuron2SpikeFinalLast);

			\node [input, right of= NeuronCircleLast, node distance=0.35cm] (Neuron1SpikeBeginDownLast_tot){};
			\node [input, above of= Neuron1SpikeBeginDownLast_tot, node distance=0.8cm] (Neuron1SpikeBeginLast_tot){};
			\node [input, above of= Neuron1SpikeBeginLast_tot, node distance=0.6cm] (Neuron1Spike1UpLast_tot){};
			\node [input, right of= Neuron1SpikeBeginLast_tot, node distance=0.1cm] (Neuron1Spike2DownLast_tot){};
			\node [input, left of= Neuron1SpikeBeginLast_tot, node distance=0.2cm] (Neuron1SpikeBeginLeftLast_tot){};
			\node [input, above of= Neuron1Spike2DownLast_tot, node distance=0.6cm] (Neuron1Spike2UpLast_tot){};
			\node [input, right of= Neuron1Spike2DownLast_tot, node distance=0.4cm] (Neuron1Spike3DownLast_tot){};
			\node [input, above of= Neuron1Spike3DownLast_tot, node distance=0.6cm] (Neuron1Spike3UpLast_tot){};
			\node [input, right of= Neuron1Spike3DownLast_tot, node distance=0.3cm] (Neuron1Spike4DownLast_tot){};
			\node [input, above of= Neuron1Spike4DownLast_tot, node distance=0.6cm] (Neuron1Spike4UpLast_tot){};
			\node [input, right of= Neuron1Spike3DownLast_tot, node distance=0.2cm] (Neuron1Spike4DownLast_tot){};
			\node [input, above of= Neuron1Spike4DownLast_tot, node distance=0.6cm] (Neuron1Spike4UpLast_tot){};
			\node [input, right of= Neuron1Spike4DownLast_tot, node distance=0.2cm] (Neuron1SpikeFinalLast_tot){};
			
			\draw [draw,-] (Neuron1SpikeBeginLast_tot) --  (Neuron1SpikeBeginLeftLast_tot);
			\draw [draw,-] (Neuron1SpikeBeginLast_tot) --  (Neuron1Spike1UpLast_tot);
			\draw [draw,-] (Neuron1SpikeBeginLast_tot) --  (Neuron1Spike2DownLast_tot);
			\draw [draw,-] (Neuron1Spike2DownLast_tot) --  (Neuron1Spike2UpLast_tot);
			\draw [draw,-] (Neuron1Spike2DownLast_tot) --  (Neuron1Spike3DownLast_tot);
			\draw [draw,-] (Neuron1Spike3DownLast_tot) --  (Neuron1Spike3UpLast_tot);
			\draw [draw,-] (Neuron1Spike3DownLast_tot) --  (Neuron1Spike4DownLast_tot);
			\draw [draw,-] (Neuron1Spike4DownLast_tot) --  (Neuron1Spike4UpLast_tot);
			\draw [draw,-] (Neuron1Spike4DownLast_tot) --  (Neuron1SpikeFinalLast_tot);
			
			\node [input, right of= NeuronCircleLast, node distance=0.3cm] (Neuron2SpikeBeginDownLast_tot){};
			\node [input, above of= Neuron2SpikeBeginDownLast_tot, node distance=0.8cm] (Neuron2SpikeBeginLast_tot){};
			\node [input, below of= Neuron2SpikeBeginLast_tot, node distance=0.45cm] (Neuron2Spike1UpLast_tot){};
			\node [input, right of= Neuron2SpikeBeginLast_tot, node distance=0.3cm] (Neuron2Spike2DownLast_tot){};
			\node [input, left of= Neuron2SpikeBeginLast_tot, node distance=0.1cm] (Neuron2SpikeBeginLeftLast_tot){};
			\node [input, below of= Neuron2Spike2DownLast_tot, node distance=0.45cm] (Neuron2Spike2UpLast_tot){};
			\node [input, right of= Neuron2Spike2DownLast_tot, node distance=0.2cm] (Neuron2Spike3DownLast_tot){};
			\node [input, below of= Neuron2Spike3DownLast_tot, node distance=0.45cm] (Neuron2Spike3UpLast_tot){};
			\node [input, right of= Neuron2Spike3DownLast_tot, node distance=0.2cm] (Neuron2Spike4DownLast_tot){};
			\node [input, below of= Neuron2Spike4DownLast_tot, node distance=0.45cm] (Neuron2Spike4UpLast_tot){};
			\node [input, right of= Neuron2Spike3DownLast_tot, node distance=0.3cm] (Neuron2Spike4DownLast_tot){};
			\node [input, below of= Neuron2Spike4DownLast_tot, node distance=0.45cm] (Neuron2Spike4UpLast_tot){};
			\node [input, right of= Neuron2Spike4DownLast_tot, node distance=0.2cm] (Neuron2SpikeFinalLast_tot){};
			
			\draw [draw,-] (Neuron2SpikeBeginLast_tot) --  (Neuron2Spike1UpLast_tot);
			\draw [draw,-] (Neuron2Spike2DownLast_tot) --  (Neuron2Spike2UpLast_tot);
			\draw [draw,-] (Neuron2Spike3DownLast_tot) --  (Neuron2Spike3UpLast_tot);
			\draw [draw,-] (Neuron2Spike4DownLast_tot) --  (Neuron2Spike4UpLast_tot);

            \draw [draw,-] (stateFeedback) -- node [pos=0.5]{} (stateFeedbackLast);
            \node [input, below of= stateFeedback, node distance=1.5cm] (stateFeedbackBelow){};
             \node [input, left of= stateFeedbackBelow, node distance=0.8cm] (stateFeedbackMultiple){};
            \draw [draw,-] (stateFeedbackMultiple) -- node [pos=0.5]{$y$} (stateFeedbackBelow);

             \node [input, below of= SpikySignal, node distance=1.5cm] (SpikySignalTot){};
              \node [input, right of= SpikySignalTot, node distance=0.8cm] (SpikySignalFinal){};

            \node [input, below of= SpikySignal, node distance=1.5cm] (SpikesCircle){};
            \draw [fill=white] (SpikesCircle) circle (0.1cm);
			
			\node [input, right of= SpikesCircle, node distance=0.1cm] (SpikesCircleRight){};
            \node [input, above of= SpikesCircle, node distance=0.1cm] (SpikesCircleAbove){};
              \node [input, below of= SpikesCircle, node distance=0.1cm] (SpikesCircleBelow){};
			
			 \draw [draw,->] (SpikySignal) -- node [pos=0.5]{} (SpikesCircleAbove);
         \draw [draw,->] (SpikySignalLast) -- node [pos=0.5]{} (SpikesCircleBelow);
           \draw [draw,->] (SpikesCircleRight) -- node [pos=0.5]{$y_{s}$} (SpikySignalFinal);
           \node at ($(Neuron2)!.45!(Neuron1Last)$) {\vdots};
		\end{tikzpicture}
	\end{center}
	\caption{Block diagram of the spike encoder. 
    }
	\label{Fig:blockDiagram_ASconvMultipleOutput}
\end{figure}

\noindent\emph{Neuronal dynamics.} 
Given the output component $y_i$, with  $i \in \{1,\dots, n_y\}$, we use  $\xi_{i,1}, \, \xi_{i,2} \in \R_{\geq0}$, to represent the membrane potential of the corresponding two neurons, which we denote neuron $(i,1)$ and neuron $(i,2)$, respectively. Variables $\xi_{i,1}$ and $\xi_{i,2}$ have the continuous-time dynamics 
\begin{eqn}
  \label{eq:thresholdFlow}
  	\dot{\xi}_{i,1} = \max\{0, y_i\}, & & 
	\dot{\xi}_{i,2} =  \max\{0, -y_i\}
\end{eqn}
between two successive spiking transmissions. 
Neurons are sensitive only to nonnegative inputs consistently with  \cite{gerstner2002spiking,petri2025emulation}. 
A spike is generated whenever one of the two neurons membrane potential is larger than or equal to a designed constant firing threshold, which we denote $\Delta_{i,\ell}>0$ with $\ell\in\{1,2\}$, namely when there exists $\ell\in\{1,2\}$ such that
\begin{equation}
\xi_{i,\ell} \geq \Delta_{i,\ell}.
\label{eq:triggeringRule}
\end{equation}
When (\ref{eq:triggeringRule}) holds, the corresponding membrane potential is reset to $0$, while the membrane potential of the other neuron is unchanged
\begin{eqn}
\xi_{i,\ell}^+=0, & & \xi_{i,3-\ell}^+= \xi_{i,3-\ell},
\end{eqn}
unless both membrane potentials satisfy (\ref{eq:triggeringRule}) in which case both are reset to $0$, i.e., $\xi_{i,\ell}^+=0$ and $\xi_{i,3-\ell}^+= 0$.

\smallskip

\noindent\emph{Spiking times.} To formalize the expression of $y_{s,i}$, we first need to define the sequence of spiking time instants generated by neurons $(i,1)$ and $(i,2)$. We have seen that a spike is generated whenever (\ref{eq:triggeringRule}) holds. This implies that each neuron $(i,\ell)$, $\ell\in\{1,2\}$, generates its own  sequence of spiking times $t^{(i,\ell)}_j$, $j\in\N_{\leq J_{(i,\ell)}}$ for some $J_{(i,\ell)}\in\No\cup\{\infty\}$, that is defined as
\begin{eqn}
t_{j+1}^{(i,\ell)}:= \inf\{t> t^{(i,\ell)}_j:\xi_{i,\ell}(t) \geq \Delta_{i,\ell} \}, & & t^{(i,\ell)}_0=0.
	\label{eq:triggeringRuleSpikingTimesEachNeuron}
\end{eqn}
We call $t^{(i,\ell)}_j$, $j\in\N_{\leq J_{(i,\ell)}}$, a sequence of spiking times with some slight abuse of terminology as its first element, namely $t_{0}^{(i,\ell)}$,  may not be a spiking time.  Indeed, when $\xi_{i,\ell}(0)<\Delta_{i,\ell}$, no spike occurs at time  $t^{(i,\ell)}_0 = 0$. On the other hand, when $\xi_{i,\ell}(0)\geq \Delta_{i,\ell}$, $0=t^{(i,\ell)}_{0}=t^{(i,\ell)}_{1}$, which means a spike occurs at the initial time $0$. 
The sequence of spiking times of the two-neuron network for each $i \in \{1,\dots, n_y\}$ is therefore given by 
$\{t^{\emph{(i)}}_j\}_{j\in \N_{\leq J_i}} := \{t^{(i,1)}_{j_1}\}_{j_1\in\N_{\leq J_{(i,1)}}} \cup \{t^{(i,2)}_{j_2}\}_{j_2 \in \N_{\leq  J_{(i,2)}}}$,
with $J_i\leq J_{(i,1)}+J_{(i,2)}$ as the sequences $\{t^{(i,1)}_{j_1}\}_{j_1\in\N_{\leq J_{(i,1)}}}$ and $\{t^{(i,2)}_{j_2}\}_{j_2 \in \N_{\leq  J_{(i,2)}}}$ may share common elements.
It follows that the sequence of spiking times for the overall network depicted in Fig. \ref{Fig:blockDiagram_ASconvMultipleOutput} is  given by $\{t_j\}_{j\in \N_{\leq J}} := \bigcup_{i=1}^{n_y}\{t^{\emph{(i)}}_{j_i}\}_{j_i\in \N_{\leq J_i}}$,
with $J\leq J_1+\ldots+J_{n_y}$, which we can equivalently write as
$t_{j+1} = \inf\big\{t> t_j: \exists (i,\ell)\in\{1,\ldots,n_y\}\times\{1,2\},\xi_{i,\ell}(t) \geq \Delta_{i,\ell} \big\}$ 
with $t_0=0$. 

\noindent\emph{Spiking signal $y_{s}$}. When $\xi_{i,\ell}$ with $\ell\in\{1,2\}$ is such that (\ref{eq:triggeringRule}) holds, a spike of fixed amplitude $\alpha_{i,\ell} \in \R_{> 0}$ is generated as output of the neuron $(i,\ell)$, where $\alpha_{i,\ell}$ is a design parameter. 
Like in  \cite{petri2024analysis, petri2025emulation}, the spikes generated by neuron $(i,2)$ are multiplied by a gain equal to $-1$. This allows to distinguish the spikes generated by neurons $(i,1)$ and $(i,2)$ that are sensitive to the positive and negative part of their signal input $y_i$, respectively. Consequently, $y_{s,i}$ is defined as
\begin{equation}
\begin{aligned}
	y_{s,i} & := \textstyle\sum_{j = 1}^{+\infty} \alpha_{i,1}\delta_{t^{(i,1)}_{j}} - \sum_{j = 1}^{+\infty} \alpha_{i,2}\delta_{t^{(i,2)}_{j}}.
    \end{aligned}
	\label{eq:SpikingMeasurement}
\end{equation}
Recall that $y_s=(y_{s,1},\ldots,y_{s,n_y})$ and  note that $y_s\in \mathcal{D}_{n_y}$. 

\begin{remark}\label{rem:SigmaDelta}
The spike encoder 
reminds of the continuous-time asynchronous $\Sigma-\Delta$ modulators, see, e.g.,  \cite{kikkert1975asynchronous,pavan2017understanding,lazar2004perfect,lazar2004time}. The link between $\Sigma-\Delta$ modulators and integrate-and-fire neurons with a refractory period is studied in \cite{lazar2004time}.  
 However, to the best of the authors' knowledge, formal and general methods for the analysis of feedback control loops with $\Sigma-\Delta$ modulators, as in this work, are lacking in the literature. 
\end{remark}

\subsection{Spike decoder}\label{subsect:spiking-decoder}

To convert the spiking signal  $y_s$ to the analog  signal $\widehat y$ used for control, we propose a spike decoder of the form of a LTI filter, namely  
\begin{eqn}
   \label{eq:filter}
   \dot{x}_\f = A_{\f}  x_{\f} + B_{\f} y_s, & & 
    \widehat y =  C_\f  x_\f,
\end{eqn}
where $x_\f \in \R^{n_\f}$ with $n_\f\in\Np$ is the filter state and $A_\f \in  \R^{n_\f \times n_\f}$, $B_\f \in \R^{n_\f \times n_y}$ and $C_\f \in \R^{n_y \times n_\f}$ are matrices to be designed. System (\ref{eq:filter}) admits a neuronal interpretation. When the matrices $A_\f$ and $B_\f$ are diagonal and $A_\f$ is Hurwitz, each state of (\ref{eq:filter}) filters the corresponding spike train input to produce an analog signal output, thereby reminding of a synaptic filter in neuronal models, see, e.g., \cite[Chapter 3.1]{gerstner2014neuronal}, \cite{roth2009modeling}.

Solutions to the system in \eqref{eq:filter} with spiking signal $y_s$ as input are well-defined using the convolution integral and the sifting property of Dirac pulses. In particular, 
\eqref{eq:filter} evolves according to \begin{eqn}
\dot{x}_\f = A_\f  x_\f
\end{eqn}
between two successive spiking instants. Moreover, at each spiking time $t_j$ for $j\in\No$, $x_\f$ experiences a jump modeled  by the discrete system
\begin{eqn}
x_\f^+ & = & x_\f + B_\f \Gamma(\xi)
\end{eqn}
with $\xi:=(\xi_{1,1},\xi_{1,2},\xi_{2,1},\xi_{2,2},\ldots,\xi_{n_y,1},\xi_{n_y,2})\in\R^{2n_y}$, 
\begin{eqn}\label{eq:Gamma}
\Gamma(\xi) := \left(\begin{smallmatrix}
\alpha_{1,1}\mathbf{1}_{\geq \Delta_{1,1}}(\xi_{1,1})-\alpha_{1,2}\mathbf{1}_{\geq \Delta_{1,2}}(\xi_{1,2}) \\
\vdots\\
\alpha_{n_y,1}\mathbf{1}_{\geq \Delta_{n_y,1}}(\xi_{n_y,1})-\alpha_{n_y,2}\mathbf{1}_{\geq \Delta_{n_y,2}}(\xi_{n_y,2})
\end{smallmatrix}\right).
\end{eqn}
Hence, when a single  neuron $(i,\ell)$ triggers a spike, with $(i,\ell)\in\{1,\ldots,n_y\}\times\{1,2\}$, the update of $x_\f$ writes $x_\f^+  =  x_\f + B_\f \big(0,\ldots, 0, (-1)^{1+\ell}\alpha_{i,\ell}, 0, \ldots, 0\big)$.

\begin{remark}
In Section \ref{subsect:spiking-encoder}, spikes of amplitude $\alpha_{i,\ell}$ are communicated when neuron $(i,\ell)$ reaches its threshold $\Delta_{i,\ell}$, with $i \in \{1,\dots, n_y\}$ and $\ell \in \{1,2\}$. An equivalent model consists in using only $1$-bit communications, where, for instance, one bit equal to $1$ (or $0$) is transmitted on channel $i \in \{1,\dots, n_y\}$, when neuron $(i, 1)$ (or $(i, 2)$) reaches its threshold.  The spike decoder then has to be modified to take into account the information coded in the received bit, i.e., 
$x_\f^+  =  x_\f + B_\f \big(0,\ldots, 0, (-1)^{1+\ell}\alpha_{i,\ell}, 0, \ldots, 0\big)$ where $\ell = 1$, when the received bit is equal to $1$, and $\ell = 2$, when the received bit is equal to $0$.
\end{remark}
\color{black}


\subsection{Controller}\label{Controller}
We consider  output-feedback dynamic controllers of the form 
\begin{eqn}
   \label{eq:controller}
     \dot x_\cont  =  f_\cont(x_\cont,\widehat y), & & 
     u = h_\cont(x_\cont, \widehat y),
\end{eqn}
where $x_c \in \R^{n_c}$ with $n_c\in\No$ is the controller state, $u\in \R^{n_u}$ is the control input to plant \eqref{eq:plant}, and $\widehat y \in \R^{n_y}$ is the input to the controller and corresponds to the output of the spike decoder in (\ref{eq:filter}). The design requirements on (\ref{eq:controller}) are presented  in  Section \ref{subsect:stability-property-Sigma-continuous} and discussed in Section~\ref{sect:Sigma-continuous-iss}. 

\subsection{Closed-loop system}\label{subsect:closed-loop-model}
The overall closed-loop model is given  by\footnote{Equation \eqref{eq:closed-loop} is written with some abuse of notation for space reasons: it uses one $\xi_{i,\ell}$ instead of the vector containing all $\xi_{i,\ell}$ with $i \in \{1,\dots, n_y\}$ and $\ell \in \{1,2\}$. Similar abuse of notation is used later in the paper.}  
%
\begin{eqn}\label{eq:closed-loop}
\left(\begin{matrix}\dot x_\p(t) \\ \dot x_\f(t) \\ \dot x_\cont(t) \\ \dot\xi_{i,\ell}(t)\end{matrix}\right) & = & \left(\begin{matrix} f_\p(x_\p(t),h_\cont(x_\cont(t),C_\f x_\f(t)),d(t))\\
A_\f x_\f(t)\\
f_\cont(x_\cont(t),C_\f x_\f(t)) \\
\max\{0,(-1)^{1+\ell}h_{\p,i}(x_\p(t),w(t))\}\end{matrix}\right) \\
& & \hfill t\in[t_j,t_{j+1}), \\ 
\left(\begin{matrix}x_\p(t_{j+1}^+) \\ x_\f(t_{j+1}^+) \\ x_\cont(t_{j+1}^+) \\ \xi_{i,\ell}(t_{j+1}^+)\end{matrix}\right) & = & \left(\begin{matrix} x_\p(t_{j+1})\\
x_\f(t_{j+1}) + B_\f\Gamma(\xi(t_{j+1}))\\
x_\cont(t_{j+1}) \\
\mathbf{1}_{\leq\Delta_{i,\ell}}\xi_{i,\ell}(t_{j+1})\end{matrix}\right) \\
t_{j+1} & = &  \inf\bigl\{t> t_j: \exists (i,\ell), \xi_{i,\ell}(t) \geq \Delta_{i,\ell} \bigr\}, 
\end{eqn}
with $t_0 = 0$ and $h_{\p,i}$ the $i^{\text{th}}$ component of the map $h_\p$, $i\in\{1,\ldots,n_y\}$. For the sake of convenience, we also write (\ref{eq:closed-loop}) for short as 
\begin{eqn}\label{eq:closed-loop-compact}
\dot\chi(t) & = & f(\chi(t),d(t),w(t)) \quad\quad t\in[t_j,t_{j+1})\\
\chi(t_{j+1}^{+}) & = &  g(\chi(t_{j+1})) & &\\
t_{j+1} & = &  \inf\bigl\{t> t_j: \exists (i,\ell), \xi_{i,\ell}(t) \geq \Delta_{i,\ell} \bigr\},
\end{eqn}
with $t_0 = 0$,  $\chi:=(x_\p,x_\f,x_\cont,\xi)\in\R^{n_\p}\times \R^{n_\f}\times\R^{n_\cont}\times\R^{2n_y}=\R^{n_\chi}$,  $n_\chi:=n_\p+n_\f+n_\cont+2n_y$ and the expressions of $f$ and $g$ follow from (\ref{eq:closed-loop}). By a solution to (\ref{eq:closed-loop-compact}), we mean, given any inputs $d\in\mathcal{L}_{\R^{n_d}}$ and $w\in\mathcal{L}_{\R^{n_w}}$, any function $\phi:[0,t^\star)\to\R^{n_\chi}$ with $t^{\star}\in\Rlp\cup\{\infty\}$ that  is absolutely continuous between any two successive jumps and that verifies $\dot\phi(t)=f(\phi(t),d(t),w(t))$ for almost all 
$t\in[t_j,t_{j+1})$, and $\phi(t_{j+1}^+)=g(\phi(t_{j}))$ for all $j\in\Np$. Also, when we consider a solution, we mean a maximal solution, i.e., a solution whose domain cannot be extended.

As mentioned in Section \ref{sect:problem-statement}, the objective is to ensure a stability property for the closed-loop system (\ref{eq:closed-loop}). For this purpose, we will design the spike amplitudes and the neuron's threshold such that $\alpha_{i,\ell}=\Delta_{i,\ell}$ for any $(i,\ell)\in\{1,\ldots,n_y\}\times\{1,2\}$; noting that the forthcoming results also apply when this is not the case, see  Remark~\ref{rem:NoUnitGain} below. 
To formally state the desired stability property for closed-loop system (\ref{eq:closed-loop}), we introduce the next two sets, given $\alpha:=(\alpha_{1,1},\alpha_{1,2},\alpha_{2,1},\ldots,\alpha_{n_y,2})\in\Rlp^{2n_y}$,
\begin{eqn}\label{eq:set-A}
\mathcal{X}_\alpha & := & \R^{n_\p+n_\f+n_\cont}  \times [0, \alpha_{1,1}] \times \cdots \times [0, \alpha_{n_y,2}]\\
\mathcal{A}_\alpha & := & \{0_{(n_\p+n_\f+n_\cont) \times 1}\}  \times [0, \alpha_{1,1}] \times \cdots \times [0, \alpha_{n_y,2}],
\end{eqn}
where $\mathcal{X}_\alpha$ is used to define the set of initial conditions for the solutions to the closed-loop system (\ref{eq:closed-loop}) and $\mathcal{A}_\alpha$ denotes the attractor set. We can now formalize the design objective. 

\begin{obj}\label{objective} Give design conditions on the spike decoder (\ref{eq:filter}) and controller (\ref{eq:controller}) such that the following holds when taking $\Delta_{i,\ell}=\alpha_{i,\ell}>0$ for any $(i,\ell)\in\{1,\ldots,n_y\}\times\{1,2\}$.
\begin{enumerate}
\item[(i)] Solutions to (\ref{eq:closed-loop}) initialized in $\mathcal{X}_\alpha$ with any inputs $d\in\mathcal{L}_{\R^{d}},w\in\mathcal{L}_{\R^{w}}$ are complete and their sequence of spiking times is in $\mathcal{T}_{\zf}^\infty$.
\item[(ii)] There exist $\beta\in\KL$ and $\gamma\in\Kinf$ such that  for any $\alpha_{i,\ell}>0$,  $(i,\ell)\in\{1,\ldots,n_y\}\times\{1,2\}$, for any solution $\phi$ to (\ref{eq:closed-loop}) initialized in $\mathcal{X}_{\alpha}$ with inputs $d\in\mathcal{L}_{\R^{n_d}}$ and $w\in\mathcal{L}_{\R^{n_w}}$, 
$|\phi(t)|_{\mathcal{A}_\alpha} \leq \max\big\{\beta(|\phi(0)|_{\mathcal{A}_\alpha},t),$ $\gamma(\|d\|_{[0,t]}),\gamma(\|w\|_{[0,t]}),\gamma(|\alpha|)\big\}$ for all $t\geq 0$, with $\mathcal{A}_\alpha$ in~(\ref{eq:set-A}). 
\end{enumerate}
\end{obj}

Objective \ref{objective} states a practical ISS property of the set $\mathcal{A}_\alpha$ in (\ref{eq:set-A}) for the  closed-loop system (\ref{eq:closed-loop}). This property implies that any solution to the closed-loop system (\ref{eq:closed-loop}) initialized in $\mathcal{X}_\alpha$ is complete, Zeno-free and its  components corresponding to $x_\p,x_\f,x_\cont$  converge to a neighborhood of the origin whose ``size'' depends on the norm of the disturbance input $d$ and of the measurement noise $w$, as in standard ISS, as well as  on the norm of the spike amplitude vector $\alpha$, which  can be made as small as desired by tuning the $\alpha_{i,\ell}$'s. Typically, the smaller $|\alpha|$, the more spikes are triggered as illustrated on an example in Section~\ref{subsect:example-simulations}.  The stability property in Objective \ref{objective}  is valid  as long as the membrane potentials $\xi_{i,\ell}$ are initialized in $[0,\Delta_{i,\ell}]$, which we can always enforce and is natural from a neuronal viewpoint.


To ensure Objective \ref{objective}, we do not directly work with system (\ref{eq:closed-loop}). Instead, we introduce auxiliary variables and use these to augment the closed-loop model (\ref{eq:closed-loop}) in the next section. 
We first study properties of the  augmented system, and then ensure that the solution space of this augmented system contains the solution space of the closed-loop system in \eqref{eq:closed-loop}.

\section{Augmenting the closed-loop model}\label{sect:loop-transformations}

We first define the spike-induced error (Section \ref{sec:SpikeInducedError}), 
after that we introduce two auxiliary variables to loosely speaking split the filter dynamics in (\ref{eq:filter}) into two systems (Section~\ref{subsect:splitting-filter}). Based on this, the augmented closed-loop model is derived (Section \ref{subsect:augmented-closed-loop}).

\subsection{Spike-induced error} \label{sec:SpikeInducedError}
We introduce the spike-induced error 
\begin{equation}
    \label{eq:spikeInducedError}
    e:= y - y_s \in \R^{n_y}.
\end{equation}
Variable $e$ is the mismatch between the analog plant output $y$ and its spiking counterpart $y_s$ generated by the spike encoder presented in Section \ref{subsect:spiking-encoder}. 
Given \eqref{eq:spikeInducedError}, the block diagram of the closed-loop system \eqref{eq:closed-loop}  shown in Fig. \ref{fig:closed-loop} can be transformed into the equivalent representation depicted in Fig. \ref{fig:LTIspikyComunication_firstTransformation}, where the spiking nature of the communicated output signal is incorporated in the spike-induced error $e$ only. In this way,  the closed-loop system can be modeled as a purely continuous-time system affected by spiking signal $e$. 
We exploit this observation in the sequel, in particular we model the closed-loop system as the interconnection of a continuous-time system and a spiking system, see Section \ref{sect:stability}. For that, we propose to rewrite the spike decoder  dynamics (\ref{eq:filter}).

\begin{figure}[t]
	\begin{center}
		\tikzstyle{blockB} = [draw, fill=blue!30, rectangle, rounded corners=2pt,
		minimum height=2em, minimum width=3em]  
		\tikzstyle{blockG} = [draw, fill=MyGreen!40, rectangle, rounded corners=2pt,
		minimum height=2em, minimum width=3em]
		\tikzstyle{blockR} = [draw, fill=red!40, rectangle, rounded corners=2pt,
		minimum height=2em, minimum width=3em]
		\tikzstyle{blockO} = [draw, rounded corners=2pt, minimum height=1.5em, fill=orange!20, minimum width=4em]
		\tikzstyle{input} = [coordinate]
		\tikzstyle{blockW} = [draw, rounded corners=2pt, minimum height=1.5em, fill=white!20, minimum width=2.5em]
		\tikzstyle{input1} = [coordinate]
		\tikzstyle{blockCircle} = [draw, circle]
		\tikzstyle{sum} = [draw, circle, minimum size=.3cm]
		\tikzstyle{blockSensor} = [draw, rounded corners=2pt, fill=white!20, draw= blue!80, line width= 0.8mm, minimum height=10em, minimum width=13em]
		
		\begin{tikzpicture}[auto, node distance=2cm,>=latex , scale=0.75,transform shape] 
			
			\node [input, name=stateFeedbackSpiky] {};
            \node [blockW, right of=stateFeedbackSpiky, node distance=3.3cm] (SAconverter) { 
				$\begin{array}{c}
					\text{Spike decoder}
				\end{array}$
			};
			
            \node [input, right of=stateFeedbackSpiky, node distance=1cm] (CircleError){};
            \draw [fill=white] (CircleError) circle (0.3cm);
			
			\node [input, right of= CircleError, node distance=0.3cm] (CircleErrorRight){};
			
			\node [input, left of= CircleError, node distance=0.3cm] (CircleErrorLeft){};
            \node [input, above of= CircleError, node distance=0.3cm] (CircleErrorAbove){};
            \draw [draw, ->] (stateFeedbackSpiky) -- node [pos=0.5]{$y$} (CircleErrorLeft);
            \node [input, above of= CircleError, node distance=1cm] (EmulationError){};
			\draw [draw, ->] (EmulationError) -- node [pos=0.5]{$-e$} (CircleErrorAbove);
            \draw [draw, ->] (CircleErrorRight) -- node [pos=0.5]{} (SAconverter);

			\node [blockW, anchor=west] (Controller) at ([xshift=0.7cm]SAconverter.east) { 
				$\begin{array}{c}
					\text{Controller}
				\end{array}$
			};
			
			\draw [draw,->] (SAconverter) -- node [pos=0.5]{$\widehat{y}$} (Controller);
			
			\node [blockW, anchor=west] (system) at ([xshift=0.7cm]Controller.east) { 
				$\begin{array}{c}
					\text{Plant}
				\end{array}$
			};

            			\node [input, above of= system, node distance=1cm] (PlantDisturbance){};
\draw [draw, ->] (PlantDisturbance) -- node [pos=0.5, right]{$(d,w)$} (system);

			\draw [draw,->] (Controller) -- node [pos=0.5]{$u$} (system);
			

			\node [input, right of= system, node distance=1.6cm] (SystemOutputSpiky){};
			\node [input, right of= system, node distance=1.1cm] (SystemFeedbackSpiky){};
			\node [input, below of= SystemFeedbackSpiky, node distance=0.8cm] (SystemFeedbackSpikyBelow){};
            \node [input, left of= SystemFeedbackSpikyBelow, node distance=3.8cm] (NetworkIn){};
            \node [input, left of= NetworkIn, node distance=3cm] (NetworkOut){};
			\node [input, below of= stateFeedbackSpiky, node distance=0.8cm] (stateFeedbackSpikyBelow){};
			
			\draw [draw, ->] (system) -- node [pos=0.5]{$y$} (SystemOutputSpiky);
			\draw [draw, -] (SystemFeedbackSpiky) --  (SystemFeedbackSpikyBelow);
            \draw [draw, -] (stateFeedbackSpiky) --  (stateFeedbackSpikyBelow);
            \draw [draw,-] (SystemFeedbackSpikyBelow) --  (stateFeedbackSpikyBelow);
            \node [input, right of= Controller, node distance=1.2cm] (ControllerRight){};
            
		\end{tikzpicture}
	\end{center}
	\caption{Closed-loop with spike-induced error.}
	\label{fig:LTIspikyComunication_firstTransformation}
\end{figure}

\subsection{Splitting the spike decoder dynamics}\label{subsect:splitting-filter}
As $y_s = y -e \in \R^{n_y}$,  \eqref{eq:filter} becomes
\begin{eqn}
   \label{eq:filter_with_e}
   \dot{x}_\f = A_\f  x_\f + B_\f (y -e), & & 
    \widehat y =  C_\f  x_\f.
\end{eqn}

We introduce two auxiliary variables denoted $x_{\f,\n}\in\R^{n_\f}$ and $x_{\f,e}\in\R^{n_\f}$, respectively. The idea is to assign dynamics and initial conditions to $x_{\f,\n}$ and $x_{\f,e}$ so that $x_\f=x_{\f,\n}-x_{\f,e}$ and $x_{\f,\n}$ exhibits continuous, non-spiking dynamics, while the spiking dynamics is captured in $x_{\f,e}$. Before giving their dynamics and explaining their meaning, we  highlight that these two variables do \textit{not} need to be  implemented: these are only introduced for the sake of design and analysis. We define the dynamics of  $x_{\f,\n}$ as
\begin{eqn}
   \label{eq:filter-x-f-n}
 \dot{x}_{\f,\n} = A_\f x_{\f,\n} + B_\f y, & & 
\widehat y_{\n} =  C_\f x_{\f,\n}.
\end{eqn}
System (\ref{eq:filter-x-f-n}) has the same dynamics as in (\ref{eq:filter}) except that it is the analog, non-spiking signal $y$ that is being filtered and not the spiking one $y_s$. This is why we use the index subscript ``$\n$'' for ``nominal''. 
We similarly define the dynamics of $x_{\f,e}$ as
\begin{eqn}
   \label{eq:filter-x-f-e}
 \dot{x}_{\f,e} =  A_\f x_{\f,e} + B_\f e, & & 
\widehat y_{e}  =   C_\f x_{\f,e}.
\end{eqn}
This time the spike-induced error $e$ is being filtered in (\ref{eq:filter-x-f-e}), thereby justifying the index subscript ``$e$''. Given the spiking nature of $e$,  recalling that $e = y-y_s$, with $y_s$ defined in \eqref{eq:SpikingMeasurement}, we derive, like in Section~\ref{subsect:spiking-decoder}, that between two successive spiking instants
\begin{eqn}
\dot{x}_{\f,e} & = & A_\f x_{\f,e} + B_\f y 
\end{eqn}
and at each spiking instant, with $\Gamma$ in (\ref{eq:Gamma}), 
\begin{eqn}
x_{\f,e}^+ & = & x_{\f,e} - B_\f \Gamma(\xi).
\end{eqn}

The next lemma  establishes a relation between solutions to  the spike decoder (\ref{eq:filter}) and solutions to the nominal filter (\ref{eq:filter-x-f-n}) and to the spike-induced error filter (\ref{eq:filter-x-f-e}). 

\begin{lemma}\label{lem:relation-between-filters-solutions} Let  $y_s\in \mathcal{D}_{n_y}$ with  spiking times $\{t_{j}\}_{j\in \N_{\leq J}}$ with $J\in\No\cup\{\infty\}$ and  $y\in\mathcal{L}_{\R^{n_y}}$. Any solution  $\phi_\f$ to (\ref{eq:filter}) with input $y_s$  verifies  $\phi_\f=\phi_{\f,\n}-\phi_{\f,e}$ on $[0,t_J)$, where $\phi_{\f,\n}$ is any solution to (\ref{eq:filter-x-f-n}) with input $y$  and $\phi_{\f,e}$ is the solution to (\ref{eq:filter-x-f-e}) with input $e=y-y_s$ and initial condition $\phi_{\f,e}(0)=\phi_{\f,\n}(0)-\phi_{\f}(0)$. 
\end{lemma}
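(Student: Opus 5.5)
We show that the difference $\psi:=\phi_{\f,\n}-\phi_{\f,e}$ solves (\ref{eq:filter}) with input $y_s$ and has the same initial condition as $\phi_\f$. Uniqueness of solutions to the linear filter (\ref{eq:filter}) driven by a spike train then gives $\psi=\phi_\f$ on $[0,t_J)$. Since (\ref{eq:filter}), (\ref{eq:filter-x-f-n}) and (\ref{eq:filter-x-f-e}) are linear, this is essentially a superposition argument. Our plan is to carry it out interval by interval on the spiking-time partition, so that the Dirac pulses are handled through the jump maps rather than through distribution theory.

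\textbf{Initial condition.} By construction, $\psi(0)=\phi_{\f,\n}(0)-\phi_{\f,e}(0)=\phi_{\f,\n}(0)-(\phi_{\f,\n}(0)-\phi_\f(0))=\phi_\f(0)$. If $t_1=0$, i.e.\ a spike occurs at time $0$, we first treat the jump at $t=0$; this is covered by the jump step below.

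\textbf{Flow step.} On each interval $(t_j,t_{j+1})$ we have $y_s=0$, so $e=y$ there. Hence, almost everywhere on that interval,
$$\dot\psi=\dot\phi_{\f,\n}-\dot\phi_{\f,e}=A_\f\psi+B_\f y-B_\f y=A_\f\psi.$$
This is exactly the inter-spike dynamics of (\ref{eq:filter}) described in Section~\ref{subsect:spiking-decoder}.

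\textbf{Jump step.} At each spiking time $t_{j+1}$, $\phi_{\f,\n}$ is absolutely continuous and does not jump. The jump of $\phi_{\f,e}$ is $-B_\f\Gamma(\xi(t_{j+1}))$, and equivalently $-B_\f$ times the amplitude of the Dirac pulse of $y_s$ at $t_{j+1}$. Hence
$$\psi(t_{j+1}^+)=\psi(t_{j+1})+B_\f\Gamma(\xi(t_{j+1})),$$
which matches the jump of $\phi_\f$.

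\textbf{Induction.} Induction on $j$, combined with uniqueness of solutions of the linear ODE $\dot x=A_\f x$ on each interval, yields $\phi_\f=\psi$ on $[t_j,t_{j+1})$ for every $j<J$, and therefore on $[0,t_J)$.

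\textbf{Main obstacle.} The only delicate point is rigour about what a solution with a measure-valued input means. We will fix the convention that solutions to (\ref{eq:filter}) and (\ref{eq:filter-x-f-e}) are given by the variation-of-constants formula
$$x(t)=e^{A_\f t}x(0)+\int_{0}^{t} e^{A_\f(t-s)}B_\f\,v(\mathrm{d}s),$$
with the sifting property of $\delta_{t_k}$ at each spike. Alternatively, the whole argument can be done directly with this formula. Since it is linear in $(x(0),v)$ and $y_s=y-e$, subtracting the formulas for $\phi_{\f,\n}$ and $\phi_{\f,e}$ gives the formula for $\phi_\f$ immediately. The only remaining care is the right-limit convention at spike times, including the possible spike at $t=0$. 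Local essential boundedness of $y$ ensures that the Lebesgue parts of the integrals are well defined. Restricting to $[0,t_J)$ avoids any issue with accumulation of spiking times.
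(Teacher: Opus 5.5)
Your proposal is correct and follows essentially the same route as the paper. You match the initial conditions, show that $\phi_{\f,\n}-\phi_{\f,e}$ obeys $\dot x=A_\f x$ between spikes, check that its jump at each spiking time is $B_\f$ times the spike amplitude of $y_s$ (the same jump as $\phi_\f$), and conclude by induction over the spiking times. Your variation-of-constants remark is a valid superposition shortcut, and expressing the jump through the spike amplitude of $y_s$ rather than through $\Gamma(\xi)$ is the appropriate formulation, since the lemma takes an arbitrary $y_s\in\mathcal{D}_{n_y}$.
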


\begin{proof} Let $y_s\in \mathcal{D}_{n_y}$ and $y\in\mathcal{L}_{\R^{n_y}}$. Consider an arbitrary solution $\phi_\f$ to (\ref{eq:filter}) with input $y_s$, and an arbitrary solution $\phi_{\f,\n}$ to (\ref{eq:filter-x-f-n}) with input $y$. Let $\phi_{\f,e}$ be the solution to (\ref{eq:filter-x-f-e}) as specified in Lemma \ref{lem:relation-between-filters-solutions}.  
The solutions $\phi_\f$,  $\phi_{\f,\n}$ and $\phi_{\f,e}$  are defined on $[0,t_J)$, as the corresponding dynamics are LTI and their inputs are well-defined on  the interval $[0,t_J)$. We  have that for all $t\in[0,t_1)$,  $\dot{\phi}_{\f,\n}(t)-\dot{\phi}_{\f,e}(t)=A_{\f}\phi_{\f,\n}(t)+B_\f y(t) - A_{\f}\phi_{\f,e}(t)-B_\f y(t)=A_\f(\phi_{\f,\n}(t)-\phi_{\f,e}(t))$. Since  $\phi_{\f,\n}(0)-\phi_{\f,e}(0)=\phi_{\f}(0)$ and by (\ref{eq:filter}), we deduce that $\phi_{\f}=\phi_{\f,\n}-\phi_{\f,e}$ on $[0,t_1)$. At $t_1$, a spike occurs for both $\phi_\f$ and $\phi_{\f,e}$ as their inputs, namely $y_s$ and $y-y_s$ exhibit spikes at the same time instants. Consequently, $\phi_{\f,\n}(t_1^+)-\phi_{\f,e}(t_1^+)=\phi_{\f,\n}(t_1)-\phi_{\f,e}(t_1)+B_\f a_1=\phi_\f(t_1)+B_\f a_1$ where $a_1$ denotes the spike amplitude of $y_s$ at time $t_1$. On the other hand, $\phi_\f(t_1^+)=\phi_\f(t_1)+B_\f a_1$ and thus $\phi_{\f,\n}(t_1^+)-\phi_{\f,e}(t_1^+)=\phi_\f(t_1^+)$. By induction, We derive  $\phi_{\f,\n}-\phi_{\f,e}=\phi_\f$ on $[0,t_J)$.
\end{proof}

\subsection{Augmented closed-loop system}\label{subsect:augmented-closed-loop}

Given (\ref{eq:filter-x-f-n}) and (\ref{eq:filter-x-f-e}), we augment the closed-loop system (\ref{eq:closed-loop}) into the  system of larger dimension presented in (\ref{eq:closed-loop-augmented}).
\begin{table*}[t]
{\small\begin{equation}\label{eq:closed-loop-augmented}
\begin{array}{rllllllllllllllll}
\left(\begin{matrix}\dot x_\p(t) \\ \dot x_{\f,\n}(t) \\ \dot x_{\f,e}(t) \\ \dot x_\cont(t) \\ \dot\xi_{i,\ell}(t)\end{matrix}\right) \! & \! =\! & \!\!\left(\begin{matrix} f_\p(x_\p(t),h_\cont(x_\cont(t),C_\f (x_{\f,\n}(t)-x_{\f,e}(t))),d(t))\\
A_\f x_{\f,\n}(t)+B_\f h_\p(x_\p(t),w(t))\\
A_\f x_{\f,e}(t)+B_\f h_\p(x_\p(t),w(t))\\
f_\cont(x_\cont(t),C_\f (x_{\f,\n}(t)-x_{\f,e}(t)) \\
\max\{0,(-1)^{1+\ell}h_{\p,i}(x_\p(t),w(t))\}\end{matrix}\right) 
\, t\in[t_j,t_{j+1}), \,\,
\left(\begin{matrix}x_\p(t_{j+1}^+) \\ x_{\f,\n}(t_{j+1}^+)  \\ x_{\f,e}(t_{j+1}^+)\\ x_\cont(t_{j+1}^+) \\ \xi_{i,\ell}(t_{j+1}^+)\end{matrix}\right) \!&\! =\! &\! \left(\begin{matrix} x_\p(t_{j+1})\\
x_{\f,\n}(t_{j+1}) \\
\!\!x_{\f,e}(t_{j+1}) \!-\! B_\f\Gamma(\xi(t_{j+1}))\!\!\\
x_\cont(t_{j+1}) \\
\mathbf{1}_{\leq\Delta_{i,\ell}}\xi_{i,\ell}(t_{j+1})\!\!\end{matrix}\right)  \\
t_{j+1}\! &\! = & \! \inf\bigl\{t> t_j: \exists (i,\ell)\in\{1,\ldots,n_y\}\times\{1,2\}, \xi_{i,\ell}(t) \geq \Delta_{i,\ell} \bigr\}, \quad
t_0  = 0.
\end{array}
\end{equation}}
\noindent\rule{\textwidth}{0.4pt}
\end{table*}

The next result formalizes the relation between the solutions to (\ref{eq:closed-loop}) and (\ref{eq:closed-loop-augmented}). 

\begin{theorem}
    \label{thm:solutions-closed-loop-models}
Given any $d\in\mathcal{L}_{\R^{n_d}}$ and $w\in\mathcal{L}_{\R^{n_w}}$, consider any solution $\phi:=(\phi_\p,\phi_\f,\phi_\cont,\xi)$ to (\ref{eq:closed-loop}) with inputs $d$ and $w$ and $\phi^{\aug}=(\phi^{\aug}_\p,\phi^{\aug}_{\f,\n},\phi^{\aug}_{\f,e},\phi^{\aug}_\cont,\phi^{\aug}_\xi)$ solution to (\ref{eq:closed-loop-augmented}) with initial condition $\phi^{\aug}_p(0)=\phi_\p(0)$, $\phi^{\aug}_{\f,\n}(0)-\phi^{\aug}_{\f,e}(0)=\phi_{\f}(0)$, $\phi^{\aug}_\cont(0)=\phi_\cont(0)$ and $\phi_\xi^{\aug}(0)=\phi_\xi(0)$ and the same inputs $d$ and $w$. Then $\phi$ and $\phi^{\aug}$ have the same domain, 
the same sequence of spiking times $\{t_{j}\}_{j\in \N_{\leq J}}$ with $J\in\No\cup\{\infty\}$ and   $\phi^{\aug}_p(t)=\phi_\p(t)$, $\phi^{\aug}_{\f,\n}(t)-\phi^{\aug}_{\f,e}(t)=\phi_{\f}(t)$, $\phi^{\aug}_\cont(t)=\phi_\cont(t)$ and $\phi_\xi^{\aug}(t)=\phi_\xi(t)$ for any $t$ in the domain of the solutions.
\end{theorem}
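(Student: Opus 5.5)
The plan is to compare the two given solutions through a projection map, interval by interval between spiking times, in the spirit of the proof of Lemma~\ref{lem:relation-between-filters-solutions}. Given $\phi^{\aug}$, define
\[
\psi:=\big(\phi^{\aug}_\p,\ \phi^{\aug}_{\f,\n}-\phi^{\aug}_{\f,e},\ \phi^{\aug}_\cont,\ \phi^{\aug}_\xi\big),
\]
on the domain of $\phi^{\aug}$. The first step is to show that $\psi$ is a solution to (\ref{eq:closed-loop}) with the same inputs $d,w$ and with $\psi(0)=\phi(0)$.

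To check that $\psi$ is a solution, I will go through the flow, the jump map and the spiking times in turn.
\begin{itemize}
\item On each interval $[t_j,t_{j+1})$ of $\phi^{\aug}$, the terms $B_\f h_\p(x_\p,w)$ cancel in $\dot x_{\f,\n}-\dot x_{\f,e}$. This gives $\frac{d}{dt}(\phi^{\aug}_{\f,\n}-\phi^{\aug}_{\f,e})=A_\f(\phi^{\aug}_{\f,\n}-\phi^{\aug}_{\f,e})$.
\item The right-hand sides for $x_\p$, $x_\cont$ and $\xi$ in (\ref{eq:closed-loop-augmented}) depend on $(x_{\f,\n},x_{\f,e})$ only through $x_{\f,\n}-x_{\f,e}$. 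Hence they coincide with those of (\ref{eq:closed-loop}) evaluated at $\psi$.
\item At a spiking time, $x_{\f,\n}$ is unchanged and $x_{\f,e}$ decreases by $B_\f\Gamma(\xi)$. So the difference increases by $B_\f\Gamma(\xi)$, which is exactly the jump of $x_\f$ in (\ref{eq:closed-loop}). The other components have identical jump maps.
\item The spiking rule depends only on $\xi$, so the sequences of spiking times of $\psi$ and $\phi^{\aug}$ coincide.
\end{itemize}
Absolute continuity between jumps is inherited from $\phi^{\aug}$, so $\psi$ is a solution of (\ref{eq:closed-loop}).

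The second step is to show that $\psi$ and $\phi$ agree on their common domain. I will proceed by induction on the spiking intervals, as in Lemma~\ref{lem:relation-between-filters-solutions}. On $[0,t_1)$, both $\phi$ and $\psi$ solve the same ODE $\dot\chi=f(\chi,d,w)$ from the same initial state, so they coincide there. Both then have the same $\xi(t_1)$, hence the same first spiking time $t_1$ and the same jump value through $g$. Restarting from $t_1^+$, the argument repeats on each subsequent interval. Since the spiking times are determined by $\xi$, which coincides, the two solutions also share the same sequence $\{t_j\}_{j\in\N_{\leq J}}$. This gives $\phi^{\aug}_\p=\phi_\p$, $\phi^{\aug}_{\f,\n}-\phi^{\aug}_{\f,e}=\phi_\f$, $\phi^{\aug}_\cont=\phi_\cont$ and $\phi^{\aug}_\xi=\phi_\xi$ wherever both are defined.

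The third step is equality of the domains, argued by maximality in both directions.
\begin{itemize}
\item If $\mathrm{dom}\,\phi$ were strictly larger than $\mathrm{dom}\,\phi^{\aug}$, I would extend $\phi^{\aug}$. Its components $(x_\p,x_\cont,\xi)$ follow $\phi$. Its component $x_{\f,\n}$ continues as the solution of the linear equation (\ref{eq:filter-x-f-n}), which is driven by the known bounded signal $h_\p(\phi_\p,w)$ and so exists as long as $\phi$ does. Its component $x_{\f,e}$ is set to $x_{\f,\n}-\phi_\f$. This extension is a solution of (\ref{eq:closed-loop-augmented}), contradicting maximality of $\phi^{\aug}$.
\item Conversely, if $\mathrm{dom}\,\phi^{\aug}$ were strictly larger, then $\psi$ would be a solution of (\ref{eq:closed-loop}) extending $\phi$, contradicting maximality of $\phi$.
\end{itemize}

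The main obstacle is the identification step on each flow interval. The statement concerns an \emph{arbitrary} pair $(\phi,\phi^{\aug})$, and with $f_\p,h_\p,f_\cont,h_\cont$ merely continuous, the ODE $\dot\chi=f(\chi,d,w)$ need not have unique solutions. So the induction step ``same initial state implies same solution on $[t_j,t_{j+1})$'' genuinely requires uniqueness. I would first try to make explicit a local Lipschitz condition, or any hypothesis guaranteeing uniqueness of Carathéodory solutions of the flow of (\ref{eq:closed-loop}), as the standing regularity under which the theorem is read, and invoke it at that step. Without such a hypothesis, the honest conclusion is only the correspondence obtained from the first step and the maximality argument, not identity of an arbitrary pair. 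A secondary point to handle carefully is a simultaneous reset of both neurons of a channel. This is covered because the same $\Gamma(\xi)$ enters both jump maps.
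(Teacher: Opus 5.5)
Your proposal is correct and follows essentially the same interval-by-interval induction as the paper: the terms $B_\f h_\p$ cancel in $x_{\f,\n}-x_{\f,e}$, the difference jumps by $B_\f\Gamma(\xi)$, and the spiking times are determined by $\xi$ alone. Your projection $\psi$ and the explicit maximality argument for equal domains make rigorous what the paper compresses into ``by induction''.

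Your uniqueness caveat is legitimate and applies equally to the paper's proof, which infers agreement on $[0,\min\{t_1,t_1^{\aug}\})$ from equal initial data and so silently assumes uniqueness of the flow. With merely continuous $f_\p$, e.g.\ $\dot x_\p=\sqrt{|x_\p|}$ from $x_\p(0)=0$, the statement for an arbitrary pair fails. The lifting/projection correspondence you build holds without uniqueness, and it is all that Theorem~\ref{thm:practical-iss-closed-loop} actually needs.
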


\begin{proof} Let $d\in\mathcal{L}_{\R^{n_d}}$ and $w\in\mathcal{L}_{\R^{n_w}}$, $\phi:=(\phi_\p,\phi_\f,\phi_\cont,\xi)$ be a solution to (\ref{eq:closed-loop}) with inputs $d$ and $w$,  and  sequence of spiking times $\{t_j\}_{j\in\N_{\leq J}}\in\mathcal{T}$ with $J\in\No\cup\{\infty\}$. Let  $\phi^{\aug}=(\phi^{\aug}_\p,\phi^{\aug}_{\f,\n},\phi^{\aug}_{\f,e},\phi^{\aug}_\cont,\phi_{\xi}^{\aug})$  be the solution to (\ref{eq:closed-loop-augmented}) as specified in Theorem  \ref{thm:solutions-closed-loop-models}. We denote the sequence of spiking times of $\phi^\aug$, $\{t^\aug_j\}_{j\in\N_{\leq J^\aug}}\in\mathcal{T}$ with $J^\aug\in\No\cup\{\infty\}$. Given the initial condition of $\phi^\aug$, the fact that both $\phi$ and $\phi^\aug$ are subject to the same inputs $d$ and $w$ and the definitions of systems (\ref{eq:closed-loop}) and (\ref{eq:closed-loop-augmented}),  we derive that $\phi^{\aug}_p=\phi_\p$, $\phi^{\aug}_{\f,\n}-\phi^{\aug}_{\f,e}=\phi_{\f}$, $\phi^{\aug}_\cont=\phi_\cont$ and $\phi_\xi^{\aug}=\phi_\xi$ on $[0,\min\{t_1,t_1^\aug\})$. The fact that $\phi_\xi^{\aug}=\phi_\xi$ on $[0,\min\{t_1,t_1^\aug\})$ implies that $t_1=t_1^\aug$. This implies that  $\phi_j(t_1^+)=\phi_j(t_1)=\phi_j^\aug(t_1)=\phi_j^\aug(t_1^+)$ for $j\in\{\p,\cont\}$. On the other hand, $\phi^\aug_{\f,\n}(t_1^+)-\phi^\aug_{\f,e}(t_1^+)=\phi^\aug_{\f,\n}(t_1)-\phi^\aug_{\f,e}(t_1)+B_\f \Gamma(\phi^\aug_\xi(t_1))=\phi_\f(t_1)+B_\f \Gamma(\phi^\aug_\xi(t_1))=\phi_\f(t_1^+)$ in view of (\ref{eq:closed-loop}).  By induction, we derive that $\phi$ and $\phi^\aug$ are defined on the same domain, that they have the same spiking time sequences and the relation on their values established on $[0,t_1]$ above applies on the whole domain.
\end{proof}


\section{Closed-loop system stability}\label{sect:stability}

In this section, we address Objective \ref{objective}. 
For this purpose, we first concentrate on the augmented system (\ref{eq:closed-loop-augmented}). We interpret system (\ref{eq:closed-loop-augmented}) as the feedback interconnection of two subsystems (Section \ref{subsect:sys-interconnection}), similarly to what is done in other hybrid contexts, see, e.g.,  \cite{Nesic_Teel_netw_TAC04,maass-et-al-tac-eyes-event}. Stability properties for each of these subsystems are then either established or assumed (Sections \ref{subsect:stability-property-Sigma-s} and \ref{subsect:stability-property-Sigma-continuous}). Afterwards, we  derive  properties for the augmented closed-loop system (\ref{eq:closed-loop-augmented}), and  we  show how these translate to the fulfillment of Objective \ref{objective} for the original closed-loop  system (\ref{eq:closed-loop}) by leveraging Theorem  \ref{thm:solutions-closed-loop-models} (Section~\ref{subsect:stability-properties-closed-loop}).

\subsection{The augmented closed-loop model as a feedback interconnection}\label{subsect:sys-interconnection}

We interpret the augmented closed-loop model  (\ref{eq:closed-loop-augmented}) as the feedback interconnection of the $(x_\p,x_{\f,\n},x_\cont)$-system, called $\Sigma_{\continuous}$, with the $(x_{\f,e},\xi)$-system, denoted $\Sigma_s$, see Fig. \ref{fig:feedback-interconnection}. System $\Sigma_\continuous$ exhibits only continuous-time dynamics, thereby justifying the index subscript ``cont'' for ``continuous'', and corresponds to the closed-loop dynamics in absence of the spike-induced error, namely 
\vspace{-0.8cm}
\begingroup
\small\begin{eqn}\label{eq:sys-Sigma-continuous}
\Sigma_\continuous:\left\{\begin{array}{rllll}
\!\left(\begin{matrix}\!\dot x_\p \!\\\! \dot x_{\f,\n}\!  \\\! \dot x_\cont\! \end{matrix}\right) \!&\! = \!&\! \left(\begin{matrix} \!f_\p(x_\p,h_\cont(x_\cont,C_\f x_{\f,\n}\!-\! \widehat{y}_{e}),d)\!\\
A_\f x_{\f,\n}+B_\f h_\p(x_\p,w)\\
f_\cont(x_\cont,C_\f x_{\f,\n}-\widehat{y}_{e}) \end{matrix}\right)\\
\!y \!& = &\! h_\p(x_\p,w).
\end{array}\right.
\end{eqn}
\endgroup
\noindent The inputs to $\Sigma_\continuous$ are $d$, $w$ and $\widehat{y}_e=C_\f x_{\f,e}$, while the output is $y$. On the other hand, system $\Sigma_s$ exhibits spikes, which justifies the use of index subscript ``$s$'' for ``spike''. System $\Sigma_s$ is given by 
\vspace{-0.8cm}
\begingroup
\small\begin{eqn}\label{eq:sys-Sigma-s}
\Sigma_s & : & \left\{\begin{array}{rllllll}\left(\begin{matrix} \dot x_{\f,e}(t) \\  \dot\xi_{i,\ell}(t)\end{matrix}\right) & = & \left(\begin{matrix}
A_\f x_{\f,e}(t)+B_\f y(t)\\
\max\{0,(-1)^{1+\ell}y_i(t)\}\end{matrix}\right)\\ 
& & \hfill t\in[t_j,t_{j+1})\\ 
\left(\begin{matrix} x_{\f,e}(t_{j+1}^+)\\  \xi_{i,\ell}(t_{j+1}^+)\end{matrix}\right) & = & \left(\begin{matrix} x_{\f,e}(t_{j+1}) - B_\f\Gamma(\xi(t_{j+1}))\\
\mathbf{1}_{\leq\Delta_{i,\ell}}\xi_{i,\ell}(t_{j+1})\end{matrix}\right)  \\
\widehat{y}_e & = & C_\f x_{\f,e}\\
t_{j+1} & = &  \inf\bigl\{t> t_j: \exists (i,\ell), \xi_{i,\ell}(t) \geq \Delta_{i,\ell} \bigr\}\\ 
t_0 & = & 0.
\end{array}\right.
\end{eqn}
\endgroup
The input  to $\Sigma_s$ is $y$ and the output is $\widehat{y}_e$. We explain next how to select the neuron parameters $\alpha_{i,\ell}$ and $\Delta_{i,\ell}$ of the spike encoder in Section \ref{subsect:spiking-encoder} and the parameters of the  spike decoder in Section \ref{subsect:spiking-decoder} so that system $\Sigma_s$ exhibits a  practical stability property.

\begin{figure}
\begin{center}
\begin{tikzpicture}[
  block/.style = {draw, rectangle, rounded corners=2pt,
                  minimum width=2cm, minimum height=0.5cm,
                  text width=4cm, align=center, font=\footnotesize, scale = 0.85},
  >=Stealth]
 
\node[block] (top) at (0,0.8) {$\Sigma_\continuous\,:\,(x_{\p},x_{\f,\n},x_\cont)$-system};
\node[block] (bot) at (0,0)   {$\Sigma_s\,:\,(x_{\f,e},\xi)$-system};
 
\coordinate (rTop) at ($(top.east)+(0.5,-0)$);
\coordinate (rBot) at ($(bot.east)+(0.5,0)$);
\draw[->] (bot.east) -- (rBot) -- (rTop) -- (top.east);
\node[font=\footnotesize] at ($(rTop)!0.5!(rBot)$) [right=1pt] {$\widehat y_{e}$};
 
\coordinate (lTop) at ($(top.west)+(-0.5,0)$);
\coordinate (lBot) at ($(bot.west)+(-0.5,0)$);
\draw[->] (top.west) -- (lTop) -- (lBot) -- (bot.west);
\node[font=\footnotesize] at ($(lTop)!0.5!(lBot)$) [left=1pt] {$y$};
\coordinate (dTop) at ($(top.north)+(0,0.5)$);
\draw[->] (dTop) -- (top.north);
\node[font=\footnotesize] at ($(dTop)!0.5!(top.north)$) [right=0.5pt] {$(d,w)$};
 \end{tikzpicture}
\end{center}
\caption{Augmented closed-loop system (\ref{eq:closed-loop-augmented}) as the feedback interconnection of $\Sigma_\continuous$ in (\ref{eq:sys-Sigma-continuous}) with $\Sigma_s$ in (\ref{eq:sys-Sigma-s}).}\label{fig:feedback-interconnection}
\end{figure}

\subsection{Stability property of $\Sigma_s$}\label{subsect:stability-property-Sigma-s}

We establish a practical stability property for system $\Sigma_s$ in \eqref{eq:sys-Sigma-s} in the next proposition. 
For this purpose we  define the sets, given $\alpha=(\alpha_{1,1},\alpha_{1,2},\alpha_{2,1},\ldots,\alpha_{n_y,2})\in\R^{2n_y}$, 
\begin{eqn}\label{eq:As_set}
\mathcal{X}_{\alpha}^{s} & := &  \R^{n_\f}\times [0, \alpha_{1,1}]\times \dots \times [0, \alpha_{n_y,2}]\\
    \mathcal{A}_{\alpha}^{s} & := &  \{0_{n_\f\times1}\} \times [0, \alpha_{1,1}]\times \dots \times [0, \alpha_{n_y,2}].
\end{eqn}
Set $\mathcal{X}_\alpha^s$ is used to define the considered set of initial conditions for the solutions to $\Sigma_s$ and $\mathcal{A}_\alpha^s$ denotes the attractor set in the next proposition.

\begin{propos}\label{thm:practical-iss-Sigma-s} 
Consider system \eqref{eq:sys-Sigma-s} with $A_\f$ Hurwitz and $\alpha_{i,\ell} = \Delta_{i,\ell}$ for all $(i,\ell) \in \{1, \dots, n_y\}\times \{1,2\}$. The following holds.
\begin{enumerate}[label=(\roman*)]
\item Given any $y \in \mathcal{L}_{\R^{n_y}}$, all solutions initialized in $\mathcal{X}_{\alpha}^{s}$ in (\ref{eq:As_set}) with input $y$, are complete and their sequences of spiking times belong to $\mathcal{T}_{\zf}^\infty$.
\item There exist $\beta_s \in \exp$-$\KL$ and $\tilde{\gamma}_s\in\Rlp$ such that for all $y \in \mathcal{L}_{\R^{n_y}}$, all solutions $\phi_s$   initialized  in $\mathcal{X}_{\alpha}^{s}$ with input $y$  satisfy
$|\phi_{s}(t)|_{\mathcal{A}_{\alpha}^{s}} \leq \max\{ \beta_s(|\phi_{s}(0)|_{\mathcal{A}_{\alpha}^{s}},t), \tilde{\gamma}_s|\alpha|\}$
for all $t\geq 0$, with $\mathcal{A}_\alpha^s$ in (\ref{eq:As_set}).
\end{enumerate}
\vspace{-0.3cm}
 
\end{propos}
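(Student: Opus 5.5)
The plan is to rewrite $x_{\f,e}$ as a filtered version of a bounded error signal and then apply variation of constants. Let $\eta_{i,\ell}$ denote the signed membrane potential. The key observation is that, with $\alpha_{i,\ell}=\Delta_{i,\ell}$, the quantity $z:=x_{\f,e}-B_\f\,\zeta(\xi)$ with $\zeta(\xi):=(\xi_{1,1}-\xi_{1,2},\ldots,\xi_{n_y,1}-\xi_{n_y,2})$ experiences no jump at spiking times. Indeed, at a spike of neuron $(i,\ell)$ the potential drops from $\Delta_{i,\ell}=\alpha_{i,\ell}$ to $0$, so $\zeta_i$ changes by $-(-1)^{1+\ell}\alpha_{i,\ell}$. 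Meanwhile $x_{\f,e}$ changes by $-B_\f(-1)^{1+\ell}\alpha_{i,\ell}e_i$, and the two cancel in $z$. The simultaneous-reset case is handled the same way componentwise.

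Between spikes, since $\max\{0,y_i\}-\max\{0,-y_i\}=y_i$, we get $\dot\zeta=y$. Hence
\[
\dot z=A_\f x_{\f,e}+B_\f y-B_\f y=A_\f z+A_\f B_\f\zeta(\xi).
\]
So $z$ is an absolutely continuous solution of an LTI system driven by $\zeta(\xi(t))$. From $\xi\in[0,\alpha]$ (forward invariance of $\mathcal{X}_\alpha^s$ in the $\xi$-components follows from the flow being nonnegative and the resets), we get $|\zeta(\xi(t))|\le|\alpha|$.

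\emph{Completeness and Zeno-freeness (item (i)).} The flow is globally Lipschitz-affine in $x_{\f,e}$ for a locally essentially bounded $y$, so there is no finite escape. On any compact interval $[0,T]$, neuron $(i,\ell)$ needs to integrate $\max\{0,\pm y_i\}$ by an amount $\Delta_{i,\ell}$ between consecutive spikes. Its number of spikes is therefore at most $1+\int_0^T|y_i|\,ds/\Delta_{i,\ell}<\infty$; the extra $1$ accounts for a possible spike at $0$ when $\xi_{i,\ell}(0)=\Delta_{i,\ell}$. Thus the number of spikes on $[0,T]$ is finite, spiking times are in $\mathcal{T}^\infty_{\zf}$, and the solution extends to $[0,\infty)$. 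The only subtlety is that the infimum definition of spiking times with the reset to $0<\Delta$ prevents repeated spikes at the same instant; I will verify this carefully.

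\emph{The bound (item (ii)).} Since $A_\f$ is Hurwitz, take $k\ge1$ and $\lambda>0$ with $\|e^{A_\f t}\|\le k e^{-\lambda t}$. Variation of constants gives
\[
|z(t)|\le k e^{-\lambda t}|z(0)|+\tfrac{k}{\lambda}\|A_\f B_\f\|\,|\alpha|.
\]
Then $|x_{\f,e}(t)|\le |z(t)|+\|B_\f\||\alpha|$ and $|z(0)|\le|x_{\f,e}(0)|+\|B_\f\||\alpha|$. Since $|\phi_s|_{\mathcal{A}^s_\alpha}=|x_{\f,e}|$ on $\mathcal{X}^s_\alpha$, we obtain $|\phi_s(t)|_{\mathcal{A}^s_\alpha}\le k e^{-\lambda t}|\phi_s(0)|_{\mathcal{A}^s_\alpha}+c|\alpha|$, with $c=k\|B_\f\|+\|B_\f\|+k\|A_\f B_\f\|/\lambda$. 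Converting the sum into a max via $a+b\le\max\{2a,2b\}$ gives $\beta_s(s,t)=2k e^{-\lambda t}s$, which is in $\exp$-$\KL$ with $c_0=e^{-\lambda}$, together with $\tilde\gamma_s=2c$. The bound is independent of $y$, as required.

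The main obstacle is the careful bookkeeping of the jump cancellation, in particular the sign conventions and the simultaneous-reset case. The remaining steps are routine.
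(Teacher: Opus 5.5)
Your proposal is correct, and it reaches the result by a different route than the paper.

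\textbf{The paper's route.} The paper proves (i) by invoking Proposition~\ref{Prop:InterSpikingTime}, an adaptation of the Zeno-freeness result of \cite{petri2025emulation}. It proves (ii) by combining two results imported from that reference. Proposition~\ref{Prop:UniversalApproximationMultiple} bounds $\|e\|_\star=\sup_t|\int_0^t e|$ by $2\sqrt{n_y}|\alpha|$. Theorem~\ref{Thm:SISS_generalized} is a practical spiking-ISS estimate for Hurwitz LTI systems driven by inputs in $\mathcal{S}_{n_v}$. It is obtained by integrating the convolution integral by parts, and it carries an extra term $c_z\|A_0\|$ for a possible spike at $t=0$.

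\textbf{Your route.} Your change of coordinates $z=x_{\f,e}-B_\f\zeta(\xi)$ is the state-space version of that integration by parts. The encoder state $\zeta(\xi)$ equals the running integral $\int_0^t e$ up to the additive constant $\zeta(\xi(0))$. Subtracting $B_\f\zeta$ therefore absorbs every spike, including one at $t=0$. What remains is a jump-free LTI system with bounded forcing $A_\f B_\f\zeta$.

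\textbf{What your approach buys.} The argument is self-contained: you do not need to patch the proofs of \cite{petri2025emulation} to allow initial spikes. The constants are explicit, and the bound is slightly tighter, since you use $|\zeta|\le|\alpha|$ rather than $2\sqrt{n_y}|\alpha|$. Your direct spike count $1+\int_0^T|y_i|\,ds/\Delta_{i,\ell}$ likewise replaces the appeal to Proposition~\ref{Prop:InterSpikingTime}.

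\textbf{What the paper's approach buys.} It is modular. Theorem~\ref{Thm:SISS_generalized} applies to any spiking input with finite $\|\cdot\|_\star$ norm, regardless of how that input is generated. Your cancellation (the jump of $\zeta$ equals $-\Gamma(\xi)$) relies on the specific reset structure. It needs potentials to hit their thresholds exactly, which follows from initialization in $[0,\Delta_{i,\ell}]$ and continuity of the flow. It also needs $\alpha_{i,\ell}=\Delta_{i,\ell}$, or a common ratio $\alpha_{i,\ell}/\Delta_{i,\ell}$ after rescaling $\zeta$.

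\textbf{Two small points.} First, absolute continuity of $z$ on $[0,T]$ uses the finiteness of spikes from (i), so (i) must be established before the variation-of-constants step. Second, if $B_\f=0$ your constant $c$ vanishes, so you should take any positive $\tilde\gamma_s$ to meet the requirement $\tilde\gamma_s\in\Rlp$.
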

The proof of Proposition \ref{thm:practical-iss-Sigma-s} is given in Appendix \ref{Appendix:ProofThmBoundOnSpikyFilterOutput}. In terms of design requirements, Proposition \ref{thm:practical-iss-Sigma-s} only asks for $\alpha_{i,\ell}=\Delta_{i,\ell}$ for all $(i,\ell)\in\{1,\ldots,n_y\}\times\{1,2\}$, as already commented (see also Remark \ref{rem:NoUnitGain} below) and $A_\f$ to be Hurwitz.  Proposition \ref{thm:practical-iss-Sigma-s}\emph{(i)} establishes that the neuronal architecture does not exhibit Zeno; it can actually be shown that there exists a semi-global dwell-time for each sequence of spikes for each neuron, when the infinity norm of $y$ is bounded, which is the case under the conditions of Theorem \ref{thm:practical-iss-closed-loop} in Section~\ref{subsect:stability-properties-closed-loop}. 
Proposition \ref{thm:practical-iss-Sigma-s}\textit{(ii)} guarantees that  
a practical stability property for system (\ref{eq:sys-Sigma-s}). 
The upper-bound in the inequality in  Proposition~\ref{thm:practical-iss-Sigma-s}\emph{(ii)} is independent of  input $y$, which is why it is possible to invoke cascade arguments in the sequel.

\begin{remark}\label{rem:NoUnitGain}
    {In Proposition \ref{thm:practical-iss-Sigma-s}, and in the following results, we select $\alpha_{i,\ell} = \Delta_{i,\ell}$ for all $i \in \{1, \dots, n_y\}$ and $\ell \in \{1,2\}$ to simplify notations. Indeed, if $\alpha_{i,\ell} \neq \Delta_{i,\ell}$ and $\frac{\alpha_{i,1}}{\Delta_{i,1}} = \frac{\alpha_{i,2}}{\Delta_{i,2}}$, then the spike encoder, in addition to convert the continuous signal $y_{i}$ into the spiking signal $y_{s,i}$, it will scale it with a gain given by $\frac{\alpha_{i,\ell}}{\Delta_{i,\ell}}$, see \cite[Theorem~1]{petri2025emulation}; the results in this paper follow \emph{mutatis mutandi}.}
\end{remark}

\subsection{Stability property of $\Sigma_\continuous$}\label{subsect:stability-property-Sigma-continuous}

We assume that controller (\ref{eq:controller}) and the spike decoder  in (\ref{eq:filter}) are designed to ensure the next property.

\begin{ass}\label{ass:iss-Sigma-continuous} System $\Sigma_\continuous$ is input-to-state stable with respect to inputs $d$, $w$ and $\widehat{y}_{e}$. 
\end{ass}

Assumption \ref{ass:iss-Sigma-continuous} means that the controller in (\ref{eq:controller}) is designed to ensure an ISS property for the closed-loop system it forms with the plant model (\ref{eq:plant}) and the nominal filter in (\ref{eq:filter-x-f-n}), i.e., when the noisy analog output $y$ is \emph{continuously} communicated to the filter. We elaborate on Assumption \ref{ass:iss-Sigma-continuous} in Section \ref{sect:Sigma-continuous-iss}.


\subsection{Stability properties for the closed-loop models}\label{subsect:stability-properties-closed-loop}
We introduce the next two sets, for given $\alpha\in\R^{2n_y}$,
\begin{eqn}\label{eq:X-aug-A-aug}
\mathcal{X}_{\alpha}^{\aug} & := &  \R^{n_\p +2n_\f+n_\cont}\times  [0, \alpha_{1,1}]\times \dots \times [0, \alpha_{n_y,2}]\\
    \mathcal{A}_{\alpha}^{\aug} & := &  \{0_{(n_\p +2n_\f+n_\cont)\times 1}\} \times [0, \alpha_{1,1}]\times \dots \times [0, \alpha_{n_y,2}].
\end{eqn}
Like in (\ref{eq:set-A}) and in Section \ref{subsect:stability-property-Sigma-s}, $\mathcal{X}_\alpha^\aug$ is used to define the considered set of initial conditions for the solutions to the augmented closed-loop system in (\ref{eq:closed-loop-augmented}) and $\mathcal{A}_\alpha^\aug$ defines the attractor of interest. Given Proposition \ref{thm:practical-iss-Sigma-s} and Assumption \ref{ass:iss-Sigma-continuous}, we derive the next properties for the augmented closed-loop system in (\ref{eq:closed-loop-augmented}). The proof is given in Appendix \ref{appendix:proof-prop-stability-augmented-closed-loop}.

\begin{propos}
\label{prop:stability-augmented-closed-loop} Consider the augmented closed-loop system in (\ref{eq:closed-loop-augmented}) with $A_\f$ Hurwitz, $\alpha_{i,\ell}=\Delta_{i,\ell}>0$ for all $(i,\ell)\in\{1,\ldots,n_y\}\times\{1,2\}$ and suppose Assumption \ref{ass:iss-Sigma-continuous} is satisfied. Then the following holds.
\begin{enumerate}[label=(\roman*)]
\item Given any $d\in\mathcal{L}_{\R^{n_d}}$ and $w\in\mathcal{L}_{\R^{n_w}}$, any solution initialized  in $\mathcal{X}_{\alpha}^{\aug}$ in (\ref{eq:X-aug-A-aug}) with  $\phi^\aug_{\f,e}(0)=0$ and inputs $d$ and $w$, is complete and its sequence of spiking times belongs to $\mathcal{T}_{\zf}^\infty$.
\item There exist $\beta_\aug\in\KL$ and $\gamma_\aug\in\Kinf$ such that for any  $d\in\mathcal{L}_{\R^{n_d}}$ and $w\in\mathcal{L}_{\R^{n_w}}$, any  solution $\phi^\aug=(\phi^\aug_\p,\phi^\aug_{\f,\n},\phi^\aug_{\f,e},\phi^\aug_\cont,\phi^\aug_\xi)$ initialized in $\mathcal{X}_{\alpha}^{\aug}$ with $\phi^\aug_{\f,e}(0)=0$ and inputs $d$ and $w$ verifies 
$|\phi^\aug(t)|_{\mathcal{A}_{\alpha}^{\aug}} \leq \max\big\{\beta_\aug(|\phi^\aug(0)|_{\mathcal{A}_{\alpha}^{\aug}},t), $ $
\gamma_\aug(\|d\|_{[0,t]}),\gamma_\aug(\|w\|_{[0,t]}),\gamma_\aug(|\alpha|)\big\}$
for all $t\geq 0$, with $\mathcal{A}^{\aug}_{\alpha}$ in (\ref{eq:X-aug-A-aug}).
\end{enumerate}
\vspace{-0.3cm}
\end{propos}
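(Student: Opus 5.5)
The plan is a cascade argument. Proposition~\ref{thm:practical-iss-Sigma-s} gives a bound on $\Sigma_s$ that does not depend on its input $y$, and Assumption~\ref{ass:iss-Sigma-continuous} gives ISS of $\Sigma_\continuous$ with respect to $\widehat y_e$. So the feedback loop of Fig.~\ref{fig:feedback-interconnection} behaves like a cascade, and no small-gain condition is needed.

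Fix $d,w$ and a maximal solution $\phi^\aug$ initialized in $\mathcal{X}^\aug_\alpha$ with $\phi^\aug_{\f,e}(0)=0$, defined on $[0,t^\star)$. Along this solution, the output $y(t)=h_\p(\phi^\aug_\p(t),w(t))$ is measurable and locally essentially bounded on $[0,t^\star)$. Extend it arbitrarily, for instance by zero, beyond $t^\star$, so that it belongs to $\mathcal{L}_{\R^{n_y}}$. Then $(\phi^\aug_{\f,e},\phi^\aug_\xi)$ is a solution to $\Sigma_s$ with this input, initialized in $\mathcal{X}^s_\alpha$.

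\textbf{Step 1: bound on $\Sigma_s$.} By Proposition~\ref{thm:practical-iss-Sigma-s}(ii), and since $\phi^\aug_{\f,e}(0)=0$ gives $|\phi_s(0)|_{\mathcal{A}^s_\alpha}=0$, we get $|\phi^\aug_{\f,e}(t)|\le \tilde\gamma_s|\alpha|$ on $[0,t^\star)$. Hence $|\widehat y_e(t)|\le \|C_\f\|\tilde\gamma_s|\alpha|$.

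\textbf{Step 2: bound on $\Sigma_\continuous$.} The $(x_\p,x_{\f,\n},x_\cont)$ component is a solution of $\Sigma_\continuous$ with inputs $d,w,\widehat y_e$. By Assumption~\ref{ass:iss-Sigma-continuous}, with $\beta_c,\gamma_d,\gamma_w,\gamma_e$ the associated functions,
\[
|(\phi^\aug_\p,\phi^\aug_{\f,\n},\phi^\aug_\cont)(t)|\le\max\{\beta_c(|(\phi^\aug_\p,\phi^\aug_{\f,\n},\phi^\aug_\cont)(0)|,t),\gamma_d(\|d\|_{[0,t]}),\gamma_w(\|w\|_{[0,t]}),\gamma_e(\|C_\f\|\tilde\gamma_s|\alpha|)\}.
\]

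\textbf{Step 3: completeness and Zeno-freeness (part (i)).} This is the main obstacle, because completeness of $\Sigma_s$ in Proposition~\ref{thm:practical-iss-Sigma-s}(i) presumes an input defined on all of $\R_{\ge0}$, while $y$ comes from the closed loop. Suppose $t^\star<\infty$. Steps 1--2 give boundedness of the whole state on $[0,t^\star)$: the $\xi$ components stay in $[0,\alpha_{i,\ell}]$ by the reset rule, and the remaining components are bounded by the two previous steps. Since $\|w\|_{[0,t^\star]}$ is finite, $y$ is essentially bounded on $[0,t^\star)$. Two mechanisms could make the solution maximal at a finite $t^\star$:
\begin{itemize}
\item finite escape in the flow, which is ruled out by boundedness together with continuity of $f_\p,f_\cont,h_\p,h_\cont$ and local boundedness of $d,w$;
\item accumulation of spiking times at $t^\star$. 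Each $\xi_{i,\ell}$ grows at rate at most $\|y\|_{[0,t^\star]}$ and must travel from $0$ to $\Delta_{i,\ell}>0$ between two spikes of neuron $(i,\ell)$. This gives a dwell time of $\Delta_{i,\ell}/\|y\|_{[0,t^\star]}$, so only finitely many spikes occur before $t^\star$. Alternatively, one applies Proposition~\ref{thm:practical-iss-Sigma-s}(i) to the extended input.
\end{itemize}
In both cases the solution could be extended beyond $t^\star$, a contradiction, so $t^\star=\infty$. The same dwell-time estimate on every compact interval shows that the spiking sequence lies in $\mathcal{T}^\infty_{\zf}$.

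\textbf{Step 4: combine into part (ii).} Since $\mathcal{A}^\aug_\alpha$ only constrains $(x_\p,x_{\f,\n},x_{\f,e},x_\cont)$, and $\xi\in[0,\alpha]$ always holds, we have
\[
|\phi^\aug|_{\mathcal{A}^\aug_\alpha}=|(\phi^\aug_\p,\phi^\aug_{\f,\n},\phi^\aug_{\f,e},\phi^\aug_\cont)|\le |(\phi^\aug_\p,\phi^\aug_{\f,\n},\phi^\aug_\cont)|+|\phi^\aug_{\f,e}|.
\]
Using $a+b\le\max\{2a,2b\}$ and $|(\phi^\aug_\p,\phi^\aug_{\f,\n},\phi^\aug_\cont)(0)|\le|\phi^\aug(0)|_{\mathcal{A}^\aug_\alpha}$, the bound in (ii) follows with
\[
\beta_\aug:=2\beta_c,\qquad \gamma_\aug:=2\max\{\gamma_d,\gamma_w,\gamma_e(\|C_\f\|\tilde\gamma_s\,\cdot),\tilde\gamma_s\,\id\}.
\]
If needed, $\gamma_\aug$ is upper-bounded by a $\Kinf$ function, and $\beta_\aug,\gamma_\aug$ are independent of $\alpha$, $d$, $w$ and the solution.
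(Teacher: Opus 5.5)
Your proposal is correct and follows the paper's proof. The $y$-independent bound of Proposition~\ref{thm:practical-iss-Sigma-s}(ii) on $\phi^\aug_{\f,e}$ is fed into the ISS estimate of Assumption~\ref{ass:iss-Sigma-continuous} for $\Sigma_\continuous$, boundedness on $[0,t^\star)$ then rules out finite escape, and Zeno is excluded once $y$ is known to be bounded. Your direct dwell-time bound $\Delta_{i,\ell}/\|y\|_{[0,t^\star]}$ is an elementary version of the paper's appeal to Proposition~\ref{Prop:InterSpikingTime}, and noting that the $\beta_s$ term vanishes because $\phi^\aug_{\f,e}(0)=0$ slightly sharpens the paper's constants. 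One thing to tidy in Step~1: extending $y$ by zero past $t^\star$ gives an element of $\mathcal{L}_{\R^{n_y}}$ only after boundedness on $[0,t^\star)$ is known, which is circular there, so truncate instead at an arbitrary $T<t^\star$; this costs nothing because the bound of Proposition~\ref{thm:practical-iss-Sigma-s}(ii) does not depend on $y$.
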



Proposition \ref{prop:stability-augmented-closed-loop}\emph{(ii)} ensures that solutions to the augmented closed-loop system in (\ref{eq:closed-loop-augmented}) are complete and Zeno free and that the system exhibits a practical stability-like property. 
We talk of a stability-like property because it only holds for a set of initial conditions of Lebesgue measure zero as $\phi_{\f,e}$ has to be equal to $0$, thereby contradicting the rationale behind Lyapunov stability. Nevertheless, we recall that the variable $x_{\f,e}$ is only introduced for analysis purpose. Moreover,  we show in the next result that the desired  practical ISS property for the actual closed-loop model (\ref{eq:closed-loop}) can be ensured and more generally that Objective \ref{objective} is fulfilled under Assumption \ref{ass:iss-Sigma-continuous}. 

\begin{theorem}\label{thm:practical-iss-closed-loop} Consider the closed-loop system in (\ref{eq:closed-loop}) with $A_\f$ Hurwitz, $\alpha_{i,\ell}=\Delta_{i,\ell}$ for any $(i,\ell)\in\{1,\ldots,\ell\}\times\{1,2\}$ and suppose Assumption \ref{ass:iss-Sigma-continuous} is satisfied. Then Objective \ref{objective}(i)-(ii) hold. 
\end{theorem}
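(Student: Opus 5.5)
The plan is to transfer Proposition~\ref{prop:stability-augmented-closed-loop} to the original closed-loop system (\ref{eq:closed-loop}) through Theorem~\ref{thm:solutions-closed-loop-models}. We use the particular augmented initialization $\phi^\aug_{\f,e}(0)=0$. Fix $d\in\mathcal{L}_{\R^{n_d}}$, $w\in\mathcal{L}_{\R^{n_w}}$ and a solution $\phi=(\phi_\p,\phi_\f,\phi_\cont,\phi_\xi)$ to (\ref{eq:closed-loop}) initialized in $\mathcal{X}_\alpha$. We define $\phi^\aug$ as the (maximal) solution to (\ref{eq:closed-loop-augmented}) with the same inputs and with
\[
\phi^\aug_\p(0)=\phi_\p(0),\quad \phi^\aug_{\f,\n}(0)=\phi_\f(0),\quad \phi^\aug_{\f,e}(0)=0,\quad \phi^\aug_\cont(0)=\phi_\cont(0),\quad \phi^\aug_\xi(0)=\phi_\xi(0).
\]
This initial condition lies in $\mathcal{X}^\aug_\alpha$, since the $\xi$-components of $\phi(0)$ lie in the required boxes. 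It also satisfies $\phi^\aug_{\f,\n}(0)-\phi^\aug_{\f,e}(0)=\phi_\f(0)$, so Theorem~\ref{thm:solutions-closed-loop-models} applies. Hence $\phi$ and $\phi^\aug$ have the same domain and the same spiking times, and the components agree as stated there.

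\emph{Item (i).} By Proposition~\ref{prop:stability-augmented-closed-loop}(i), $\phi^\aug$ is complete and its spiking-time sequence belongs to $\mathcal{T}^\infty_{\zf}$. Since the domains and spiking times coincide, the same holds for $\phi$. One subtlety must be checked first: existence of at least one solution $\phi^\aug$ with the prescribed initial condition, so that the comparison is not vacuous. Continuity of the data and the local essential boundedness of $d$ and $w$ give local existence of flows, and the jump map is defined everywhere, so maximal solutions exist.

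\emph{Item (ii).} We compare distances to the attractors. Since $\phi_\xi=\phi^\aug_\xi$, the $\xi$-contributions to $|\phi(t)|_{\mathcal{A}_\alpha}$ and $|\phi^\aug(t)|_{\mathcal{A}^\aug_\alpha}$ coincide. Writing both distances via the Euclidean norm of the non-$\xi$ part plus the distance of $\xi$ to the box, $\phi_\f=\phi^\aug_{\f,\n}-\phi^\aug_{\f,e}$ gives
\[
|\phi(t)|_{\mathcal{A}_\alpha}\le \sqrt{2}\,|\phi^\aug(t)|_{\mathcal{A}^\aug_\alpha}.
\]
Conversely, at $t=0$ we have $\phi^\aug_{\f,e}(0)=0$ and $\phi^\aug_{\f,\n}(0)=\phi_\f(0)$, so $|\phi^\aug(0)|_{\mathcal{A}^\aug_\alpha}=|\phi(0)|_{\mathcal{A}_\alpha}$. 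Plugging these into the estimate of Proposition~\ref{prop:stability-augmented-closed-loop}(ii) yields Objective~\ref{objective}(ii) with $\beta:=\sqrt{2}\beta_\aug$ and $\gamma:=\sqrt{2}\gamma_\aug$, which remain in $\KL$ and $\Kinf$ respectively. These functions do not depend on $\alpha$, as required, because $\beta_\aug$ and $\gamma_\aug$ do not.

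\emph{Main obstacle.} The substantive difficulty lies inside Proposition~\ref{prop:stability-augmented-closed-loop}, which we take as given. At the level of this theorem, the only delicate points are two. First, the freedom to split $\phi_\f(0)$ must be fixed as $(\phi_\f(0),0)$ so that the zero-measure initialization requirement is met without loss. Second, the distance comparison must be handled carefully, since the attractors are products with boxes. I would write out the distance as the square root of a sum of squares of componentwise distances to verify the constant $\sqrt 2$. I would also double-check that the statement's index range ``$\{1,\ldots,\ell\}$'' is meant as $\{1,\ldots,n_y\}$.
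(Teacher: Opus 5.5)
Your proposal is correct and follows the paper's proof: you use the same augmented solution with $\phi^\aug_{\f,\n}(0)=\phi_\f(0)$ and $\phi^\aug_{\f,e}(0)=0$, Theorem~\ref{thm:solutions-closed-loop-models} to transfer the domain and spiking times, and Proposition~\ref{prop:stability-augmented-closed-loop} for items (i) and (ii). The only minor difference is that you absorb $|\phi^\aug_{\f,e}(t)|$ into $|\phi^\aug(t)|_{\mathcal{A}^\aug_\alpha}$ via $|a-b|^2\le 2|a|^2+2|b|^2$, which gives the factor $\sqrt{2}$; the paper instead splits $|\phi(t)|_{\mathcal{A}_\alpha}$ with the triangle inequality and bounds $|\phi^\aug_{\f,e}(t)|\le\tilde\gamma_s|\alpha|$ separately through Proposition~\ref{thm:practical-iss-Sigma-s}(ii), ending with $\beta=2\beta_\aug$ and $\gamma=2\max\{\gamma_\aug,\tilde\gamma_s\id\}$.
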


\begin{proof} Let $d\in\mathcal{L}_{\R^{n_d}}$, $w\in\mathcal{L}_{\R^{n_w}}$ and $\phi$ be a solution to system (\ref{eq:closed-loop}) with inputs $d$ and $w$ initialized in $\mathcal{X}_\alpha$ in (\ref{eq:set-A}). Let $\phi^\aug=(\phi^{\aug}_{\p},\phi^{\aug}_{\f,\n},\phi^{\aug}_{\f,e},\phi^{\aug}_{\cont}, \phi^{\aug}_{\xi})$ be the solution to the augmented closed-loop model (\ref{eq:closed-loop-augmented}) with the same inputs $d$ and $w$, $\phi^\aug_\p(0)=\phi_\p(0)$, $\phi^{\aug}_{\f,\n}(0)=\phi_\f(0)$, $\phi^{\aug}_{\f,e}(0)=0$, $\phi^{\aug}_{\cont}(0)=\phi_\cont(0)$ and $\phi^{\aug}_{\xi}(0)=\phi_\xi(0)$. As $\phi^\aug_{\f,e}(0)=\phi^\aug_{\f,\n}(0)-\phi_{\f}(0)=0$, we derive from Theorem~\ref{thm:solutions-closed-loop-models} that $\phi$ and $\phi^\aug$  have the same domain and the same sequence of spiking times. This implies that $\phi$ is complete and that its sequence of spiking times belongs to $\mathcal{T}_{\zf}^\infty$ by Proposition \ref{prop:stability-augmented-closed-loop}\emph{(i)}: Objective \ref{objective}\emph{(i)} holds. 

Let $t\geq 0$. By Theorem \ref{thm:solutions-closed-loop-models}\emph{(ii)},  $\phi^{\aug}_\p(t)=\phi_\p(t)$, $\phi^{\aug}_{\f,\n}(t)-\phi^{\aug}_{\f,e}(t)=\phi_{\f}(t)$, $\phi^{\aug}_\cont(t)=\phi_\cont(t)$ and $\phi_\xi^{\aug}(t)=\phi_\xi(t)$.
Hence, noting that $\phi_\xi(t)\in[0,\alpha_{1,1}]\times\ldots\times[0,\alpha_{n_y,2}]$,  we have that 
$|\phi(t)|_{\mathcal{A}_\alpha} = |(\phi_\p(t),\phi_{\f}(t),\phi_\cont(t))|
=   |(\phi^\aug_\p(t),\phi^\aug_{\f,\n}(t)-\phi^\aug_{\f,e}(t),\phi^\aug_\cont(t))|$. 
This implies
\begin{eqn}\label{eq:proof-thm-practical-iss-bound-distance-to-attractor}
|\phi(t)|_{\mathcal{A}_\alpha} & \leq &  |(\phi^\aug_\p(t),\phi^\aug_{\f,\n}(t),\phi^\aug_\cont(t))|+|\phi^\aug_{\f,e}(t)|.
\end{eqn}
In the following, we derive upper-bounds of the two terms in the right-hand side of (\ref{eq:proof-thm-practical-iss-bound-distance-to-attractor}). 
By Proposition~\ref{thm:practical-iss-Sigma-s}(\emph{ii}), 
$|\phi_{s}(t)|_{\mathcal{A}_{\alpha}^{s}} \leq \max\{ \beta_s(|\phi_{s}(0)|_{\mathcal{A}_{\alpha}^{s}},t),$ $ \tilde{\gamma}_s|\alpha|\}$, 
with $\phi_s=(\phi^\aug_{\f,e},\phi^\aug_\xi)$. We have  $|\phi_s(t)|_{\mathcal{A}_{\alpha}^{s}}=|\phi^\aug_{\f,e}(t)|$ with $\mathcal{A}_\alpha^s$ in (\ref{eq:As_set}), as $\phi^\aug_\xi(t)\in[0,\alpha_{1,1}]\times\ldots\times[0,\alpha_{n_y,2}]$. Hence,
$|\phi_{\f,e}^\aug(t)| \leq \max\{ \beta_s(|\phi_{\f,e}^{\aug}(0)|,t), \tilde\gamma_s|\alpha|\}$ 
and, as $\phi_{\f,e}^\aug(0)=0$, 
\begin{eqn}\label{eq:proof-thm-closed-loop-stability-bound-phi-f-e}
|\phi_{\f,e}^\aug(t)| \leq \tilde{\gamma}_s|\alpha|.
\end{eqn}
On the other hand, 
$|(\phi^\aug_\p(t),\phi^\aug_{\f,\n}(t),\phi^\aug_\cont(t))| \leq |\phi^\aug(t)|_{\mathcal{A}_\alpha^\aug}$, 
with $\mathcal{A}_\alpha^\aug$ in (\ref{eq:X-aug-A-aug}). Hence, by Proposition~\ref{prop:stability-augmented-closed-loop}\emph{(ii)}, 
\begin{eqn}\label{eq:proof-thm-closed-loop-stability-bound-aug-variables}
|(\phi^\aug_\p(t),\phi^\aug_{\f,\n}(t),\phi^\aug_\cont(t))| \! & \! \leq \! & \!\max\big\{\beta_\aug(|\phi^\aug(0)|_{\mathcal{A}_{\alpha}^{\aug}},t),\\
& & \hspace{-2cm} \gamma_\aug(\|d\|_{[0,t]}),\gamma_\aug(\|w\|_{[0,t]}),\gamma_\aug(|\alpha|)\big\}.
\end{eqn}
By (\ref{eq:proof-thm-practical-iss-bound-distance-to-attractor}), (\ref{eq:proof-thm-closed-loop-stability-bound-phi-f-e}) and (\ref{eq:proof-thm-closed-loop-stability-bound-aug-variables}), we deduce that $
|\phi(t)|_{\mathcal{A}_\alpha} \leq \max\big\{\beta_\aug(|\phi^\aug(0)|_{\mathcal{A}_{\alpha}^{\aug}},t),
\gamma_\aug(\|d\|_{[0,t]}),\gamma_\aug(\|w\|_{[0,t]}),$ $\gamma_\aug(|\alpha|)\big\} + \tilde\gamma_s|\alpha|$. Noting that $|\phi^\aug(0)|_{\mathcal{A}_{\alpha}^{\aug}}=|\phi(0)|_{\mathcal{A}_{\alpha}}$ given the initial conditions of $\phi^\aug$, we obtain the desired inequality in Objective \ref{objective}\emph{(ii)} with $\beta=2\beta_\aug\in\KL$ and $\gamma=2\max\{\gamma_\aug,\tilde{\gamma}_s\id\}\in\Kinf$. 
\end{proof}


\begin{remark}\label{rem:compact-attractor} In Assumption \ref{ass:iss-Sigma-continuous}, the attractor is the origin $(x_\p,x_{\f,\n},x_\cont)=0$. When the attractor is a generic compact attractor $\mathcal{A}_\continuous\subset\R^{n_\p+n_\f+n_\cont}$ instead\footnote{Namely, when  there exist $\beta_\continuous\in\KL$ and $\gamma_\continuous\in\Kinf$ such that for any  $d\in\mathcal{L}_{\R^{n_d}}$, $w\in\mathcal{L}_{\R^{n_w}}$ and  $\widehat{y}_e\in\mathcal{L}_{\R^{n_y}}$, any solution $\phi_\continuous$ to $\Sigma_\continuous$ with inputs $d$, $w$ and $\widehat{y}_e$ verifies $|\phi_\continuous(t)|_{\mathcal{A_\continuous}}\leq\max\{\beta_\continuous(|\phi_\continuous(0)|_{\mathcal{A_\continuous}},t),\gamma_\continuous(\|\widehat{y}_e\|_{[0,t]}),\gamma_\continuous(\|d\|_{[0,t]}),$ $\gamma_\continuous(\|w\|_{[0,t]})\}$ for all $t\geq 0$.}, Theorem \ref{thm:practical-iss-closed-loop} still applies by replacing the definition of set $\mathcal{A}_\alpha$ with $\mathcal{A}_\continuous\times  [0, \alpha_{1,1}] \times \cdots \times [0, \alpha_{n_y,2}]$, as the proofs of Proposition \ref{prop:stability-augmented-closed-loop} and Theorem \ref{thm:practical-iss-closed-loop} carry over. 
\end{remark}

\begin{remark}\label{rem:spikingCommunInContToPlantChannel}
We consider in this work spiking communications between the sensor and the controller, see Fig. \ref{Fig:blockDiagram_ASconvMultipleOutput}. We can apply the proposed design methodology in the same spirit to neuron-spiking communication in other channels as well, e.g., between the controller and actuator, but also both sensor-to-controller channel and controller-to-actuator channel, or even multi-channel distributed configurations.  
 For instance, when we consider the case where the neuron-spiking communications are in the controller-to-actuator channel, the spiking signal is $u_s$, which is the spiking version of the analog control input $u$ generated by the controller. Thus, the spike encoder consists of $2n_u$ neurons, and the spike decoder filters $u_s$. 
Following similar steps as in Sections \ref{sect:design}-\ref{sect:stability}, we can define the spike-induced error $ e = u-u_s \in \R^{n_u}$, and augment the new closed-loop model similarly to Section \ref{sect:loop-transformations}. The resulting new system $\Sigma_s$ has input $u$ and output $\widehat{u}_e$, and following similar steps as in Section \ref{subsect:stability-property-Sigma-s}, a practical stability property can be ensured. On the other hand, the new system $\Sigma_ {\text{cont}}$ would have inputs $d$, $w$, and $\widehat{u}_e = C_f x_{f,e}$, and output $u$, and similarly to Assumption \ref{ass:iss-Sigma-continuous}, we assume that we can design the controller and the spike decoder such that an ISS property holds. The results in Section \ref{subsect:stability-properties-closed-loop} hold \emph{mutatis mutandi}, and a practical ISS property can be proven for the closed-loop system with neuron-spiking communications in the controller-to-actuator channel.

\end{remark}

\begin{remark}\label{rem:robustness-to-noise} As shown in Theorem \ref{thm:practical-iss-closed-loop}\emph{(i)}, the proposed scheme does not exhibit  Zeno phenomenon despite measurement noise. No knowledge on the measurement noise is required for the design of the communication scheme. This is in stark contrast  with  event-triggered control, in which measurement noise requires either specific parameter tuning or time-regularization methods, and thus  clocks, to be Zeno free, see \cite{scheres2024robustifying} and the references therein.
\end{remark}
\section{Satisfaction of Assumption \ref{ass:iss-Sigma-continuous}}\label{sect:Sigma-continuous-iss}

While there exist various tools to design input-to-state stabilizing controllers for classes of nonlinear systems, see, e.g., \cite{mironchenko2023-iss-book,freeman-kokotovic-book,praly-bresch-pietri-book2-2022}, the design problem posed by Assumption \ref{ass:iss-Sigma-continuous} is non-standard. Indeed,  it involves designing both the controller and the filter dynamics of the spike decoder to ensure an ISS property for the obtained closed-loop model with plant (\ref{eq:plant}). Nevertheless, we may still exploit existing design techniques to ensure Assumption \ref{ass:iss-Sigma-continuous}. One approach consists in separately designing the controller (\ref{eq:controller}) and the spike decoder (\ref{eq:filter}) and then to derive conditions under which Assumption \ref{ass:iss-Sigma-continuous} holds. This is the approach followed in Section \ref{subsect:assumption-Sigma-continuous-nonlinear} for a class of nonlinear systems. An alternative approach consists in designing first the spike decoder  (\ref{eq:filter}) and then seek for a suitable controller (\ref{eq:controller}) that input-to-state stabilize the plant with the spike decoder. We follow this approach both in Section \ref{subsect:LTI}, where we show that Assumption \ref{ass:iss-Sigma-continuous} can always be satisfied for any stabilizable and detectable LTI plant models, and in Section~\ref{sect:example} on a nonlinear example.

\subsection{A class of nonlinear systems}\label{subsect:assumption-Sigma-continuous-nonlinear}

We consider the case where the plant output map $h_p$ is linear, i.e., $h_p(x_\p,w)=C_\p x_\p+w$ for some $C_\p\in\R^{n_y\times n_\p}$; $n_w=n_y$ here.  We design the spike decoder  in  (\ref{eq:filter}) such that $n_\f=n_y$ and $C_\f=I_{n_y}$; further conditions are specified below. 
Let the mismatch between the noise-free plant output, $C_\p x_\p$, and the nominal filtered version of the noisy output, namely $\widehat{y}_{\n}$, i.e., the output of \eqref{eq:filter-x-f-n}, be denoted
\begin{eqn}\label{eq:tilde-y}
\widetilde y & := & C_\p x_\p-\widehat{y}_{\n}\in\R^{n_y}.
\end{eqn}
Along the solutions to $\Sigma_\continuous$ in (\ref{eq:sys-Sigma-continuous}), we have 
\begin{eqn}\label{eq:sys-widetilde-y}
\dot{\widetilde{y}} & = & q(\widetilde{y},x_{\p},x_{\cont},x_{\f,\n},\widehat{y}_e,d,w),
\end{eqn}
where $q(\widetilde{y},x_{\p},x_{\cont},x_{\f,\n},\widehat{y}_e,d) := C_p f_\p(x_\p,h_\cont(x_\cont,C_\p x_\p-\widetilde y-\widehat{y}_e),d)-A_\f x_{\f,\n}-B_\f (C_\p x_\p+w)$. The next result provides conditions on the controller and on system (\ref{eq:sys-widetilde-y}) under which  Assumption \ref{ass:iss-Sigma-continuous} holds. 


\begin{propos}\label{prop:conditions-for-assumption-on-Sigma-continuous} Let $n_\f=n_y$, $C_\f=I_{n_y}$ and suppose the following holds.
\begin{enumerate}[label=(\roman*)]
\item There exists $C_\p\in\R^{n_y\times n_\p}$ such that $h_p(x_\p,w)=C_\p x_\p+w$ for any $x_\p\in\R^{n_\p}$ and $w\in\R^{n_y}$.
\item The system
$(\dot x_{\p},\dot x_{\cont})  =  (f_{\p}(x_\p,h_\cont(x_\cont,C_\p x_\p+w),d), 
f_\cont(x_\cont,C_p x_\p+w))$ is ISS with respect to inputs $d$ and $w$ with gains $\gamma_{d}$ and $\gamma_{w}$, respectively.
\item System (\ref{eq:sys-widetilde-y}) with $n_\f=n_y$ and $C_\f=I_{n_y}$ is ISS with respect to inputs $(x_\p,x_{\cont})$, $\widehat{y}_e$, $d$ and $w$ with gains $\theta_{x}$, $\theta_e$ , $\theta_d$, and $\theta_w$.
\item $\theta_{x}\circ\gamma_{w}(2 s)<s$ for any $s>0$.
\end{enumerate}
\vspace{-0.4cm}Then Assumption \ref{ass:iss-Sigma-continuous} holds.
\end{propos}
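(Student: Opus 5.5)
The plan is a change of coordinates followed by a nonlinear small-gain argument. With $n_\f=n_y$ and $C_\f=I_{n_y}$ we have $\widehat y_\n=x_{\f,\n}$, so $\widetilde y=C_\p x_\p-x_{\f,\n}$.

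\emph{Step 1: new coordinates.} I would rewrite $\Sigma_\continuous$ in (\ref{eq:sys-Sigma-continuous}) in the coordinates $(x_\p,x_\cont,\widetilde y)$. The map $(x_\p,x_{\f,\n},x_\cont)\mapsto(x_\p,x_\cont,\widetilde y)$ is linear and invertible. Indeed, $|x_{\f,\n}|\leq\|C_\p\|\,|x_\p|+|\widetilde y|$ and $|\widetilde y|\leq \|C_\p\|\,|x_\p|+|x_{\f,\n}|$. Hence an ISS estimate in the new coordinates transfers to the original ones, up to multiplying $\beta$ and the gains by constants.

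In the new coordinates, the controller input is
\[
C_\f x_{\f,\n}-\widehat y_e = C_\p x_\p+v, \qquad v:=-\widetilde y-\widehat y_e .
\]
Therefore the $(x_\p,x_\cont)$-dynamics coincide with the system in item (ii), with $v$ playing the role of $w$. The true noise $w$ enters $\Sigma_\continuous$ only through the filter, hence only through the $\widetilde y$-dynamics (\ref{eq:sys-widetilde-y}).

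\emph{Step 2: the two ISS estimates.} By (ii), the $(x_\p,x_\cont)$-subsystem is ISS with respect to $d$ and $v$. Since $\gamma_w(|v|)\leq\max\{\gamma_w(2|\widetilde y|),\gamma_w(2|\widehat y_e|)\}$, it is ISS with respect to $\widetilde y$, $\widehat y_e$ and $d$, with gain $\gamma_w(2\cdot)$ from $\widetilde y$. By (iii), the $\widetilde y$-subsystem is ISS with respect to $(x_\p,x_\cont)$, $\widehat y_e$, $d$ and $w$, with gain $\theta_x$ from $(x_\p,x_\cont)$.

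\emph{Step 3: small-gain.} The loop gain is $\theta_x\circ\gamma_w(2\cdot)$, which is strictly less than the identity on $(0,\infty)$ by (iv). I would invoke the ISS small-gain theorem for feedback interconnections of ISS systems with max-type gains (Jiang, Teel and Praly). This gives that the interconnection is ISS with respect to $d$, $w$ and $\widehat y_e$, with respect to the origin $(x_\p,x_\cont,\widetilde y)=0$. The theorem also yields forward completeness of maximal solutions, since each subsystem is ISS with locally essentially bounded inputs. Going back through Step 1 then gives Assumption \ref{ass:iss-Sigma-continuous}.

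\emph{Expected obstacle.} The delicate point is the bookkeeping in Step 3. I need to check that the version of the small-gain theorem I cite allows the external inputs to enter both subsystems, here $\widehat y_e$ and $d$. I also need the factor $2$ introduced when splitting $v$ to match exactly the form of condition (iv). Finally, the gains in (ii)--(iii) must be of $\Kinf$ type, possibly after upper-bounding by $\Kinf$ functions so that (iv) is preserved. If the gain matching turns out to be awkward, my fallback is to run the trajectory-based small-gain argument directly: bound $\sup$-norms on $[0,t]$, then use $\theta_x\circ\gamma_w(2s)<s$ to exclude $\|\widetilde y\|_{[0,t]}$ being dominated by its own feedback term.
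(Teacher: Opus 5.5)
Your proposal is correct and follows essentially the same route as the paper's proof: rewrite the controller input as $C_{\p}x_{\p}-\widetilde y-\widehat y_e$ so that (ii) applies, split the gain via $\gamma_w(s_1+s_2)\le\max\{\gamma_w(2s_1),\gamma_w(2s_2)\}$, close the loop with the ISS small-gain theorem under (iv), and return to the original coordinates through the linear bijection $(x_{\p},x_{\cont},\widetilde y)\leftrightarrow(x_{\p},x_{\f,\n},x_{\cont})$. The only difference is the reference: the paper invokes the small-gain theorem as stated in Praly and Bresch-Pietri's book, whereas you cite Jiang--Teel--Praly.
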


\begin{proof} As $C_\f=I_{n_y}$, $C_\f x_{\f,\n} = \widehat{y}_\n$ in (\ref{eq:filter-x-f-n}). We thus have  $C_\f x_{\f,\n}-\widehat{y}_e=\widehat{y}_{\n} - \widehat{y}_{e}=C_\p x_\p-C_\p x_\p+\widehat{y}_{\n} - \widehat{y}_{e}=C_\p x_\p-\widetilde{y} - \widehat{y}_e$. Consequently, the $(x_\p,x_\cont)$-system in $\Sigma_\continuous$ in (\ref{eq:sys-Sigma-continuous}) becomes
\begin{eqn}\label{eq:proof-prop-sys-cont-bis}
\left(\begin{smallmatrix}\dot x_\p \\  \dot x_\cont \end{smallmatrix}\right) & = & \left(\begin{smallmatrix} f_\p(x_\p,h_\cont(x_\cont,C_\p x_\p-\widetilde{y} - \widehat{y}_e),d)\\
f_\cont(x_\cont,C_\p x_\p -\widetilde{y} - \widehat{y}_e) \end{smallmatrix}\right).
\end{eqn}
Proposition \ref{prop:conditions-for-assumption-on-Sigma-continuous}\emph{(ii)} implies that this  system is ISS with respect to inputs $\widetilde{y}+\widehat{y}_e$ and $d$ with gains $\gamma_{w}$ and  $\gamma_{d}$. As $\gamma_w\in\Kinf$,  using \cite[Eq. (8)]{kellett2014compendium}, we have for any $s_1,s_2\geq 0$, $\gamma_{w}(s_1+s_2)\leq \max\{\gamma_{w}(2s_1),\gamma_{w}(2s_2)\}$. We derive that system (\ref{eq:proof-prop-sys-cont-bis}) is ISS with respect to inputs $\widetilde{y}$, $\widehat{y}_e$ and $d$ with gains $\gamma_{w}(2\id)$, $\gamma_{w}(2\id)$ and $\gamma_{d}$.  By Proposition \ref{prop:conditions-for-assumption-on-Sigma-continuous}\emph{(iv)}, it follows from the application of the small-gain theorem in \cite[Theorem 3.119]{praly-bresch-pietri-book1-2022} that the interconnected system (\ref{eq:sys-widetilde-y}), (\ref{eq:proof-prop-sys-cont-bis}) is ISS with respect to inputs $\widehat{y}_e$, $d$ and $w$.  To conclude the proof, we observe that system (\ref{eq:sys-widetilde-y}),  (\ref{eq:proof-prop-sys-cont-bis}) corresponds to $\Sigma_\continuous$ in (\ref{eq:sys-Sigma-continuous}) by changing the coordinates from $(x_\p,x_\cont,\widetilde y)$ to $(x_\p,x_{\f,\n},x_\cont)$ using a linear, bijective map. 
Consequently, $\Sigma_\continuous$ is ISS with respect to inputs $\widehat{y}_e$, $d$ and $w$: Assumption \ref{ass:iss-Sigma-continuous} holds. \end{proof}

Proposition \ref{prop:conditions-for-assumption-on-Sigma-continuous} 
requires the controller to be input-to-state stabilizing for the plant model (\ref{eq:plant}) where the inputs are the disturbance $d$ acting on the plant dynamics and additive perturbations on the measurement. Techniques ensuring this property can be found for classes of systems in e.g., \cite{freeman-kokotovic-book,praly-bresch-pietri-book2-2022}. Proposition \ref{prop:conditions-for-assumption-on-Sigma-continuous}\emph{(iii)} on the other hand requires the filter dynamics to be designed such that (\ref{eq:sys-widetilde-y}) satisfies an ISS property. We provide below conditions under which this property holds. Finally, a small gain condition is imposed in Proposition \ref{prop:conditions-for-assumption-on-Sigma-continuous}\emph{(iv)} under which the  ISS properties ensured by the controller and the spike decoder ensure the satisfaction of Assumption \ref{ass:iss-Sigma-continuous}. 

\begin{lemma}\label{lem:sys-widetilde-y-iss} Suppose the following  holds.
\begin{enumerate}[label=(\roman*)]
\item There exists $C_\p\in\R^{n_y\times n_\p}$ such that $h_p(x_\p,w)=C_\p x_\p+w$ for any $x_\p\in\R^{n_\p}$ and $w\in\R^{n_y}$. 
\item There exists $\gamma_{\p}\in\Kinf$ such that $|C_\p f_{\p}(x_{\p},h_{\cont}(x_{\cont},C_{\p} x_{\p}$ $-\tilde y-\widehat y_e),d)|\leq \max\{\gamma_{\p}(|(x_{\p},x_{\cont})|),\gamma_{\p}(|d|)\}$ for any $x_\p\in\R^{n_\p}$, $x_\cont\in\R^{n_\cont}$, $d\in\R^{n_d}$, $\tilde y,\widehat{y}_e\in\R^{n_w}$.
\item $n_\f=n_y$, $A_\f$ is Hurwitz, $B_\f=-A_\f$ and $C_\f=I_{n_y}$ in (\ref{eq:filter}). 
\end{enumerate}
\vspace{-0.4cm}Then system (\ref{eq:sys-widetilde-y}) is ISS with respect to inputs $(x_\p,x_\cont)$, $\widehat{y}_e$,  $d$, and $w$ with  gains $c\gamma_\p$, $0$, $c \gamma_\p$ and $c \max\{\gamma_\p,\id\}$ for some constant $c>0$. 
\end{lemma}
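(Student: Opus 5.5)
Under (iii), the plan is to rewrite the $\widetilde y$-dynamics (\ref{eq:sys-widetilde-y}) as a stable linear system driven by a bounded forcing term. With $B_\f=-A_\f$ and $C_\f=I_{n_y}$ we have $\widehat y_\n=x_{\f,\n}$. Using (i), the term $-A_\f x_{\f,\n}-B_\f(C_\p x_\p+w)$ then equals $-A_\f x_{\f,\n}+A_\f C_\p x_\p+A_\f w=A_\f(C_\p x_\p-x_{\f,\n})+A_\f w=A_\f\widetilde y+A_\f w$. Hence
\begin{equation*}
\dot{\widetilde y}=A_\f\widetilde y+v,\qquad v:=C_\p f_\p\bigl(x_\p,h_\cont(x_\cont,C_\p x_\p-\widetilde y-\widehat y_e),d\bigr)+A_\f w .
\end{equation*}
The key point is that, by (ii), the bound on $|C_\p f_\p(\cdot)|$ does not depend on $\widetilde y$ or $\widehat y_e$. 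So we may treat $v$ as an exogenous input even though it formally depends on $\widetilde y$. This structure is what produces the zero gain with respect to $\widehat y_e$.

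Next, since $A_\f$ is Hurwitz, there exist $k\ge1$ and $\lambda>0$ with $\|e^{A_\f t}\|\le k e^{-\lambda t}$. Variation of constants along any solution gives
\begin{equation*}
|\widetilde y(t)|\le k e^{-\lambda t}|\widetilde y(0)|+\frac{k}{\lambda}\,\mathrm{ess\,sup}_{s\in[0,t]}|v(s)| .
\end{equation*}
By (ii),
\begin{equation*}
|v(s)|\le\max\{\gamma_\p(|(x_\p,x_\cont)(s)|),\gamma_\p(|d(s)|)\}+\|A_\f\|\,|w(s)|\le 2\max\{\gamma_\p(|(x_\p,x_\cont)(s)|),\gamma_\p(|d(s)|),\|A_\f\||w(s)|\}.
\end{equation*}
Taking the ess sup and using monotonicity of $\gamma_\p$ gives the bound in terms of $\|(x_\p,x_\cont)\|_{[0,t]}$, $\|d\|_{[0,t]}$ and $\|w\|_{[0,t]}$.

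Finally, we convert the sum into the max form of the ISS definition in Section~\ref{sect:preliminaries} via $a+b\le\max\{2a,2b\}$. We choose $\beta(s,t)=2ke^{-\lambda t}s\in\KL$ and $c:=\frac{4k}{\lambda}\max\{1,\|A_\f\|\}$. The gains are then $c\gamma_\p$ for $(x_\p,x_\cont)$ and for $d$, $c\max\{\gamma_\p,\id\}$ for $w$ (which dominates $c\,\id$), and $0$ for $\widehat y_e$.

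The only delicate step is justifying the treatment of $v$ as an input: it depends on the state $\widetilde y$. This is harmless because the variation-of-constants formula holds for any locally integrable $v(\cdot)$ evaluated along the solution, and bound (ii) is uniform in $\widetilde y$ and $\widehat y_e$. Forward completeness of $\widetilde y$ on the interval where the inputs are defined follows from the same linear estimate, since $v$ is locally essentially bounded whenever $(x_\p,x_\cont)$, $d$ and $w$ are.
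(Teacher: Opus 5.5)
Your proof is correct and follows the same route as the paper. Both substitute $B_\f=-A_\f$ and $C_\f=I_{n_y}$ to obtain $\dot{\widetilde y}=A_\f\widetilde y+C_\p f_\p(\cdot)+A_\f w$, and then use that $A_\f$ is Hurwitz together with the bound (ii), which does not depend on $\widetilde y$ or $\widehat y_e$. You additionally make explicit, via variation of constants and $a+b\le\max\{2a,2b\}$, the constant $c$ and the $\KL$ bound that the paper leaves implicit.
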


\begin{proof} As $B_\f= -A_\f$ by Lemma \ref{lem:sys-widetilde-y-iss}\emph{(iii)}, system (\ref{eq:sys-widetilde-y}) becomes using (\ref{eq:tilde-y}) 
\begin{eqn}\label{eq:proof-lem-sys-widetilde-y}
\dot{\widetilde{y}} & = & C_\p f_{\p}(x_{\p},h_{\cont}(x_{\cont},C_{\p} x_{\p}-\widetilde y -\widehat{y}_e),d)\\
& & - A_\f \widehat{y}_n + A_\f (C_\p x_\p+w)\\
& = & A_\f \widetilde y +  C_\p f_{\p}(x_{\p},h_{\cont}(x_{\cont},C_{\p} x_{\p}-\widetilde y -\widehat{y}_e),d) +A_\f  w.
\end{eqn}
As $A_\f$ is Hurwitz by Lemma \ref{lem:sys-widetilde-y-iss}\emph{(iii)} and by invoking Lemma \ref{lem:sys-widetilde-y-iss}\emph{(ii)}, we deduce that system (\ref{eq:proof-lem-sys-widetilde-y}) is ISS with respect to inputs respect to  inputs $(x_\p,x_\cont)$, $\widehat y_e$, $d$ and $w$ with  gains $c\gamma_\p$, $0$, $c \gamma_\p$ and $c \max\{\gamma_\p,\id\}$ for some constant $c>0$.
\end{proof}

Lemma \ref{lem:sys-widetilde-y-iss} provides conditions on the plant  (\ref{eq:plant}), the controller (\ref{eq:controller}) and the spike decoder (\ref{eq:filter}) under which Proposition \ref{prop:conditions-for-assumption-on-Sigma-continuous}\emph{(iii)} holds. In particular, Lemma \ref{lem:sys-widetilde-y-iss}\emph{(ii)} is a bounding condition on  $\vartheta:(x_\p,x_\cont,d,\tilde y,\widehat{y}_e,)\mapsto C_\p f_{\p}(x_{\p},h_{\cont}(x_{\cont},C_{\p} x_{\p}-\tilde y-\widehat y_e),d)$. This condition is for instance satisfied when the map $\vartheta$ is independent of $\tilde y$, zero at zero and uniformly continuous. 
We highlight that Proposition \ref{prop:conditions-for-assumption-on-Sigma-continuous} and the subsequent Lemma~\ref{lem:sys-widetilde-y-iss} only provide cases where Assumption \ref{ass:iss-Sigma-continuous} holds for nonlinear systems; others can be envisioned as exemplified in Section \ref{sect:example}.

\subsection{LTI plant models}\label{subsect:LTI}
We now consider the case where system \eqref{eq:plant} is LTI
\begin{eqn}
   \label{eq:plant-lti}
     \dot x_\p  =  A_\p x_\p + B_\p u + F_\p d,  & &     y & = & C_\p x_\p+w, 
\end{eqn}
where $A_\p\in\R^{n_\p\times n_\p}$, $B_\p\in\R^{n_\p\times n_u}$, $F_\p\in\R^{n_\p\times n_d}$, $C_\p\in\R^{n_\p\times n_y}$. We make the next assumption on  system \eqref{eq:plant-lti}.

\begin{ass}\label{ass:stab-detect-lti} 
The pair $(A_\p,B_\p)$ is stabilizable and the pair $(A_\p,C_\p)$ is detectable.   
\end{ass}


System (\ref{eq:plant-lti}) with the nominal filter (\ref{eq:filter-x-f-n}) leads to 
\begin{equation}
\label{eq:sys-lti-plant-nominal-filter}
\begin{aligned}
\left(\begin{smallmatrix}\dot x_\p \\ \dot x_{\f,\n} \end{smallmatrix}\right)  &=  A_\continuous 
\left(\begin{smallmatrix}x_\p \\ x_{\f,\n} \end{smallmatrix}\right) + B_\continuous u + F_\continuous d + E_\continuous w\\
\widehat{y}_\n  &=  C_\continuous \left(\begin{smallmatrix}x_\p \\ x_{\f,\n} \end{smallmatrix}\right) 
\end{aligned}
\end{equation}
with $
A_\continuous  := \left(\begin{smallmatrix} A_\p & 0 \\
B_\f C_\p & A_\f    
\end{smallmatrix}\right)$, $  
B_\continuous := \big(B_\p^\top \,\,\,0\big)^\top$,  $F_\continuous:=\big(F_\p^\top \,\,\, 0\big)^\top$,  $E_\continuous:=\big(0\,\,\, B_\f^\top\big)^\top$ and $C_\continuous := \big(0 \,\,\, C_\f\big)$. 
The goal is to design a linear (dynamic) output-feedback controller such that the origin of the obtained closed-loop system with  (\ref{eq:sys-lti-plant-nominal-filter}) with $d\equiv 0$ and $w\equiv 0$ is globally exponentially stable.  Indeed, this would imply that the corresponding system  $\Sigma_\continuous$ in (\ref{eq:sys-Sigma-continuous}) is  ISS with respect to inputs $\widehat{y}_e$, $d$ and $w$ by \cite[Chapter 4.9]{khalil2002nonlinear} thereby ensuring the satisfaction of Assumption \ref{ass:iss-Sigma-continuous}. This design problem admits a solution if and only if the pair $(A_\continuous,B_\continuous)$  is stabilizable and the pair $(A_\continuous,C_\continuous)$ is detectable. The next lemma  states that this is the case under Assumption~\ref{ass:stab-detect-lti}, by suitably designing the spike decoder dynamics. 

\begin{lemma}\label{lem:stab-detect-with-filter} Suppose  Assumption \ref{ass:stab-detect-lti} holds, select $A_\f$, $B_\f$ and $C_\f$ as follows.
\begin{enumerate}[label=(\roman*)]
\item $A_\f$ is Hurwitz.
\item $B_\f$ has full column rank.
\item $\left(\begin{smallmatrix}
A_\f - \lambda I_{n_\f} & B_\f\\
C_\f  & 0
\end{smallmatrix}\right)$ has full column rank for all $\lambda \in \C$ with $\Re(\lambda) \geq0$.
\end{enumerate}
\vspace{-0.4cm}Then the pair $(A_\continuous,B_\continuous)$  is stabilizable and the pair $(A_\continuous,C_\continuous)$ is detectable.
\end{lemma}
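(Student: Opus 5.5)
We will use the Popov--Belevitch--Hautus (PBH) tests, handling stabilizability and detectability separately and exploiting the block lower-triangular structure of $A_\continuous$.

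\emph{Stabilizability.} It suffices to show that $\big(A_\continuous-\lambda I,\; B_\continuous\big)$ has full row rank for every $\lambda\in\C$ with $\Re(\lambda)\geq 0$. Equivalently, we show that any left vector $(v_1,v_2)$ with
\[
v_1^\top(A_\p-\lambda I)+v_2^\top B_\f C_\p=0,\qquad v_2^\top(A_\f-\lambda I)=0,\qquad v_1^\top B_\p=0
\]
must vanish. Since $A_\f$ is Hurwitz by item (i), $A_\f-\lambda I$ is invertible for $\Re(\lambda)\geq0$. The second equation therefore gives $v_2=0$. The remaining conditions reduce to $v_1^\top(A_\p-\lambda I)=0$ and $v_1^\top B_\p=0$, and stabilizability of $(A_\p,B_\p)$ in Assumption~\ref{ass:stab-detect-lti} yields $v_1=0$. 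This step is routine.

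\emph{Detectability.} We must show that $\left(\begin{smallmatrix}A_\continuous-\lambda I\\ C_\continuous\end{smallmatrix}\right)$ has full column rank for $\Re(\lambda)\geq0$. Take $(x_1,x_2)$ in its kernel, which means
\[
(A_\p-\lambda I)x_1=0,\qquad B_\f C_\p x_1+(A_\f-\lambda I)x_2=0,\qquad C_\f x_2=0 .
\]
Setting $z:=C_\p x_1$, the last two equations say exactly that
\[
\left(\begin{smallmatrix}A_\f-\lambda I & B_\f\\ C_\f & 0\end{smallmatrix}\right)\left(\begin{smallmatrix}x_2\\ z\end{smallmatrix}\right)=0 .
\]
Item (iii) then forces $x_2=0$ and $z=0$. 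We are left with $(A_\p-\lambda I)x_1=0$ and $C_\p x_1=0$, and detectability of $(A_\p,C_\p)$ gives $x_1=0$.

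The only delicate point is recognizing that the coupling term $B_\f C_\p x_1$ should be absorbed into the ``input'' slot of the Rosenbrock-type matrix in item (iii), namely that the filter has no invariant zeros in the closed right half-plane. Item (ii), full column rank of $B_\f$, does not appear to be needed in this argument, since item (iii) already implies it (take $\lambda$ not an eigenvalue of $A_\f$ and look at the $z$-column). We will remark that it is merely consistent with (iii) rather than use it.
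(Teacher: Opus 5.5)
Your proof is correct and follows essentially the same route as the paper. Both use the PBH tests. For stabilizability you remove the filter block via invertibility of $A_\f-\lambda I$, phrased with left null vectors where the paper uses a rank count. For detectability you absorb $C_\p x_1$ into the input slot of the matrix in item (iii), exactly as the paper does.

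Your side remark is also accurate. Item (ii) is not used in the paper's proof either, and it is implied by (iii): if $B_\f u=0$ with $u\neq 0$, then $(0,u)$ lies in the kernel of that matrix for every $\lambda$, not only for $\lambda$ outside the spectrum of $A_\f$.
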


\begin{proof} According to Popov-Belevitch-Hautus test \cite[Theorem 14.2]{hespanha2009linear}, the pair $(A_\continuous,B_\continuous)$ is stabilizable if and only if $\rank(A_\continuous-\lambda I\,\, B_\continuous)=n_\p+n_\f$ for all $\lambda\in\C$ with $\Re(\lambda)\geq 0$. Let $\lambda\in \C$ with $\Re(\lambda)\geq 0$,
\begin{eqn}
(A_\continuous-\lambda I\,\, B_\continuous) & := & \left(\begin{smallmatrix}
A_\p - \lambda I_{n_\p} & 0 & B_\p\\
B_\f C_\p & A_\f - \lambda I_{n_\f}  & 0
\end{smallmatrix}\right).
\end{eqn}
The rank of $\big(\begin{matrix}
A_\p - \lambda I_{n_\p} &  B_\p
\end{matrix}\big)$ is $n_\p$ as $(A_\p,B_\p)$ is stabilizable by Assumption \ref{ass:stab-detect-lti}. Moreover the rank of $A_\f - \lambda I_{n_\f}$ is $n_\f$ as $A_\f$ is Hurwitz. Consequently, $(A_\continuous-\lambda I\,\, B_\continuous) $ is of rank $n_\p+n_\f$: $(A_\continuous,B_\continuous)$ is stabilizable.

Similarly, by Popov-Belevitch-Hautus test \cite[Theorem~16.5]{hespanha2009linear}, the pair $(A_\continuous,C_\continuous)$ is detectable if and only if for any $\lambda\in\C$ with $\Re(\lambda)\geq 0$,
\begin{eqn}\label{eq:proof-lem-detect}
\rank\left(\begin{smallmatrix}
A_\p - \lambda I_{n_\p} & 0 \\
B_\f C_\p & A_\f -\lambda I_{n_\f} \\
0 & C_\f
\end{smallmatrix}\right) & = & n_\p+n_\f.
\end{eqn}
Condition \eqref{eq:proof-lem-detect} is equivalent to  
\begin{eqn}\label{eq:proof-lem-detect2}
\left(\begin{smallmatrix}
A_\p - \lambda I_{n_\p} & 0 \\
B_\f C_\p & A_\f -\lambda I_{n_\f} \\
0 & C_\f
\end{smallmatrix}\right) \left(\begin{smallmatrix} v_1 \\ v_2 \end{smallmatrix}\right)   = 0 & & \Rightarrow (v_1,v_2)=(0,0).
\end{eqn}
The right-hand side above yields 
$\left(\begin{smallmatrix}
A_\f - \lambda I_{n_\f} & B_\f\\
C_\f  & 0
\end{smallmatrix}\right) \left(\begin{smallmatrix} v_2 \\ C_\p v_1 \end{smallmatrix}\right)  = 0$.
By  Lemma \ref{lem:stab-detect-with-filter}\emph{(iii)}, using $\Re(\lambda)\geq 0$, we obtain 
$v_2 = 0$ and $C_\p v_1 = 0$. As the right-hand side of \eqref{eq:proof-lem-detect2} also gives $(A_p-\lambda I_{n_p})v_1=0$, from detectability of  $(A_\p, C_\p)$, we have $v_1 = 0$. Thus \eqref{eq:proof-lem-detect2} holds and $(A_\continuous,C_\continuous)$ detectable. 
\end{proof}

The conditions of Lemma \ref{lem:stab-detect-with-filter} can always be ensured as we are free to design $n_\f$, $A_\f,B_\f$, and $C_\f$ as desired. Note that Lemma \ref{lem:stab-detect-with-filter}\emph{(iii)}   is related to $(A_\f, B_\f, C_\f)$ being minimum phase. When these conditions hold, we can always design controller (\ref{eq:controller}) of the form 
\begin{eqn}\label{eq:controller-lti}
\dot x_\cont = A_\cont x_\cont + B_\cont \widehat y, & & u = C_\cont x_\cont, 
\end{eqn}
with $A_\cont\in\R^{n_\cont\times n_\cont}$, $B_\cont\in\R^{n_\cont\times n_\f}$, $C_\cont\in\R^{n_u\times n_\cont}$ and $n_\cont=n_\p+n_\f$ so that Assumption \ref{ass:iss-Sigma-continuous} holds as stated below.

\begin{propos}\label{prop:lti} Suppose the following holds.
\begin{enumerate}[label=(\roman*)]
\item Assumption \ref{ass:stab-detect-lti} is satisfied.
\item $A_\f,B_\f$, and $C_\f$ are such that $A_\f$ is Hurwitz, $B_\f$ is full column rank, and $\left(\begin{smallmatrix}
A_\f - \lambda I_{n_\f} & B_\f\\
C_\f  & 0
\end{smallmatrix}\right)$ has full column rank for all $\lambda \in \C$ with $\Re(\lambda) \geq0$.
\end{enumerate}
\vspace{-0.4cm}Let $n_\cont=n_\p+n_\f$,  $A_\cont = A_\continuous - L_\continuous C_\continuous+B_\continuous K_\continuous$, $B_\cont=L_\continuous$ and $C_\cont=K_\continuous$ with $K_\continuous\in\R^{n_u\times n_\cont}$ such that 
$A_\continuous+B_\continuous K_\continuous$  Hurwitz and $L_\continuous\in\R^{n_\cont\times n_y}$  
such that $A_\continuous - L_\continuous C_\continuous$ is Hurwitz. The corresponding system $\Sigma_{\continuous}$  is ISS with respect to inputs $\widehat{y}_e$, $d$ and $w$.  
\end{propos}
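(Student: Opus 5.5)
Under items (i)--(ii), Lemma~\ref{lem:stab-detect-with-filter} shows that $(A_\continuous,B_\continuous)$ is stabilizable and $(A_\continuous,C_\continuous)$ is detectable. Hence gains $K_\continuous$ and $L_\continuous$ with the stated Hurwitz properties exist, and the controller is well defined. The plan is to write $\Sigma_\continuous$ in (\ref{eq:sys-Sigma-continuous}) for the LTI plant (\ref{eq:plant-lti}) and controller (\ref{eq:controller-lti}) as a linear system driven by the inputs $\widehat y_e$, $d$, $w$. We then show that its state matrix is Hurwitz, and conclude ISS from the standard fact \cite[Chapter 4.9]{khalil2002nonlinear} that an LTI system $\dot z = Mz + N v$ with $M$ Hurwitz is ISS with linear gains.

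With $z:=(x_\p,x_{\f,\n})$, system (\ref{eq:sys-lti-plant-nominal-filter}) reads $\dot z = A_\continuous z + B_\continuous u + F_\continuous d + E_\continuous w$. The controller receives $\widehat y = C_\f x_{\f,\n}-\widehat y_e = C_\continuous z - \widehat y_e$, so
\[
\dot x_\cont = (A_\continuous - L_\continuous C_\continuous + B_\continuous K_\continuous)x_\cont + L_\continuous (C_\continuous z - \widehat y_e),
\qquad u = K_\continuous x_\cont .
\]
Next we introduce the estimation error $\varepsilon := z - x_\cont$. A direct computation gives
\[
\dot z = (A_\continuous + B_\continuous K_\continuous) z - B_\continuous K_\continuous \varepsilon + F_\continuous d + E_\continuous w,
\]
\[
\dot \varepsilon = (A_\continuous - L_\continuous C_\continuous)\varepsilon + L_\continuous \widehat y_e + F_\continuous d + E_\continuous w .
\]
In the coordinates $(z,\varepsilon)$, which are related to $(x_\p,x_{\f,\n},x_\cont)$ by an invertible linear map, the state matrix is block upper triangular. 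Its diagonal blocks $A_\continuous+B_\continuous K_\continuous$ and $A_\continuous-L_\continuous C_\continuous$ are Hurwitz, so the whole matrix is Hurwitz (separation principle).

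Hence $\dot \zeta = M\zeta + N(\widehat y_e,d,w)$ with $M$ Hurwitz. The variation-of-constants bound $|\zeta(t)|\le \kappa e^{-\lambda t}|\zeta(0)| + \frac{\kappa\|N\|}{\lambda}\|(\widehat y_e,d,w)\|_{[0,t]}$ then yields ISS with respect to the stacked input. Converting the sum into a maximum (splitting into individual inputs with factors $2$ or $4$) gives ISS with respect to $\widehat y_e$, $d$, $w$ separately. Transporting back through the linear bijection $(x_\p,x_{\f,\n},x_\cont)\mapsto(z,\varepsilon)$ only changes the constants. There is no genuine obstacle here. The only points needing care are the sign bookkeeping in the $\varepsilon$ dynamics and the fact that $\widehat y_e$ enters only through the controller input, so it appears solely in the $\varepsilon$-equation. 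The paper's definition of ISS also uses a max of the class-$\mathcal{KL}$ and gain terms, which the linear sum bound reproduces after doubling.
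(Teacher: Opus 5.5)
Your proposal is correct and follows the same route as the paper's sketch. You obtain $K_\continuous,L_\continuous$ from Lemma~\ref{lem:stab-detect-with-filter}, apply the separation principle, and use the fact that an LTI system with Hurwitz state matrix is ISS with respect to additive inputs. You verify the separation step explicitly in the coordinates $(z,\varepsilon)$ with $\varepsilon=z-x_\cont$, correctly observing that $\widehat y_e$ enters only the $\varepsilon$-dynamics through $L_\continuous$. The paper delegates those computations to \cite[Theorem 16.10]{hespanha2009linear} and \cite[Chapter 4.9]{khalil2002nonlinear}.
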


\noindent\emph{Sketch of Proof.} The existence of such matrices $K_\continuous$ and $L_\continuous$ follow from Lemma \ref{lem:stab-detect-with-filter}, whose conditions hold. The desired result follows by application of the separation principle for LTI systems \cite[Theorem 16.10]{hespanha2009linear} and the fact that the global exponential stability of the origin for an LTI system implies it is ISS with respect to additive exogenous inputs \cite[Chapter 4.9]{khalil2002nonlinear}. \hfill $\Box$


\section{Illustrative example}\label{sect:example}

We apply the approach to stabilize a $4^{\text{th}}$-single-link manipulator to a given position using spiking communications between the sensors and the controller. The plant model is first presented   (Section \ref{subsect:example-system-objectives}). We then design the spike decoder  (\ref{eq:filter}) and  the controller (\ref{eq:controller}) (Section \ref{subsect:example-filter-controller}). We finally provide numerical simulations (Section \ref{subsect:example-simulations}).

\subsection{Model}\label{subsect:example-system-objectives}

We consider a single link manipulator with flexible joints and negligible damping   modeled as (\cite[Example~13.14]{khalil2002nonlinear})
\begin{equation}\label{eq:single-link-manipulator}
\begin{array}{rllllll}
\dot z_1 & = & z_2, & & 
\dot z_2 & = & -a\sin(z_1) - b(z_1-z_3)\\
\dot z_3 & = & z_4, & & 
\dot z_4 & = & c(z_1 - z_3) + u,
\end{array}
\end{equation}
where $z_1,z_3\in\R$ are angular positions with $z_1=z_3=0$ being the resting position, $z_2,z_4\in\R$ are the corresponding angular velocities, $u\in\R$ is a torque input, with parameters $a=0.1$, $b=1$, $c=1$. The goal is to stabilize the state $x^\star:=(
x_1^\star,0,x_3^\star,0)$ assuming $z=(z_1,\ldots,z_4)$ affected by additive measurement noise $w$ is available for control, hence $y = z+w$. For $x^\star$ to be an equilibrium point of (\ref{eq:single-link-manipulator}), it must hold that $x_3^\star = x_1^\star + \frac{a}{b}\sin(x_1^\star)$ and the input must be constant and verify  $u^\star:=\frac{ac}{b}\sin(x_1^\star)$. To write the plant model as in (\ref{eq:plant}), we take $x_\p=z-x^\star$ and $y=x_\p+x^\star+w$. As a result, system (\ref{eq:single-link-manipulator})  in the coordinates $x_\p$ becomes, with $x_\p=(x_{\p,1},\ldots,x_{\p,4})$, 

\begin{eqn}\label{eq:single-link-manipulator-x-coordinates}
\dot x_\p = A_\p x_\p + B_\p u + \psi(x_\p), & &
y = x_\p + x^\star+w,
\end{eqn}
with $A_\p=\left(\begin{smallmatrix} 0 & 1 & 0 & 0 \\ -b & 0 & b & 0 \\ 0 & 0 & 0 & 1\\ c & 0 & -c & 0\end{smallmatrix}\right)$,  $B_\p:=\left(0 \,\, 0 \,\, 0 \,\, 1\right)^\top$, $\psi(x_\p):=\big(0,-a\sin(x_{\p,1}+x_1^\star)- b (x_1^\star - x_3^\star),0,c(x_1^\star - x_3^\star)\big)$. 
System (\ref{eq:single-link-manipulator-x-coordinates}) is of the form of (\ref{eq:plant}) with $n_\p=4$, $n_u=1$, $n_d=0$,  $f_\p(x_\p,u,d)=A_\p x_\p+B_\p u +\psi(x_\p)$ and $h_\p(x_\p,w)=x_\p+x^\star+w$.

\subsection{Spike decoder}\label{subsect:example-filter-controller}

We design the spike decoder in (\ref{eq:filter}) such that $n_\f=4$, $A_\f=-\text{diag}(1,2,3,4)$, which is Hurwitz as required by Theorem \ref{thm:practical-iss-closed-loop}, $B_\f=A_\f$, and $C_\f = I_{n_\f}$. The corresponding $(x_\p,x_{\f,\n})$-system of $\Sigma_\continuous$ in (\ref{eq:sys-Sigma-continuous}) is given by 
\begin{eqn}\label{eq:example-model-and-filter}
\left(\begin{smallmatrix}\dot x_\p \\ \dot x_{\f,\n}  \end{smallmatrix}\right) & = & A_\continuous\left(\begin{smallmatrix} x_\p \\ x_{\f,\n}  \end{smallmatrix}\right) + B_\continuous u +  \Psi(x_\p,x_\f) + E_\continuous  w
\end{eqn}
with $A_\continuous:=\left(\begin{smallmatrix}A_\p & 0 \\ B_\f & A_\f \end{smallmatrix}\right)$, $B_\continuous:=\big(B_\p^\top \,\,\, 0^\top\big)$, $\Psi(x_\p,x_\f):=\big(\begin{smallmatrix}\psi(x_\p)^\top & (B_\f x^\star)^\top \end{smallmatrix}\big)^\top$ and $E_\continuous :=\big(\begin{smallmatrix} 0 & B_\f^\top \end{smallmatrix}\big)^\top$. We synthesize the controller so that the $x_\cont$-system of $\Sigma_\continuous$ in (\ref{eq:sys-Sigma-continuous}) reads, noting that $y_{\f,\n}=x_{\f,\n}$ and $y_{\f,e}=x_{\f,e}$ as $C_\f=I_{n_\f}$,
\begin{eqn}\label{eq:example-controller}
\dot x_\cont & = & A_\continuous x_\cont + B_\continuous u \!+\! \Psi(x_\cont) \\
& & + L_\continuous(x_{\f,\n}-x_{\f,e}-C_\continuous x_\cont)\\
u & = & \left(K  \,\,\,  0_{4\times 4}\right)x_\cont + u^\star,
\end{eqn}
where $n_\cont=8$,   $C_\continuous:=\left(0_{4\times 4} \,\,\, I_{4}\right)$, $K\in \R^{4\times 4}$ is such that $A_\p+B_\p K$ is Hurwitz and $L_\continuous\in\R^{8\times 4}$ is such that $A_\continuous-L_\continuous C_\continuous $ is Hurwitz. Such matrices $K$ and $L_\continuous$ exist as $(A_\p,B_\p)$ is stabilizable and $(A_\continuous,C_\continuous)$ is detectable, respectively. We select $K$ such that the spectrum of $A_\p +B_\p K$ is $\{-1,-2,-3,-4\}$ and we take  $L_\continuous$ such that the spectrum of $A_\continuous-L_\continuous C_\continuous $ is $\{-1,-2,\ldots,-8\}$. Controller (\ref{eq:example-controller}) is  an observer-based controller for system (\ref{eq:example-model-and-filter}) equipped with output $x_{\f,\n}$. 

System (\ref{eq:example-model-and-filter}), (\ref{eq:example-controller}) is ISS with respect to $x_{\f,e}$ and $w$. This property can be proved by using the  Lyapunov function $V(x_\p,x_\f)=(x_\p^\top\,x_\f^\top\,x_\cont^\top){\left(\begin{smallmatrix} P_1 + \vartheta P_2 & -\vartheta P_2\\ -\vartheta P_2 & \vartheta P_2  \end{smallmatrix}\right)}(x_\p,\,x_\f,x_\cont)$ where $P_1,P_2$ are real symmetric, positive definite matrices that  verify $(A_\continuous+B_\continuous\left(K \,\,\,  0_{4\times 4}\right))^\top P_1 + P_1(A_\continuous+B_\continuous\left(K \,\,\,  0_{4\times 4}\right))=-\text{diag}(\nu I_4,I_4)$ with $\nu>0$ sufficiently big, and 
$(A_\continuous-L_\continuous C_\continuous )^\top P_2 + P_2(A_\continuous-L_\continuous C_\continuous )=-5 I_8$, respectively, and $\vartheta>0$ is sufficiently big. As a result, Assumption \ref{ass:iss-Sigma-continuous} holds and Theorem \ref{thm:practical-iss-closed-loop} applies.

\subsection{Numerical simulations}\label{subsect:example-simulations}

We consider $x_1^\star=3\tfrac{\pi}{2}$. Thus $x_3^\star = x_1^\star + \frac{a}{b}\sin(x_1^\star) = 3\tfrac{\pi}{2} + 0.1\sin(3\tfrac{\pi}{2})$. 
The measurement noise is taken to be $(0,0,0.2\sin(2\pi f_1 t)+0.5\cos(2\pi f_2 t),0.1\sin(f_1 t+\pi/4))$ with $f_1=20$ Hz and $f_2=500$ Hz. 
Regarding the neurons, as $n_\p=4$ and the full state is available for control,  $n_y=n_\p$ and  we implement $2n_\p=8$ membrane potentials: two per component of $x_\p$ consistently with Section~\ref{subsect:spiking-encoder}. We have selected all the constants $\Delta_{i,\ell}  = \alpha_{i,\ell}$, $i\in\{1,\ldots,4\}$ and $\ell\in\{1,2\}$, equal to $\Delta$ and we have run simulations for different values of $\Delta$ in $\{0.01, 0.1, 1\}$. The initial conditions are $x_\p(0)=-x^\star$, i.e., the manipulator is initialized at the resting down position with zero velocities, $x_\f(0)=0_{4\times 1}$, $x_\cont(0)=0_{8\times 1}$ and $\xi(0)=0_{8\times 1}$.

Fig. \ref{fig:norm-xp} represents the norm of $x_\p$ over the interval $[0,10]$ with: \emph{(i)} continuous  communications  corresponding to the nominal case where $y$ is communicated at all time instants to the spike decoder, \emph{(ii)} spiking communications for the different values of $\Delta$. We observe that the smaller $\Delta$, the smaller the ultimate bound on the norm of $x_\p$, which is in line with Theorem \ref{thm:practical-iss-closed-loop} where $|\alpha|=\sqrt{2n_\p} \Delta$. Fig. \ref{fig:spikes} illustrates the spiking communication for each state component for the case where $\Delta=0.1$. Finally, to evaluate the trade-off between performance  and amount of communication, we considered  the  average number of spikes per time unit, i.e., the total number of spikes divided by the simulation time, as well as the  ultimate bound on $x_\p$ evaluated by taking $\|x_\p\|_{[9,10]}$ both averaged over $100$ different initial conditions, see Table \ref{tab:example-spiking-rate-performance}. Specifically, we considered $100$ initial conditions of the form $(x_{\p,1,0},0,x_{\p,3,0},0)$ with $x_{\p,i,0}$ taking 10 possible values equally spaced in the interval $[0,2\pi]$ for $i\in\{1,3\}$.  
Table \ref{tab:example-spiking-rate-performance} confirms the intuition that more spikes lead to better  performance, here in terms of $\|x_\p\|_{[9,10]}$. Interestingly, when the output $y$ is continuously communicated to the spike decoder, the ultimate bound of $x_\p$ is larger than with spiking communications with $\Delta=0.01$. This may be explained by the fact that with continuous communication the noisy measured output is transmitted continuously to the filter and controller and the amplitude of the noise directly influences the closed-loop performance. In contrast, with spiking communications, the noisy output is input to the neurons. This provides additional filtering; the noise may produce a time difference in the occurrence of the fixed-amplitude spikes, but it does not directly affect the signal amplitude. 

\begin{figure}
\centering
\includegraphics[scale=0.4]{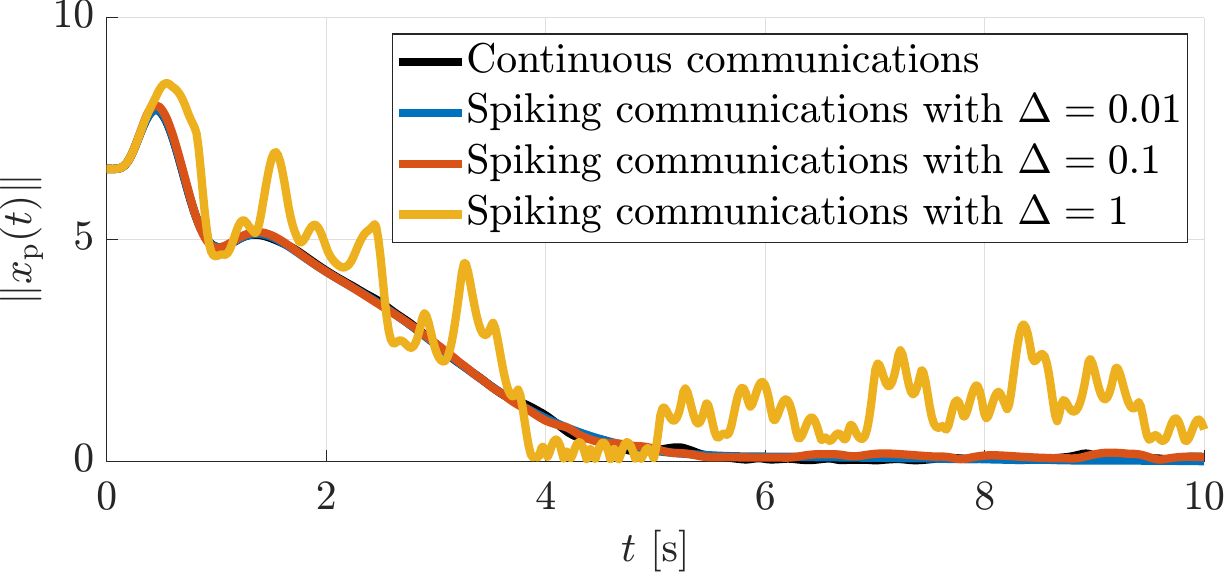}
\caption{Norm of $x_\p$ for the considered set-up with either continuous communications or spiking communications and different values of $\Delta$.}\label{fig:norm-xp}
\end{figure}

\begin{figure}
\centering
\includegraphics[scale=0.4]{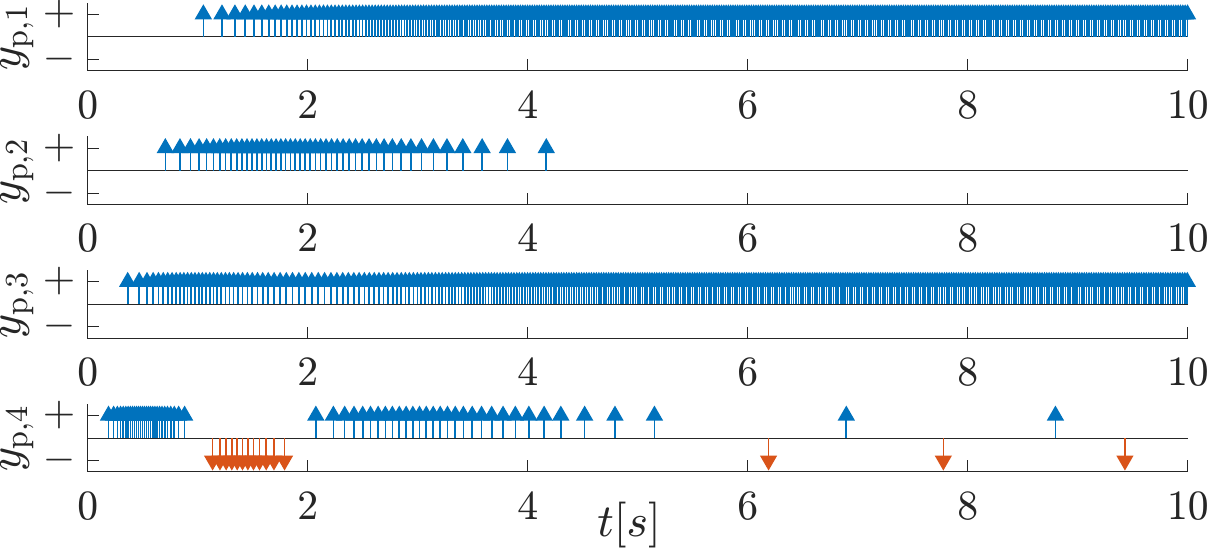}
\caption{Spiking communications for each output component for $\Delta=0.1$.}\label{fig:spikes}
\end{figure}



\begin{table}[ht]
\centering
\label{tab:spike_rates}
\begin{tabular}{lll}
$\Delta$ & 
Average spiking rate & Average $\|x_\p\|_{[9,10]}$ \\
\midrule
0.01 & 1130.1 & 0.021 \\
0.1  & 113.1  & 0.150 \\
1   &  11.66  & 1.472 \\
\text{Continuous} & \text{n/a} & 0.131 \\
\midrule
\end{tabular}
\caption{Average spiking rate vs. average $\|x_\p\|_{[9,10]}$ over $100$ initial conditions and different values of $\Delta$ as well as for  continuous-time communication.}
\label{tab:example-spiking-rate-performance}
\end{table}
\section{Conclusions}\label{Conclusion}

We proposed a framework for designing stabilizing output-feedback controllers using neuronal spiking communications. The plant output is encoded into a spiking signal and decoded via a synaptic filter for control. We established general design conditions on the controller, encoder, and decoder to guarantee closed-loop practical ISS. 
We believe this work opens the door to many developments, including addressing 
distributed scenarios, or application to formal analysis of control loops including $\Sigma-\Delta$ modulators, see Remark \ref{rem:SigmaDelta}, or to address  other control objectives, such as the stabilization of limit cycles, see, e.g., \cite{schmetterling2024neuromorphic, medvedeva2025formalizing, petri2025rhythmic}.

\bibliographystyle{plain}        
\bibliography{bibliography}           

\appendix

\section{Intermediate results} \label{Appendix:GeneralizationTACemulation} 

This appendix presents several results needed to prove Proposition \ref{thm:practical-iss-Sigma-s} in Appendix \ref{Appendix:ProofThmBoundOnSpikyFilterOutput}. These results   revisit some properties established in \cite{petri2025emulation}, the difference is that membrane potentials are allowed to be initialized at their threshold value, i.e., jumps at the initial time $t=0$ are allowed contrary to \cite{petri2025emulation}.  

\begin{propos}\label{Prop:InterSpikingTime} Let $(i,\ell)\in\{1,\ldots,n_y\}\times\{1,2\}$,  $y_i\in\mathcal{L}_{\R}$ and consider neuron $(i,\ell)$ in \eqref{eq:thresholdFlow}-\eqref{eq:SpikingMeasurement} with input $y_i$. Any solution with  initial condition $\phi_{\xi_{i,\ell}}(0)\in[0,\Delta_{i,\ell}]$ is such that the corresponding sequence of spiking times is in $\mathcal{T}_{\zf}^\infty$. 
\end{propos}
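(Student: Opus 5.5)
The argument uses only the flow/jump structure of the single neuron $(i,\ell)$: $\dot\xi_{i,\ell}=\max\{0,(-1)^{1+\ell}y_i\}$, a reset to $0$ when $\xi_{i,\ell}\geq\Delta_{i,\ell}$, and the spiking-time definition \eqref{eq:triggeringRuleSpikingTimesEachNeuron}. We must show two things: the sequence $\{t^{(i,\ell)}_j\}$ has no finite accumulation point, and the solution extends to all of $\Rlo$. The strategy is to lower-bound the integral of the input over each inter-spike interval after the first one by $\Delta_{i,\ell}$.

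\textbf{Step 1 (initial time).} If $\phi_{\xi_{i,\ell}}(0)<\Delta_{i,\ell}$, no spike occurs at $0$. If $\phi_{\xi_{i,\ell}}(0)=\Delta_{i,\ell}$, then $t_1^{(i,\ell)}=0$ and the state is reset, so $\xi_{i,\ell}(0^+)=0$. In both cases, after at most one jump at $t=0$, the membrane potential lies in $[0,\Delta_{i,\ell})$. The restriction $\phi_{\xi_{i,\ell}}(0)\le\Delta_{i,\ell}$ is what rules out several jumps at $t=0$: since the reset maps any value $\geq\Delta_{i,\ell}$ to $0$, at most one spike can occur at time $0$ regardless.

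\textbf{Step 2 (integral identity between spikes).} Let $t^{(i,\ell)}_j<t^{(i,\ell)}_{j+1}$ be two consecutive spiking times with $j\geq1$. Then $\xi_{i,\ell}(t_j^{(i,\ell)+})=0$ and $\xi_{i,\ell}$ is absolutely continuous and nondecreasing on $(t_j,t_{j+1}]$. By continuity of the flow, the infimum in \eqref{eq:triggeringRuleSpikingTimesEachNeuron} is attained with equality, which gives
\begin{equation*}
\Delta_{i,\ell}=\xi_{i,\ell}(t_{j+1})=\int_{t_j}^{t_{j+1}}\max\{0,(-1)^{1+\ell}y_i(s)\}\,ds\le\int_{t_j}^{t_{j+1}}|y_i(s)|\,ds .
\end{equation*}
The same argument shows $t_{j+1}>t_j$ strictly: after a reset to $0$, reaching $\Delta_{i,\ell}>0$ requires a positive amount of integral, so consecutive spikes cannot coincide.

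\textbf{Step 3 (no accumulation; completeness).} Suppose, to get a contradiction, that $J_{(i,\ell)}=\infty$ and $t_j\to T<\infty$. Since $y_i$ is locally essentially bounded, $M:=\|y_i\|_{[0,T]}<\infty$. Summing the inequality of Step 2 over $j=1,\dots,N$ gives
\begin{equation*}
N\Delta_{i,\ell}\le\int_{t_1}^{t_{N+1}}|y_i(s)|\,ds\le MT \quad\text{for all } N,
\end{equation*}
which is impossible. Equivalently, every inter-spike interval satisfies $t_{j+1}-t_j\ge\Delta_{i,\ell}/M$ on $[0,T]$, which is a semi-global dwell-time. Hence either the sequence is finite or $t_j\to\infty$, that is, it belongs to $\mathcal{T}_{\zf}^\infty$.

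Completeness then follows because the flow is a scalar integral of a locally integrable, nonnegative function. It therefore cannot blow up in finite time, and it stays in $[0,\Delta_{i,\ell}]$ thanks to the resets. The only ways the solution could fail to extend are finite escape or Zeno accumulation, and Steps 2 and 3 exclude both.

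\textbf{Main obstacle.} The delicate point is Step 1 together with the equality $\xi_{i,\ell}(t_{j+1})=\Delta_{i,\ell}$ in Step 2, that is, handling the jump conventions at the initial time correctly. This is exactly where the present setting differs from \cite{petri2025emulation}, which does not allow a spike at $t=0$. I would treat $j=0$ separately: the interval $[t_0,t_1]$ may carry a smaller integral, because the potential may start at a positive value, so it is excluded from the counting in Step 3. From $j=1$ onward, every interval starts from a reset at $0$, and the counting argument applies uniformly.
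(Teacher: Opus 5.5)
Your proof is correct and takes essentially the same route as the paper. The paper simply defers to the inter-spike argument of \cite[Proposition 1]{petri2025emulation}, which integrates the input between consecutive resets, and notes that the only change is accounting for the possible spike at $t=0$ (at most one per neuron, hence at most $2n_y$ for the network) before that bound applies, which is exactly your Step 1 followed by the counting argument of your Steps 2--3.
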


\noindent\emph{Sketch of Proof.} The proof follows the same steps as those in the proof of \cite[Proposition 1]{petri2025emulation}. The only difference is that we may have  at most $2n_y$ spikes before \cite[(43)]{petri2025emulation} applies, but the proof reasoning and the conclusion remain the same. \hfill $\Box$

The next proposition  provides  a universal approximation property of the $2n_y$-neuron network in Fig.~\ref{Fig:blockDiagram_ASconvMultipleOutput}. 

\begin{propos}\label{Prop:UniversalApproximationMultiple}
    Given any $y\in\mathcal{L}_{\R^{n_y}}$, consider the $2n_y$-neuron integrate-and-fire spiking neuronal network in  \eqref{eq:thresholdFlow}-\eqref{eq:SpikingMeasurement} with $\alpha_{i,\ell}=\Delta_{i,\ell}$ for any $(i,\ell)\in\{1,\ldots,n_y\}\times\{1,2\}$. 
    For any solution $\phi_\xi$ initialized in $[0,\alpha_{1,1}]\times\ldots\times [0,\alpha_{n_y,2}]$, with input $y$, then $e:=y-y_s$ satisfies for any $t \in \R_{\geq0}$, 
		$|\int_{0}^{t}  e(s) ds| \leq 2\sqrt{n_y}|\alpha|$.
\end{propos}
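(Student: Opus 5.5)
The plan is to work componentwise and then combine. Fix $i\in\{1,\ldots,n_y\}$ and write $y_i=\max\{0,y_i\}-\max\{0,-y_i\}$. By Proposition \ref{Prop:InterSpikingTime}, the spiking times of neurons $(i,1)$ and $(i,2)$ are in $\mathcal{T}_{\zf}^\infty$. Hence, on any interval $[0,t]$ only finitely many spikes occur, and $\int_0^t y_{s,i}(s)ds$ is a finite sum of $\pm\alpha_{i,\ell}$ terms. The convention needs a little care: a spike at time $0$ (when $\xi_{i,\ell}(0)=\Delta_{i,\ell}$) is counted in $\int_0^t$ for every $t\geq 0$. Spikes at time $t$ itself should be counted consistently with the right-continuous convention. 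I will treat both boundary cases explicitly, since they are the only novelty with respect to \cite{petri2025emulation}.

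The key identity is a conservation law for each neuron. Let $N_{i,\ell}(t)$ be the number of spikes of neuron $(i,\ell)$ in $[0,t]$. Integrating the flow \eqref{eq:thresholdFlow} and accounting for the resets gives
\[
\xi_{i,1}(t^+)=\xi_{i,1}(0)+\int_0^t \max\{0,y_i(s)\}ds-\Delta_{i,1}N_{i,1}(t).
\]
This holds because each reset occurs exactly when $\xi_{i,1}=\Delta_{i,1}$, so it removes exactly $\Delta_{i,1}$. Between spikes $\xi_{i,1}$ is continuous and nondecreasing, so by the infimum definition of spiking times it cannot overshoot $\Delta_{i,1}$. The one exception is the initial spike when $\xi_{i,1}(0)=\Delta_{i,1}$, which again removes exactly $\Delta_{i,1}$. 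An analogous identity holds for $\ell=2$ with $\max\{0,-y_i\}$.

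With $\alpha_{i,\ell}=\Delta_{i,\ell}$, the contribution of neuron $(i,1)$ to $\int_0^t y_{s,i}$ is $\alpha_{i,1}N_{i,1}(t)$. Therefore
\[
\int_0^t\big(\max\{0,y_i\}-\alpha_{i,1}\delta\text{-part}\big)=\xi_{i,1}(t^+)-\xi_{i,1}(0).
\]
Both potentials lie in $[0,\alpha_{i,1}]$ for all times: initialization is in that interval, the flow is capped by the threshold, and resets go to $0$. So this quantity has absolute value at most $\alpha_{i,1}$, and the same argument bounds the $(i,2)$ contribution by $\alpha_{i,2}$. Subtracting the two identities gives
\[
\Big|\int_0^t e_i(s)ds\Big|\leq |\xi_{i,1}(t^+)-\xi_{i,1}(0)|+|\xi_{i,2}(t^+)-\xi_{i,2}(0)|\leq \alpha_{i,1}+\alpha_{i,2}.
\]
If one prefers to evaluate at $t$ rather than $t^+$, a spike occurring exactly at $t$ can add at most one further $\alpha_{i,\ell}$ per neuron. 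This yields the cruder bound $2(\alpha_{i,1}+\alpha_{i,2})$, which is presumably where the factor $2$ in the statement originates.

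To assemble, note that $\alpha_{i,1}+\alpha_{i,2}\leq \sqrt2\,|(\alpha_{i,1},\alpha_{i,2})|\leq \sqrt2|\alpha|$, so each component satisfies $|\int_0^te_i|\leq 2\sqrt2|\alpha|$ even in the crude version. Then
\[
\Big|\int_0^t e\Big|\leq\sqrt{n_y}\max_i\Big|\int_0^te_i\Big|,
\]
and the sharper per-component bound $\alpha_{i,1}+\alpha_{i,2}\leq 2|\alpha|$ gives the stated $2\sqrt{n_y}|\alpha|$. Alternatively, $\sum_i(\alpha_{i,1}+\alpha_{i,2})^2\leq 2|\alpha|^2$ gives an even better $\sqrt2|\alpha|$ in the non-crude case.

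The main obstacle is the rigorous derivation of the conservation identity in the presence of the Zeno-free but possibly infinite spike sequence, the simultaneous resets, and the spike at $t=0$. I will handle it by induction over the finitely many spiking times in $[0,t]$, using Proposition \ref{Prop:InterSpikingTime}, and by verifying at each spike that $\xi_{i,\ell}(t_j)=\Delta_{i,\ell}$ exactly. That equality follows from continuity of the flow, since $\dot\xi_{i,\ell}\geq0$ is locally integrable, together with the infimum definition \eqref{eq:triggeringRuleSpikingTimesEachNeuron}.
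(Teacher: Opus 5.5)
Your proof is correct. It has the same skeleton as the paper's: bound each component $|\int_0^t e_i|$ by $\alpha_{i,1}+\alpha_{i,2}$, then assemble.

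The differences are in how each step is justified.

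\emph{Per-component bound.} The paper invokes \cite[Thm.~1]{petri2025emulation} for the case $\xi_{i,\ell}(0)<\Delta_{i,\ell}$. It then patches equation (54) of that reference for the case $\xi_{i,\ell}(0)=\Delta_{i,\ell}$. You instead derive the per-neuron accounting identity $\xi_{i,\ell}(t^+)-\xi_{i,\ell}(0)=\int_0^t\max\{0,(-1)^{1+\ell}y_i\}\,ds-\Delta_{i,\ell}N_{i,\ell}(t)$ directly. This makes the argument self-contained. It also shows plainly that an initial spike is just one more reset of exactly $\Delta_{i,\ell}$.

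\emph{Assembly.} The paper uses $|\int_0^t e|\le\sum_i|\int_0^t e_i|\le\sum_i(\alpha_{i,1}+\alpha_{i,2})\le\sqrt{2n_y}\,|\alpha|\le 2\sqrt{n_y}\,|\alpha|$. So the factor $2$ in the statement is only slack from $\sqrt{2}\le 2$. Your Euclidean assembly, $\sum_i(\alpha_{i,1}+\alpha_{i,2})^2\le 2|\alpha|^2$, gives the sharper, dimension-free bound $\sqrt{2}\,|\alpha|$.

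One side remark should be corrected: the ``crude'' version is neither needed nor the source of the $2$. Suppose you exclude a spike occurring exactly at $t$. Then the same identity holds with the pre-jump value $\xi_{i,\ell}(t)$ in place of $\xi_{i,\ell}(t^+)$. That pre-jump value equals $\Delta_{i,\ell}=\alpha_{i,\ell}$ at a spike time, and otherwise lies in $[0,\alpha_{i,\ell})$. Hence $|\int_0^t e_i|\le\alpha_{i,1}+\alpha_{i,2}$ holds under either convention.

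This matters because the crude bound $2(\alpha_{i,1}+\alpha_{i,2})$ would not give the claim with your assembly. The $\sqrt{n_y}\max_i$ route would give $2\sqrt{2n_y}\,|\alpha|$. The Euclidean route would give $2\sqrt{2}\,|\alpha|$, which fails for $n_y=1$.
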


\begin{proof} Let $y\in\mathcal{L}_{\R^{n_y}}$ and $\phi_\xi$ be a solution to  \eqref{eq:thresholdFlow}-\eqref{eq:SpikingMeasurement} as specified in Proposition \ref{Prop:UniversalApproximationMultiple}. Consider $i\in\{1,\ldots,n_y\}$. When $\phi_{\xi_{i,\ell}}(0)\in[0,\Delta_{i,\ell})$ for all $\ell\in\{1,2\}$, we have from \cite[Thm. 1]{petri2025emulation} (noting that $K_\ell=1$ here  with the notation of \cite{petri2025emulation}), for all $t\geq 0$,
\begin{equation}
\textstyle\left|\int_{0}^{t}  e_i(s) ds\right| \leq \alpha_{i,1} + \alpha_{i,2}, 
\label{eq:boundEmulationErrorTheoremUsedINProofProofMultiple}
\end{equation}
with $e_i(t)$ is the $i^{\text{th}}$ component of $e=y-y_s$. 
When there exists $\ell\in\{1,2\}$ such that $\phi_{\xi_{i,\ell}}(0)=\Delta_{i,\ell}$ for some $\ell\in\{1,2\}$, the right-hand side of \cite[(54)]{petri2025emulation} becomes $K_\ell(-\Delta_\ell+\xi_{\ell}(t))=K_\ell(-\xi_\ell(0)+\xi_{\ell}(t))$, which corresponds to   \cite[(54)]{petri2025emulation}. As a result,   the rest of the proof of \cite[Thm. 1]{petri2025emulation}  holds. Consequently, (\ref{eq:boundEmulationErrorTheoremUsedINProofProofMultiple}) also holds when $\phi_{\xi_{i,\ell}}(0)=\Delta_{i,\ell}$ for some $\ell\in\{1,2\}$.  
As $|\int_{0}^{t}e(s)ds|\leq \sum_{i=1}^{n_y}|\int_{0}^{t}  e_i(s) ds|$, we derive that 
$|\int_{0}^{t}  e(s) ds| \leq \sum_{i=1}^{n_y}(\alpha_{i,1} + \alpha_{i,2})\leq 2\sqrt{n_y}|\alpha|$.  
\end{proof}

The next theorem establishes that any LTI system with 
is practically spiking ISS; see \cite{petri2025emulation}. 

\begin{theorem}\label{Thm:SISS_generalized}
Consider the LTI system $\dot z= Fz + Gv$, with $z \in \R^{n_z}$, $v \in \mathcal{S}_{n_v}$, $n_z, n_v \in \Np$,  $F \in \R^{n_z \times n_z}$ Hurwitz and $G\in\R^{n_z\times n_v}$. There exist $\beta_z \in \exp$-$\KL$  and $c_z \in \R_{>0}$ such that for any  $v \in \mathcal{S}_{n_v}$, any solution $\phi_z$ with input $v$ is defined for all $t \geq0$ and satisfies
\begin{equation}\label{eq:sISS_generalized}
|\phi_z(t)| \leq \beta_z(|\phi_z(0)|,t) + \tilde{\gamma}_z \norm{v}_\star + c_z \norm{A_0}, 
\end{equation}
for all $t \geq0$, with  $\tilde{\gamma}_z:= \norm{G} + \int_{0}^{\infty} \norm{F e^{Fs}G} ds \in \R_{\geq 0}$ and $A_0 \in \R^{n_v}_{\geq0}$  the spike amplitude of $v$ corresponding to time $0$ (see Section \ref{sect:preliminaries}). 
\end{theorem}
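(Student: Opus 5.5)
The plan is to work directly with the variation-of-constants formula and integrate by parts, so that the bound depends on the spiking input only through its primitive $V(t):=\int_0^t v(s)\,ds$. First, note that a solution exists on $[0,\infty)$. Since $v\in\mathcal{S}_{n_v}$, we can write $v=v_1+v_2$ with $v_1$ locally essentially bounded and $v_2$ a Zeno-free spike train. On each inter-spike interval the system is therefore a linear ODE with a locally bounded input. At each spiking time the state jumps by $G A_k$. Because the spiking times do not accumulate, concatenating these pieces gives a solution for all $t\ge0$, and it is
\begin{equation*}
\phi_z(t)=e^{Ft}\phi_z(0)+\int_{[0,t]} e^{F(t-s)}G\,v(ds).
\end{equation*}
Here the measure is integrated over $[0,t]$, with the convention $\phi_z(t)$ equal to the right limit.

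Next I split off the spike at time $0$. Write $v=A_0\delta_0+\bar v$, where $\bar v$ has no atom at $0$. Then $\phi_z(t)=e^{Ft}(\phi_z(0)+GA_0)+\int_{(0,t]}e^{F(t-s)}G\,\bar v(ds)$. Let $\bar V(t):=\int_{(0,t]}\bar v(ds)=V(t)-A_0$. Integration by parts (a Riemann--Stieltjes argument, valid since $s\mapsto e^{F(t-s)}G$ is smooth and $\bar V$ has bounded variation on compacts) gives
\begin{equation*}
\int_{(0,t]}e^{F(t-s)}G\,\bar v(ds)=G\bar V(t)+\int_0^t Fe^{F(t-s)}G\,\bar V(s)\,ds .
\end{equation*}
The boundary term at $0$ vanishes because $\bar V(0)=0$.

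Then I estimate each term. Since $F$ is Hurwitz, $\|e^{Ft}\|\le\lambda c^t$ for some $\lambda\ge1$ and $c\in(0,1)$, so $\beta_z(s,t):=\lambda c^t s$ is in exp-$\KL$. Also $|\bar V(s)|\le|V(s)|+|A_0|\le\|v\|_\star+|A_0|$ for all $s$. This yields
\begin{equation*}
|\phi_z(t)|\le\beta_z(|\phi_z(0)|,t)+\lambda\|G\|\,|A_0|+\tilde\gamma_z\bigl(\|v\|_\star+|A_0|\bigr).
\end{equation*}
Setting $c_z:=\lambda\|G\|+\tilde\gamma_z$ gives \eqref{eq:sISS_generalized}, where $\norm{A_0}$ is read as the Euclidean norm of the vector $A_0$.

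I expect the main obstacle to be the rigorous integration-by-parts step with Dirac measures, because the atom at $0$ creates a left/right-limit ambiguity. Using the half-open interval $(0,t]$ together with the explicit handling of $A_0$ is precisely what avoids this ambiguity. As a fallback, I would prove the identity piecewise. On each inter-spike interval, a direct differentiation of $\phi_z-G\bar V$ shows that it satisfies $\dot\zeta=F\zeta+FG\bar V$, since $\phi_z-G\bar V$ is continuous across the spikes in $(0,\infty)$. The same bound then follows from the variation-of-constants formula for $\zeta$, as in \cite[Theorem 2]{petri2025emulation}.
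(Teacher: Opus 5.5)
Your proof is correct. It uses the same mechanism as the paper: variation of constants plus integration by parts, so that $v$ enters only through its primitive, which is bounded by $\|v\|_\star$. The execution differs, though.

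The paper follows \cite[Thm.~2]{petri2025emulation}. It splits $\int_0^t h(t-s)v(s)\,ds$, with $h(\theta)=e^{F\theta}G$, into three kinds of pieces: the initial atom, the inter-spike intervals (where $v=v_1$ and classical integration by parts applies), and the spike instants. It then telescopes the boundary terms and treats the cases with and without a spike at $0$ separately.

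You instead move the atom at $0$ into the initial state, $e^{Ft}(\phi_z(0)+GA_0)$. You then perform a single Lebesgue--Stieltjes integration by parts against the right-continuous primitive $\bar V$, which has bounded variation on compacts. This covers both cases at once, since $A_0=0$ when there is no spike at $0$.

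The paper's route needs only elementary calculus and recycles an existing proof. Yours needs the Stieltjes formula, but it gives an exact identity with no boundary bookkeeping. That bookkeeping is where the paper's display slips:
\begin{itemize}
\item the lower boundary term $-h(t-t_1)\int_0^{t_1^+}v=-h(t)A_0$ cancels the initial-atom contribution $h(t)A_0$;
\item so the displayed $(h(t)+h(0))A_0$ carries one copy of $A_0$ too many, as you can check with $v=A_0\delta_0$;
\item this is harmless only because the surplus is absorbed into $c_z\|A_0\|$.
\end{itemize}

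Your identity is the exact one. Running the same computation with the full primitive $V$, where $V(0^-)=0$, gives $\phi_z(t)=e^{Ft}\phi_z(0)+GV(t)+\int_0^tFe^{F(t-s)}GV(s)\,ds$. So the bound even holds without the $A_0$-term.

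Two minor points:
\begin{itemize}
\item If $G=0$, your $c_z=\lambda\|G\|+\tilde\gamma_z$ vanishes, so add any positive constant to get $c_z>0$.
\item In your formula, $\phi_z(0)$ must be the pre-jump initial state, not a right limit. Pre-jump values at later spike times are bounded the same way using $\bar V(t^-)$.
\end{itemize}
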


\begin{proof} The proof follows similar steps as the proof of \cite[Thm.~2]{petri2025emulation}. We therefore only highlight the differences.  
 Let $v \in \mathcal{S}_{n_v}$ and $\phi_z$ be a solution to the LTI system in Theorem \ref{Thm:SISS_generalized} and let $t\geq 0$. When $t_0<t_1$, i.e., no spike occurs at the initial time,  \cite[Thm.~2]{petri2025emulation} applies and (\ref{eq:sISS_generalized}) holds. Consider now the case where $t_1=t_0=0$, i.e., a spike occurs at time $0$. By following similar steps as in \cite[proof of Thm.~2]{petri2025emulation}, 
\begin{equation}
	\begin{aligned}
		&\textstyle\int_{0}^{t} h(t-s) v(s) ds = \int_{0}^{t_{1}^+} h(t-s) v(s) ds \\ 
        &\textstyle+ \sum\nolimits_{i = 2}^{J} \Big(\int_{t_{i-1}^+}^{t_{i}^-} h(t-s) v(s) ds 
		+ \int_{t_{i}^-}^{t_{i}^+} h(t-s) v(s) ds \Big) \\
        &\textstyle+ \int_{t_{J}^+}^{t} h(t-s) v(s) ds, 
	\end{aligned}
	\label{eq:StrongIISSLTI_integralDifferentComponents}
\end{equation}
instead of \cite[(63)]{petri2025emulation}, where $h(\theta):= e^{F\theta}G \in \R^{n_z\times n_v}$, $\theta \in \R$, and $\{t_i\}_{i \in \N_{\leq J}} \in \mathcal{T}_{\zf}^\infty$ the spiking times associated to $v$. Since $t_1= 0$,  $\int_{0}^{t_{1}^+} h(t-s) v(s) ds = h(t)A_0$. 
Thus \cite[(69)]{petri2025emulation} becomes 
\begin{equation}
\begin{aligned}
		&\textstyle\int_{0}^{t} h(t-s)v(s) ds 
		= h(t)A_0 \\ 
        &\textstyle+
        \sum\nolimits_{i= 2}^{J} \Big(h(t-t_{i})\int_{0}^{t_{i}^-} v(\theta)d\theta 
		- h(t-t_{i-1})\times\\ 
		&\textstyle\int_{0}^{t_{i-1}^+} v(\theta)d\theta   
		-\int_{t_{i-1}^+}^{t_{i}^-} \frac{d}{ds}\left(h(t-s)\right) \int_{0}^s v(\theta)d\theta ds 
		\\ &\textstyle+ h(t-t_{i}) A_{i} \Big)
		+ h(0)\int_{0}^{t_1^+} v(\theta)d\theta + h(0)\int_{t_1^+}^{t} v(\theta)d\theta \\
        &\textstyle- h(t-t_{j(t)})
		\int_{0}^{t_{J}^+} v(\theta)d\theta 
		- \int_{t_{J}^+}^{t} \frac{d}{ds}\left(h(t-s)\right) \\
        &\textstyle\times\int_0^s v(\theta)d\theta ds,  
	\end{aligned}
	\label{eq:StrongIISSLTI_integralDifferentComponents2temp}
\end{equation}
where we recall that $\{A_i\}_{i \in \N_{\leq J}}$ denotes the spike amplitudes associated to $v$, as defined in Section \ref{sect:preliminaries}, and the term $h(0)\int_{0}^{t} v(\theta)d\theta$ in \cite[(69)]{petri2025emulation} is replaced by $h(0)\int_{0}^{t_1^+} v(\theta)d\theta + h(0)\int_{t_1^+}^{t} v(\theta)d\theta = h(0)A_0 +  + h(0)\int_{t_1^+}^{t} v(\theta)d\theta$. Thus, following similar steps as in \cite[(70)-(72)]{petri2025emulation}, \cite[(72)]{petri2025emulation} becomes
\begin{equation}
	\begin{aligned}
		&\textstyle\int_{0}^{t} h(t-s) v(s) ds  
		= (h(t) + h(0))A_0 + h(0)\int_{t_1^+}^{t} v(\theta)d\theta \\
        &\textstyle- \int_{0}^{t} \frac{d}{ds}\left(h(t-s)\right) \int_0^s v(\theta)d\theta ds \\
		 &\textstyle
         +\sum_{i = 2}^{J} h(t-t_{i})\Big(
		- \int_{t_{i}^-}^{t_{i}^+} v_{1}(\theta)d\theta\Big), 
	\end{aligned}
\end{equation} 
which implies 
\begin{equation}
\begin{aligned}
	\textstyle\int_{0}^{t} h(t-s) v(s) ds 
	=  (h(t) + h(0))A_0 +  h(0)\int_{t_1^+}^{t} v(\theta)d\theta \\
	\textstyle- \int_{0}^{t} \frac{d}{ds}\left(h(t-s)\right) \int_0^s v(\theta)d\theta ds, 
\end{aligned}
\label{eq:StrongIISSLTI_integralDifferentComponents4}
\end{equation}
instead of \cite[(73)]{petri2025emulation}. Recalling that $h(\theta) = e^{F\theta}G$ for any $\theta \in \R$, from \eqref{eq:StrongIISSLTI_integralDifferentComponents4} instead of \cite[(74)]{petri2025emulation} we have
\begin{equation}
\begin{aligned}
	\textstyle\phi_{z}(t)&\textstyle=e^{F t}\phi_{z}(0) + \int_{0}^{t} e^{F (t-s)} G v(s) ds\\
    &\textstyle  =  e^{F t}\phi_{z}(0) + (e^{Ft}G+G)A_0 + G\int_{t_1^+}^{t} v(\theta)d\theta \\
    & \quad \textstyle    - \int_{0}^{t} (-Fe^{F(t-s)}G)
	 \int_0^s v(\theta)d\theta ds. 
\end{aligned}
\label{eq:StrongIISSLTI_stateEquation2}
\end{equation}
Since $F \in \R^{n_z\times n_z}$ is Hurwitz, there exists $c_0\in \R_{>0}$ such that $\norm{e^{F\tau}G+G} \leq c_0$ for any $\tau\geq 0$. Thus, there exist $c_z \in  \R_{>0}$ such that, following similar steps as in \cite[Proof of Thm. 2]{petri2025emulation}, 
	$|\phi_{z}(t)| 
	 \leq \beta_z(|\phi_{z}(0)|, t) + \gamma_z(\norm{v}_\star) + c_z \norm{A_0}$,  
with $\beta_z \in \exp$-$\KL$ and $\tilde\gamma_z \in\Rlp$  as defined in Theorem \ref{Thm:SISS_generalized}, which are both independent of initial conditions, input $v$ and time $t$. 
\end{proof}

\section{Proof of Proposition \ref{thm:practical-iss-Sigma-s}}\label{Appendix:ProofThmBoundOnSpikyFilterOutput}

Let $y\in\mathcal{L}_{\R^{n_y}}$ and $\phi_s = (\phi_{\f,e}, \phi_\xi)$ be a solution to \eqref{eq:sys-Sigma-s} with input $y$ and initial conditions as in Proposition \ref{thm:practical-iss-Sigma-s} and let $t\geq 0$. 
By Proposition \ref{Prop:InterSpikingTime}, the sequence of spiking times 
$\{t_{j}^{(i,\ell)}\}_{j\in\N_{\leq J_{i,\ell}}} \in \mathcal{T}_{\zf}^\infty$, with $J_{i,\ell}\in\No\cup\{\infty\}$. As the number of neurons is finite,  the  sequence of spiking times of $\phi_s$ also belongs to $\mathcal{T}_{\zf}^\infty$. Consequently, since the continuous-time dynamics of $\phi_s$ is defined by a globally Lipschitz map,  $\phi_s$ is complete thereby proving Proposition \ref{thm:practical-iss-Sigma-s}\emph{(i)}  as $\phi_s$ has been arbitrarily selected. By Proposition \ref{Prop:UniversalApproximationMultiple}, 
		$\textstyle\norm{ e}_\star = \sup_{t \in \R_{\geq0}} |\int_{0}^{t} e(s) ds| \leq 2\sqrt{n_y}|\alpha|$.  
		 \label{eq:boundEmulationErrorTheoremMultipleProofTh2}
Thus $e \in \mathcal{S}^{n_y}$.
Moreover, since $A_\f$ in \eqref{eq:filter-x-f-e} is Hurwitz and the norm of the spike amplitudes associated to $e$ are bounded by $|\alpha|$, Theorem \ref{Thm:SISS_generalized} ensures that  
\begin{equation}\label{eq:sISSfilter}
    |\phi_{\f,e}(t)| \leq \beta_z(|\phi_{\f,e}(0)|,t) + \tilde\gamma_z \norm{e}_\star + c_z|\alpha|
\end{equation}
with $\beta_z\in\exp$-$\KL$, $\tilde\gamma_z := \norm{B_\f} + \int_{0}^{\infty}\norm{A_\f e^{A_\f s} B_\f} ds $ and $c_z>0$. 
We derive
    $|\phi_{\f,e}(t)| 
    \leq \beta_z(|\phi_{\f,e}(0)|,t) + (2\tilde\gamma_z\sqrt{n_y}+c_z)|\alpha|$. 
Therefore
$
  |\phi_{\f,e}(t)| \leq \max\{2\beta_z(|(\phi_{\f,e}(0)|,t), 2(c_z+2\tilde\gamma_z\sqrt{n_y})|\alpha|\}=\max\{\beta_s(\phi_{\f,e}(0)|,t),\tilde\gamma_s|\alpha|\}$ with $\beta_s:=2\beta_z\in\exp$-$\KL$ and $\tilde \gamma_s:=2(c_z+2\tilde\gamma_z\sqrt{n_y})$.
The result is obtained by noting  $\phi_{\xi}(t)\in [0, \alpha_{1,1}]\times \dots \times [0, \alpha_{n_y,2}]$, so that  $|\phi_s(t)|_{\mathcal{A}^{s}_{\alpha}}= |\phi_{\f,e}(t)|$ and $|\phi_s(0)|_{\mathcal{A}^{s}_{\alpha}}= |\phi_{\f,e}(0)|$. 



\section{Proof of Proposition \ref{prop:stability-augmented-closed-loop}}\label{appendix:proof-prop-stability-augmented-closed-loop}

Let $d\in\mathcal{L}_{\R^{n_d}}$, $w\in\mathcal{L}_{\R^{n_w}}$ and  $\phi^\aug$ be a  solution to  (\ref{eq:closed-loop-augmented}) with inputs $d$ and $w$ as specified in Proposition \ref{prop:stability-augmented-closed-loop}. Let $[0,t^\star)$ with $t^\star\in\Rlp\cup\{\infty\}$ be the interval over which the solution is defined, and take $t\in[0,t^\star)$. Denote $\phi_{\continuous}^\aug:=(\phi_{\p}^\aug,\phi_{\f,\n}^\aug,\phi_{\cont}^\aug)$. By Assumption \ref{ass:iss-Sigma-continuous}, there exist $\beta_\continuous\in\KL$ and $\gamma_\continuous\in\Kinf$ independent of $d$, $w$, $t$ and $\phi^\aug$ such that
\begin{eqn}\label{eq:proof-prop-stability-augmented-iss-Sigma-continuous}
|\phi_{\continuous}^\aug(t)| & \leq &  \max\big\{\beta_\continuous(|\phi_{\continuous}^\aug(0)|,t), \gamma_\continuous(\|\widehat{y}_e\|_{[0,t]}),\\
& &\quad\quad\,\,\,\gamma_\continuous(\|d\|_{[0,t]}),\gamma_\continuous(\|w\|_{[0,t]})\big\}
\end{eqn}
where $\widehat{y}_e=C_\f \phi_{\f,e}$. We derive from Proposition~\ref{thm:practical-iss-Sigma-s}, the facts that $\phi_{\f,e}(0)=0$, and $\phi_{\xi_{i,\ell}}(0)\in[0,\Delta_{i,\ell}]$ with $\alpha_{i,\ell}=\Delta_{i,\ell}$ and the definition of $\alpha$,
$|\widehat{y}_{e}(t)| \leq   \|C_\f\|\max\big\{\beta_s(|\alpha|,t),\tilde{\gamma}_s|\alpha|\big\}$. 
Hence, as $\beta_s\in\KL$, 
$|\widehat{y}_{e}(t)| \leq  \|C_\f\|\max\big\{\beta_s(|\alpha|,0),\tilde{\gamma}_s|\alpha|\big\}$. 
Thus 
\begin{eqn}
\|\widehat{y}_{e}\|_{[0,t]} & \leq &  \|C_\f\|\max\big\{\beta_s(|\alpha|,0),\tilde{\gamma}_s(|\alpha|)\big\}=:\overline\gamma_s(|\alpha|),
\end{eqn}
We use this last inequality in (\ref{eq:proof-prop-stability-augmented-iss-Sigma-continuous}) to derive 
\begin{eqn}\label{eq:proof-prop-stability-augmented-iss}
|\phi_{\continuous}^\aug(t)| & \leq &  \max\big\{\beta_\continuous(|\phi_{\continuous}^\aug(0)|,t),\gamma_\continuous\circ\overline\gamma_s(|\alpha|),\\
& & \quad\quad \quad\gamma_\continuous(\|d\|_{[0,t]}),\gamma_\continuous(\|w\|_{[0,t]})\big\}.
\end{eqn}
We have $|\phi^\aug(t)|_{\mathcal{A}_\alpha^\aug}\leq  |\phi_{\continuous}^\aug(t)| + |(\phi_{\f,e}^\aug(t),\phi_\xi^{\aug}(t))|_{\mathcal{A}_\alpha^s}$ with $\mathcal{A}_\alpha^s$ in (\ref{eq:As_set}) and $\mathcal{A}_\alpha^\aug$ in \eqref{eq:X-aug-A-aug}. Thus, by (\ref{eq:proof-prop-stability-augmented-iss}) and Proposition \ref{thm:practical-iss-Sigma-s}\emph{(ii)}, for all $t\in[0,t^\star)$,
\begin{eqn}\label{eq:proof-prop-stability-augmented-iss-overall}
|\phi^\aug(t)|_{\mathcal{A}_{\alpha}^{\aug}} & \leq & \max\big\{\beta_\aug(|\phi^\aug(0)|_{\mathcal{A}_{\alpha}^{\aug}},t),\\
& & 
\gamma_\aug(\|d\|_{[0,t]}),\gamma_\aug(\|w\|_{[0,t]}),\gamma_\aug(|\alpha|)\big\},
\end{eqn}
where $\beta_\aug=2\max\{\beta_\continuous,\beta_s\}\in\KL$ and $\gamma_\aug=2\max\{\gamma_\continuous\circ\overline{\gamma}_s,\gamma_\continuous,\tilde\gamma_s \id\}\in\Kinf$. 
We need to show that $t^\star=\infty$ to prove the completeness of solutions.
The two possible obstacles for $t^\star=\infty$ are Zeno behavior or finite escape times. Equation (\ref{eq:proof-prop-stability-augmented-iss-overall}) implies that $\phi^\aug$ cannot escape to infinity over $[0,t^\star)$ as the set $\mathcal{A}_{\alpha}^{\aug}$ in (\ref{eq:X-aug-A-aug}) is compact, $(d,w)\in\mathcal{L}_{\R^{n_d}}\times\mathcal{L}_{\R^{n_w}}$ and $\gamma_{\text{aug}} \in \Kinf$. This implies that  $\|\phi_{\continuous}^\aug\|_{[0,t^\star]}$ is bounded, thus so is $\|y\|_{[0,t^\star]}$ as $y=h_\p(\phi_\p,w)$ and $h_\p$ is continuous. To also exclude Zeno, consider the signal $\overline y:\Rlo\to\R^{n_y}$ defined as $\overline y=y$ on $[0,t^\star)$ and $\overline y=0$ on $[t^\star,\infty)$. It holds that $\overline y\in\mathcal{L}_{\R^{n_y}}$. Let $\overline\phi_\xi:\Rlo\to\R^{2n_y}$ be defined as $\overline\phi_\xi=\phi_\xi$ on $[0,t^\star)$ and as the solution to the $\xi$-system in (\ref{eq:closed-loop}) initialized at $\phi_\xi(t^\star)$, with input $\overline y=0$ on $[t^\star,\infty)$. By  Proposition \ref{Prop:InterSpikingTime}, 
 the sequence of spiking times associated to $\overline\phi_\xi$ belongs to $\mathcal{T}_{\zf}^\infty$. 
 As the spiking times of $\overline\phi_\xi$ match those of  $\phi$ up until $t=t^\star$, we reach a contradiction. Hence, no Zeno for $\phi$ and $t^\star=\infty$.

\end{document}